\newcommand{\bQ}{\mathbf{Q}}
\definecolor{DarkRed}{rgb}{0.5,0.1,0.1}
\definecolor{DarkBlue}{rgb}{0.1,0.1,0.5}
\colorlet{YellowOrange}{RawSienna}
\newenvironment{proofof}[1]{\par{\noindent \bf Proof of #1:}}{\qed\par}
\newtheorem{lemma}{Lemma}
\newtheorem{claim}{Claim}
\newtheorem{definition}{Definition}
\newtheorem{observation}{Observation}
\newtheorem{theorem}{Theorem}
\newcommand{\esh}[1]{\textcolor{red}{[Eshwar: #1]}}
\newcommand{\hdn}{homogeneous noise model}
\newcommand{\fs}{\sqrt{\frac{\log n}{n}}}
\newcommand{\E}{\mathbb{E}}
\newcommand{\Mh}{\mathbf{Q}_h}
\newcommand{\ignore}[1]{}
\newcommand{\ohp}{overwhelmingly high probability}
\newcommand{\hide}[1]{}
\newcommand{\by}{\mathbf{y}}
\newcommand{\bX}{\mathbf{X}}
\newcommand{\NL}{\mathsf{NL}}
\newcommand{\lc}{\mathsf{lc}}
\newcommand{\rc}{\mathsf{rc}}
\newcommand{\bA}{\mathbf{A}}
\newcommand{\coloneqq}{:=}
\newcommand{\p}{\widetilde{p}}
\newcommand{\h}{\widehat{h}}
\newcommand{\algfirst}{\textsf{Completion}}
\newcommand{\algsecond}{\textsf{Build-Subtree}}
\newcommand{\algoverall}{\textsf{Topology-Reconstruction}}
\newcommand{\algthird}{\textsf{Partition}}
\newcommand{\algweights}{\textsf{Tree-reconstruct-weight}}
\title{Reconstructing Ultrametric Trees from Noisy Experiments
}
\author{Eshwar Ram Arunachaleswaran \and Anindya De \and Sampath Kannan}
\begin{document}

\maketitle

\begin{abstract}
The problem of reconstructing evolutionary trees or phylogenies is of great interest in computational biology. A popular model for this problem assumes that we are given the set of leaves (current species) of an unknown binary tree and the results of `experiments' on triples of leaves $(a,b,c)$, which return the pair with the deepest least common ancestor. If the tree is assumed to be an \textit{ultrametric} (i.e., with all root-leaf paths of the same length), the experiment can be equivalently seen to return the closest pair of leaves. In this model, efficient algorithms are known for reconstructing the tree. 

In reality,  since the data on which these `experiments' are run is itself generated by the stochastic process of evolution, these experiments are noisy. In all reasonable models of evolution, if the branches leading to the three leaves in a triple separate from each other at common ancestors that are very close to each other  in the tree,  the result of the experiment should be close to uniformly random. Motivated by this, in the current paper, we consider a model where the noise on any triple is just dependent on the three pairwise distances (referred to as \emph{distance-based noise}). 
 Our results are the following: 
\begin{enumerate}
\item Suppose the length of every edge in the unknown tree is at least  $\tilde{O} (\frac{1}{\sqrt n})$ fraction of the length of a root-leaf path. Then, we give an efficient algorithm to reconstruct the topology of the unknown tree for a broad family of {distance-based noise} models. Further, we show that if the edges are asymptotically shorter, then topology reconstruction is information-theoretically impossible. 
\item Further, for a specific distance-based noise model -- which we refer to as the {\em{\hdn}} -- we show that the edge weights can also be approximately  reconstructed under the same quantitative lower bound on the edge lengths.
Note that in the noiseless case, such reconstruction of edge weights is impossible.  
\end{enumerate}



\end{abstract}

\section{Introduction}

The problem of clustering  is an important computational problem and a primitive that is used in multiple domains with the goal of grouping elements based on some underlying notion of distance in order to understand the relationship among them. 
In the standard clustering problem, 
the set of given elements is to be partitioned into
a few sets with the goal of putting similar elements 
in the same partition (captured by minimizing an objective function). 
 A natural and well studied variant (and generalization) of this problem is {{\em hierarchical clustering}}, where the goal is to find a hierarchical partition of the elements, in which  groups of elements form a nested structure. Equivalently, a hierarchical clustering can be thought of as a rooted tree with the elements at the leaves. Thus, the task of hierarchical clustering can be seen as the task of recovering the underlying unknown rooted tree.

  Naturally, canonical applications of the problem of hierarchical clustering are settings where there is an underlying tree structure -- examples include 
  learning evolutionary trees of a set of species and evolutionary trees of languages. In particular, the problem of reconstructing evolutionary trees or phylogenies from data about extant species is an important one in computational biology~\cite{saitou1987neighbor, semple_steel_2003, farach1999efficient, Mossel_07_distorted_phylogeny, Kearney_97_soda, ailon2005fitting, Chor_quartets, daskalakis2006optimal, erdHos1999few, kannan1996determining, zke2018}  and 
  is the principal motivation for this paper.

In order to define a hierarchical clustering problem, we need the precise  notion of similarity, as well as the mode by which the algorithm gets access to this information. The formulation that is closest to the one in this paper is from \cite{kannan1996determining}. Here, the evolutionary tree is assumed to be an \textit{ultrametric} binary tree, which is a weighted, rooted tree in which all root-leaf paths have the same length. This assumption is often justified in the computational biology literature based on the so-called molecular clock hypothesis, whereby the lengths of edges correspond to the evolutionary time that they represent. Then since all extant species are alive today and they are the leaves of this tree, they have all evolved for the same length of time. Consequently,  all root to leaf paths have the same length. 
The model in \cite{kannan1996determining} assumes that we are able to perform experiments on any 3 extant species (leaves) $a,b,$ and $c$ and the result of the experiment (alternately, referred to as query) is the pair that is closest together, i.e., has the most recent least common ancestor. In this model, the authors~\cite{kannan1996determining} give
efficient algorithms for reconstruction of the tree topology -- in fact, they give several procedures, each obtaining a different tradeoff between the running time and the number of experiments (i.e., queries). 


A principal shortcoming of \cite{kannan1996determining} is the assumption that the experiments do not have any noise -- on any triple $(a,b,c)$, the queries always returns the closest pair. However,  experiments are often noisy and thus it is natural to ask if we can design a tree-reconstruction algorithm which is tolerant to noise in the experiments. Several works~\cite{brown2013fast, Wy_phylogeny_2008, zke2018}
have explored this theme. In particular, in~\cite{zke2018}, the authors gave an algorithm to reconstruct the (topology of the) ultrametric tree with $O(n\log n)$ queries even in presence of noise.
 However, in all the previous works~~\cite{brown2013fast, Wy_phylogeny_2008, zke2018}, the noise  is identical across queries -- in other words, each experiment is assumed to give  an incorrect answer with some fixed probability $p<1/2$. 

 In this paper, we study the tree reconstruction problem under a broad family of noise models where the noise on any triple $(a,b,c)$ is just dependent on the three pairwise distances between
the leaves $a$, $b$ and $c$ -- note that by the ultrametric  assumption, the two largest distances are the same. We refer to such noise models as {\em distance-based noise models}. 
Our motivation is that if the data at the leaves is generated by a process like evolution then the data at each leaf is the result of a set of stochastic mutations encountered on the path from the root to that leaf. If we have three leaves $a,b,$ and $c$ where the least common ancestor of $a$ and $b$ is at distance 1 from the leaves, while the least common ancestor of $c$ with either $a$ or $b$ is at distance $1+ \epsilon$, then 
 the expected number of mutational differences between $a$ or $b$
on the one hand and $c$ on the other hand, is quite close to the expected number of mutational differences between $a$ and $b$. Any experiment trying to assess which pair is closest based on mutational differences would therefore have a good probability of identifying the wrong pair in this situation.
Finally, similar to~\cite{brown2013fast} (as well as many other results in the phylogenetic reconstruction literature), we assume that the noise in each experiment  is {\em permanent} -- i.e., repeating the same experiment always yields the same outcome. Since repetitions of an experiment will use the same noisy data, this assumption is justified. This naturally rules out repeating the same experiment as a way to {\em denoise the answers}, thus making the algorithm design more challenging. 

{\bf Our results:}  We now give an overview of our results. First of all, by rescaling the edge lengths (alternatively referred to as weights), we can assume that all root to leaf paths are of unit length. 
 For our first algorithmic result, we define a so-called {\em general noise model}. 
 This is any distance-based noise model that satisfies 
 two mild assumptions we refer to as the {\em monotonicity} and {\em anti-Lipschitzness} conditions. Roughly speaking, these conditions say that (i) the probability of getting the correct pair is always greater than  $1/3$ and (ii) Fixing the larger distance among pairs in the triple, the probability of the experiment returning the closest pair is sufficiently sensitive to the distance between the closest pair. The exact conditions are described  in Section~\ref{sec:model}.

 Our first algorithmic result shows that  in the general noise model, 
 as long as each edge length is at least $\tilde{\Omega}(n^{-1/2})$ (where $n$ is the number of leaves), there is an algorithm that takes the results of the $\binom{n}{3}$ experiments and reconstructs the topology of tree with high probability (Theorem~\ref{theorem:topology}).  We also show a matching lower bound -- namely, if the minimum edge length is $\tilde{o}(n^{-1/2})$, then it is information-theoretically impossible to recover the topology of the tree (Theorem~\ref{theorem:necessary}). Thus, together Theorem~\ref{theorem:topology} and Theorem~\ref{theorem:necessary} give the minimal requirements under which topology of the tree can be recovered in the general noise model.

For our second algorithmic result, we explore a special instance of the general noise model which we refer to as the {\em \hdn}. Let us denote this model by $\Mh(\cdot)$ -- then, on the triple $(a,b,c)$, the probability of returning the pair $(a,b)$ is given by 
\[
\Pr[\Mh(a,b,c) = (a,b)] = \frac{d(a,c) + d(b,c)}{2(d(a,b) + d(b,c) + d(a,c))}, 
\]
 where $d(\cdot, \cdot)$ denotes the distance function on the tree. 
 Under natural `boundary conditions' that the  probability of any pair being returned should approach 1/3 as the 3 pairwise distances approach each other, and that the probability of the closest pair being returned should approach some higher constant value when the other two distances tend to infinity,  $\Mh(\cdot)$ is essentially the only probability function that is a ratio of linear functions. One particularly appealing feature is that the model is invariant upon rescaling of the distance function $d(\cdot, \cdot)$. For the \hdn, we can achieve a significantly stronger result than Theorem~\ref{theorem:topology}. In particular, in Theorem~\ref{theorem:weights}, we show that as long as all the edge weights are at least $\tilde{\Omega}(n^{-1/2})$, there is an efficient algorithm to approximately reconstruct the edge weights. In other words, for the \hdn, we can not just recover the topology of the tree but the actual distances between leaves. Such a reconstruction of the edge weights is information-theoretically impossible in models such as
 ~\cite{kannan1996determining, brown2013fast, Wy_phylogeny_2008, zke2018} where either the queries have no noise or the noise only depends on the order between the distances and not their precise values. We remark that our techniques for  reconstructing distances are quite general, and should be applicable to other instantiations of the general noise model. Determining the general conditions under which the entire ultrametric can be reconstructed is left as a topic for future work.

\ignore{
The rest of the paper is organized as follows. In Section~\ref{sec:related} we review related work. In Section~\ref{sec:model} we precisely describe our models and problem statements. In Section~\ref{sec:topology} we design and analyze an algorithm for reconstructing the topology of the tree efficiently, given the answers to all $\genfrac(){0pt}{2}{n}{3}$ queries, under a fairly general noise model and with minimal assumptions about the edge weights. In Section~\ref{sec:weight} we give an algorithm for reconstructing edge weights under a specific noise model and in Section~\ref{app:weights} we provide its analysis.  In Section~\ref{sec:necessary}, we complement our algorithm for topology reconstruction, by showing that topology reconstruction cannot be guaranteed by any algorithm unless we have a condition on the minimum edge weight. Finally, in Section~\ref{sec:conclusions}, we highlight some directions for future work, including an insight into reducing the query complexity of topology reconstruction algorithms.
}

\section{Related Work} \label{sec:related}

The problem of reconstructing evolutionary trees has received a lot of attention over the years. There are many formulations of this problem based on the type of data available and the objective function being optimized. Most formulations assume that the observed data is on extant species or leaves of an unknown tree. Objective functions seek to capture properties of the evolutionary process, with the hope that the optimal tree under an objective function is in fact the true evolutionary tree. The most popular formulations are distance-based methods~\cite{farachkannanwarnow,erdHos1999few, erdos1999few, saitou1987neighbor, mossel2017distance, daskalakis2013alignment} (where we are given a matrix of distances between leaves and we want to find the best-fitting edge-weighted tree), character-based methods~\cite{agarwalaf-b,kannanwarnow,mossel2005learning, MosselS07, steel2016phylogeny}, where we want to explain the evolution of different characters, each taking on a state in each extant species using the fewest number of state changes, and likelihood methods~\cite{neyman1971molecular, felsenstein1981evolutionary,farach1999efficient, roch2017phase}, where we assume that evolution is a stochastic process drawn from a family of processes, and want to estimate the most likely parameters. In all cases the data observed at the leaves is the result of the stochastic process of evolution. These formulations and other related types lend themselves naturally to the model considered in this paper. 

As noted earlier, the closest formulation to that in our paper is the one introduced by~\cite{kannan1996determining} on learning an ultrametric tree through experiments involving three leaves. 
This paper motivated several follow-ups~\cite{zke2018, brown2013fast, Wy_phylogeny_2008} with closely related models. In particular, in~\cite{zke2018}, the authors  considered a noisy version of these experiments, with each experiment independently failing 
with some probability $p < 1/2$. 
In the current paper, we consider the same problem but with a different, incomparable noise model. Similar to 
~\cite{zke2018}, the noise in each experiment is independent. However, unlike~\cite{zke2018}, the probability of getting an incorrect answer is not the same for every triple. In particular, for three leaves $a$, $b$ and $c$, where the pairwise distances are close to each other (in the ultrametric tree), the probability of getting the correct answer can be as small as $1/3 + \theta(\frac{\log n}{\sqrt{n}})$. An additional feature of our model is that each experiment can only be performed once (similar to \cite{brown2013fast}). 
In contrast, \cite{zke2018} allows for repetition of the same experiment multiple times with fresh randomness each time. 
Finally, we remark that while \cite{zke2018} allows for repetition of the same experiment, they view the number of experiments (equivalently the query complexity) as a key measure of performance of their algorithm -- in fact, their topology reconstruction algorithm has query complexity $O(n \log n)$ (which is essentially optimal). In contrast, the focus of this paper is to identify a broad class of noise models under which tree reconstruction is possible.

Besides evolutionary biology, ``distance based noise models" have also been studied in other reconstruction problem. In~\cite{tamuz2011adaptively}, Tamuz \emph{et~al.}~ study the following problem: there are $n$ elements with an unknown embedding in the Euclidean space. The algorithm gets noisy answers to {\em relative similarity queries} 
 and the goal is to reconstruct this embedding. More precisely, the algorithm can query any 
triple $(a,b,c)$ with the underlying semantics being ``Is $a$ closer to $b$ or to $c$?". On such a query, it gets the pair $(a,b)$ with probability $\frac{d(a,c)}{d(a,b) + d(a,c)}$  (and otherwise the pair $(a,c)$ is returned). Here $d(\cdot)$ is the underlying distance metric. We note that the model is both in form and spirit, very similar to the \hdn studied in Theorem~\ref{theorem:weights}. Indeed, as the distances  $d(a,b)$ and $d(a,c)$ approach each other, the response to the query $(a,b,c)$ is basically a coin flip. On the other hand, if one of the distances is much smaller than the other, then the probability of returning the closer pair approaches $1$. Along similar lines, Van der Maaten et al.~\cite{van2012stochastic} study the problem of learning a low dimensional embedding of a set of elements in Euclidean space based upon seeing the closest pair in a triplet. Just as in our model, each possible closest pair appears with a probability that depends upon an underlying dissimilarity/ distance function relating these elements.

Distance-based models are also quite popular in the ranking literature. In particular, in the well-known Bradley-Terry-Luce (BTL) model~\cite{Bradley-Terry, Luce}, 
there are $n$ elements where the $i^{th}$ element is assigned (an unknown) weight $w_i \in [0,1]$ -- thus defining a total order on these elements. 
The algorithm queries pairs $(i,j)$ and is returned $i$ with probability $w_i/(w_i+w_j)$. Note that this can be interpreted as a noisy comparison query where the probability of returning the larger element depends on the relative scores of the two elements. The goal in the BTL model is to recover the underlying ranking given these noisy comparison queries. Again, the BTL model bears strong syntactic resemblance to our \hdn (from Theorem~\ref{theorem:weights}). 
 In fact, similar to the current paper, in the 
BTL model, each query can be made at most once. We also  note that higher arity generalizations of the BTL model have also been explored in literature~\cite{plackett1975analysis, mcfadden1973conditional} (under the name multinomial logistic model).



 
 \ignore{
\esh{We're dropping the SBM paragraph right?}
The current paper is also related to the rich line of  work on the so-called stochastic block model (SBM)\cite{mcsherry2001spectral, mossel2018proof, decelle2011asymptotic, abbe2017community}.  In SBM, the goal is to discover an underlying partition of the set of vertices into two (or more) unknown clusters. The algorithm gets to query pairs $(i,j)$ of vertices and gets an edge (i) with probability $p$ if $i$ and $j$ belong to the same cluster; (ii) with probability $q<p$ if $i$ and $j$ belong to different clusters.  After making all $\genfrac(){0pt}{2}{n}{2}$ queries, we want to `reconstruct' the clusters. There are several notions of what it means to reconstruct, and which ones are achievable depending on the relative values of $p$ and $q$. The current work differs from the line of work on SBM in two ways: (i) Instead of queries on pairs, we allow for queries on triples of vertices. (ii) Our goal can be construed as recovering a {\em hierarchical cluster} -- as opposed to the {\em flat clusters} in the SBM model.


\esh{Note : Add other results - Nir Ailon and + -- about learning geometric representations using ordinal queries}

Our work can also be seen in the light of many recent works, that learn a geometric representation of a set of elements based upon ordinal information about these elements.

Another variant of the SBM which is related to the current paper is the work of Chen et al.\cite{chen2020near}. Here, the vertices lie in a metric space, rather than being just members of some cluster. The probability of an edge between two vertices is a function of their distance. The goal is to observe a graph generated with these edge probabilities, and infer the (approximate) positions of the vertices. 
Chen et al.~\cite{chen2020near} show that for the Euclidean space in one or two dimensions, where edge probabilities decay as a nice function of the distance, a dense enough set of vertices can be efficiently located at approximately their correct positions. 
In the 1-D case we can find the approximate order among the vertices even more efficiently. These results are shown for edge probabilities that are inverse-exponential, or inverse polynomial in the distance. 
Note that in contrast to \cite{chen2020near} where the points lie in $\mathbb{R}^d$, in the current paper, the points lie on an ultrametric.}

Outside of distance based noise models, there is large body of literature in computer science which aims to model relations between elements by an (unknown) embedding in some metric space. The algorithm makes relational queries and gets noisy responses where the noise is governed by the hidden embedding. Several models  including the famous stochastic block model~\cite{mcsherry2001spectral, mossel2018proof, decelle2011asymptotic, abbe2017community} and its variants~\cite{chen2020near} fit this motif and the current paper can be seen as yet another instantiation of this general framework.  Further examples from the world of machine learning include~\cite{jamieson2011active},~\cite{kleindessner2014uniqueness},~\cite{agarwal2007generalized},~\cite{hoffer2015deep},~\cite{schultz2003learning}.


\label{section:notation}
\section{Model, Notation, and Preliminaries} \label{sec:model}
There is an underlying weighted tree $T$ with the weights constrained such that the distance function between leaves is an ultrametric. The set of leaves of the tree is represented by $L(T)$.  We assume that the height of the tree $h(T)$ is normalized to $1$.

{\bf Distance-Based Noise  Model:} For each triple of leaves $(a,b,c)$, we perform an experiment and get back one of the pairs $(a,b)$, $(b,c)$, or $(c,a)$ probabilistically. Such an experiment will be denoted by $Q(a,b,c)$. Repeating an experiment produces the same answer, and results of distinct experiments are independent of each other. 
Recall that in an ultrametric, the largest two of the three pairwise distances are equal.
Thus we will model the probabilities of different answers as a function of just two distances --- $d_1$, the distance
between the closest pair of leaves and $d_2$ the distance between either of the two other pairs in the triple. (So $d_1 \le d_2$. ) 
In our model, the two incorrect pairs have equal probabilities of being returned, which is justified because their pairwise distances are the same.
Thus, we define two probability functions: $p_{\textsc{correct}}$ and $p_{\textsc{incorrect}}$ where $p_{\textsc{correct}}$ denotes the probability that the closest pair is returned and $p_{\textsc{incorrect}}$ denotes the probability that  one of the other pairs is returned. Thus $\forall d_1, d_2, \ p_{\textsc{correct}}(d_1, d_2) + 2 p_{\textsc{incorrect}}(d_1, d_2)  = 1$. 

We impose some mild conditions on the probability functions. 
First, we naturally insist that $p_{\textsc{correct}} > p_{\textsc{incorrect}}$, since otherwise the output of the experiment is not useful. Second, we require that the probability of returning the correct pair is sufficiently sensitive to the change in distance. In mathematical terms, for any $d_2$ and $0 < d_1 < d_2$, $\frac{ \partial p_{\textsc{correct}}(d_1, d_2)}{\partial d_1} \le -\varepsilon$ 
for some constant $\varepsilon  > 0$. We are implicitly also assuming that the probability functions are continuous in each coordinate since we are assuming that their partial derivatives are defined. 


 We will refer to this model as the \textbf{distance-based noise model} or the \textbf{general noise model}. 



When it comes to reconstructing weights (Section~\ref{sec:weight}), we show that it is possible for a specific instantiation of the distance based model, called the \textbf{homogeneous model}, denoted by $\Mh(a,b,c) $. Recall the definition of this model from the introduction -  on the triple $(a,b,c)$, the probability of returning the pair $(a,b)$ is given by 
\[
\Pr[\Mh(a,b,c) = (a,b)] = \frac{d(a,c) + d(b,c)}{2(d(a,b) + d(b,c) + d(a,c))}, 
\]
where $d(\cdot, \cdot)$ denotes the distance function on the tree.


If $d(a,b)$ is the smallest of the 3 pairwise distances, then the probability of the experiment returning $(a,b)$ is between 1/3 and 1/2. Since  the other two distances, $d(a,c)$ and $d(b,c)$ are equal in an ultrametric,  the experiment has equal probability of returning $(a,c)$ or $(b,c)$.  
Thus for pairs $a, b$ that are very close and $c$ that is much farther, the probability of getting the result $(a,b)$ approaches .5, while for triples $(a,b,c)$ whose least common ancestors are very close, the probability of getting any pair approaches $1/3$. In the introduction we provided intuition on why this is a natural model.

{
Even in this simple model, we cannot hope to reconstruct arbitrary ultrametrics as the following example shows. Suppose the underlying tree is a balanced binary tree, where the edge at depth $i$ has weight $C \cdot 2^{2^i}$. Let the height of the tree be $\log n$ and the constant $C$ is chosen so that any root to leaf path has weight $C \cdot \sum_{i=0}^{h-1}2^{2^i} =1$. Now, consider any three leaves $a$, $b$ and $c$ such that the least common ancestor for any of the pairs is at height $h/2$. Further, for any three leaves $x,y,z$, let us call $\Mh(x,y,z)$ to be $\delta$-random if any of the pairs is returned with probability $1/3 \pm \delta$. Then, 
the following can be easily verified. 
(i) The experiment $\Mh(a,b,c)$ is $\exp(-\Theta(n))$-random. 
(ii) If $x \not =a,b,c$ is any other leaf, then the experiments, $\Mh(a,b,x)$, $\Mh(b,c,x)$ and $\Mh(a,c,x)$ are $\exp(-\Theta(n))$-random.  
(iii) If $x,y \not = a,b,c$ are two other leaves, then
the experiments, $\Mh(a,x,y)$, $\Mh(b,x,y)$ and $\Mh(c,x,y)$ are $\exp(-\Theta(n))$-random.   
From the above, it easily follows that using just $\binom{n}{3}$ experiments, the relative topology of the leaves $a$, $b$ and $c$ cannot be resolved. 
}

\ignore{
Even in this simple model, we cannot hope to reconstruct arbitrary ultrametrics as the following example shows: the underlying tree is a balanced binary tree, with each edge at depth $i$ having weight $2^i$. (These weights can be normalized to make each root-leaf path have length 1.)  Let $h$ be the height of the tree. Consider any 3 leaves both of whose least common ancestors are in the top $\log n / 2$ layers. The pair with the lowest least common ancestor has negligible advantage over the other two pairs. For any 3 such leaves, $a,b,$ and $c$, and for any other leaves $x$ and $y$,
it can be checked that any experiment involving two of $a,b,$ and $c$ and $x$  has indistinguishable probability of producing each of the 3 results, regardless of which pair out of $a,b,$ and $c$ is used in the experiment. Similarly any experiment involving  $x$ and $y$ and one of $a,b,$ and $c$ has indistinguishable probability of producing $(x,y)$ regardless of which of $a,b,$ or $c$ is used. Thus, with only $n^3$ experiments and negligible advantage from each, it will be impossible to learn even partial information about the topology in the top half of the tree. }

Thus, we will need to impose some conditions on the ultrametric to make the problem tractable.  Specifically, we show that a lower bound on the length of each edge is necessary and sufficient (up to log factors) for reconstructing the topology in the general model, and reconstructing the weights in the homogeneous model.

Without loss of generality, we will assume that the tree is a full binary tree, since an internal node with 1 child does not affect the response to any query and can be eliminated. 

Through this paper, when we refer to events having \ohp, we mean a probability of at least $1- \frac{1}{n^6}$. 
Since we will consider at most $o(n^5)$ such events, using the union bound, we can assume that all of them happen with very high probability (at least $1-\frac{1}{n}$), and condition our analysis upon this event.

We fix some standard notation for full binary trees that will be used in our algorithms.

{\bf Subtrees: } By  a \textit{subtree} of some tree $T$, we will mean the entire tree rooted at some internal node of $T$. (Thus we use the term ``subtree'' in a more restrictive manner than usual.) For any tree $T$, $L(T)$ denotes the set of leaves of $T$. We will also refer to the set of all leaves in the tree by $L$. 

{\bf Subtree-Induced Partition:} If $T_B$ is a subtree of $T$, it naturally partitions $L(T) - L(T_B)$ into buckets $S_1, S_2, \cdots S_k$ , where $x$ and $y$ are in the same bucket if and only if for any $z \in L(T_B)$, the least common ancestors of $x$ and $z$, and of $y$ and $z$ are the same. An alternative characterization of these buckets is that $x$ and $y$ are in the same bucket if and only if for any $z$ in $T_B$, the closest pair out of the triple $(x,y,z)$ is $(x,y)$.  Each bucket can be thought of as a subtree hanging off from the path from the root of $T_B$ to the root of $T$. Thus there is a natural order on the buckets that is defined by this path, with $S_1$ being the bucket closest to $T_B$ and $S_k$ the farthest. 
A visual depiction is shown in Figure~\ref{fig:total_order}.
For any $j \in [k]$, the set of leaves in $T_B, S_1, S_2 \cdots S_j$ form the leaves of a subtree.

 \begin{figure}[h]
	\centering
	\includegraphics[width = 6cm]{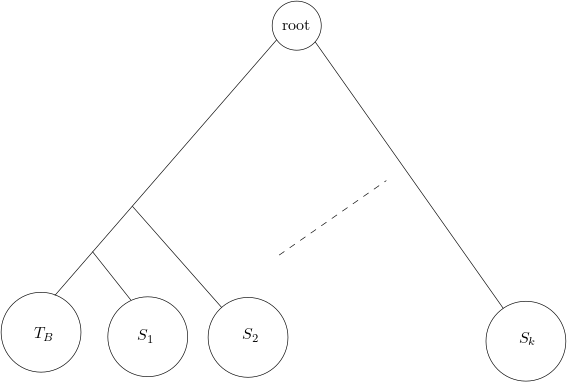}
	\caption{Partition of buckets with respect to Subtree $T_B$}	
	\label{fig:total_order}
\end{figure}

{\bf Induced Topology:}   For any subset of leaves $S$, the induced topology on $S$ is defined by removing all leaves outside of $S$ and removing internal nodes that now have only one child. (Since we are talking about weighted trees, when we have an internal node $v$ with one child, we replace the two edges incident on $v$ by a single edge whose weight is the sum of the weights of the two edges.)  It is not hard to see that the weighted tree obtained by this process will define an ultrametric on $S$. As a special case, when $T_B$ is a subtree of $T$, we will denote the induced topology on the leaves \textbf{not} in $T_B$ by $T - T_B$.
We will also define a slightly different induced topology where we replace $T_B$ by a single leaf (of $T_B$). We think of this operation as taking the quotient of $T$ with respect to $T_B$, and denote the resulting topology by $T\, / \, T_B$.

By the \textit{topology of a triple} of leaves $(x,y,z)$ we mean the induced subtree of $T$ with just these leaves. This topology is completely specified by specifying the pair among $x,y,$ and $z$ that has the least common ancestor of smallest height.

Finally, we use some standard concentration inequalities and  results about measures of statistical distance in our paper. These can be found in Appendix~\ref{sec:con} (Supplementary Material).

\section{Reconstructing Full Binary Trees}
\label{section:full_trees}

\label{sec:topology}

This section is devoted to the proof of the following result.

\begin{theorem}
\label{theorem:topology}
Let $T$ be a weighted full binary tree tree where all root-leaf paths are of length 1. Suppose that  the weight on each edge is at least $\tau \fs$, for some large constant $\tau$. Then, given access to the \textbf{general model} on the leaves of this tree, there exists an efficient algorithm $\algoverall$ that, with high probability, exactly reconstructs the topology of this tree.
\end{theorem}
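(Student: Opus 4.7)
The plan is to prove the theorem by constructing a bottom-up algorithm $\algoverall$ that interleaves two primitives: (i) given a known subtree $T_B$ with leaf set $R$, invoke $\algthird$ to partition $L\setminus R$ into the ordered buckets $S_1,\dots,S_k$ that $T_B$ induces; and (ii) invoke $\algsecond$ to grow a small seed subtree into one containing $\Theta(n)$ leaves. The analysis will convert the edge-length lower bound $\tau\sqrt{\log n/n}$ into a per-query probability bias, and then amplify that bias by aggregating many queries.

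The first step is a uniform bias bound. For any triple $(x,y,z)$ with $(x,y)$ strictly closest, the gap $d_2-d_1$ equals twice the weighted length of the path from $\mathrm{LCA}(x,y)$ to $\mathrm{LCA}(x,y,z)$, and since this path contains at least one edge it is at least $2\tau\sqrt{\log n/n}$. Combined with $p_{\textsc{correct}}(d,d)=1/3$ and the anti-Lipschitz condition, this yields
\[
p_{\textsc{correct}}(d_1,d_2)\;\ge\;\tfrac{1}{3}+2\varepsilon\tau\sqrt{\tfrac{\log n}{n}},\qquad
p_{\textsc{incorrect}}(d_1,d_2)\;\le\;\tfrac{1}{3}-\varepsilon\tau\sqrt{\tfrac{\log n}{n}}.
\]
A Hoeffding bound then controls empirical fractions: for every pair $(a,b)$ and every set $S\subseteq L\setminus\{a,b\}$, the fraction of $z\in S$ with $Q(a,b,z)=(a,b)$ is within $O(\sqrt{\log n/|S|})$ of its expectation with \ohp. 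After union-bounding over the $\mathrm{poly}(n)$ pairs and witness sets we ever consult, I can treat empirical counts as essentially deterministic for the rest of the argument.

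Next I would analyze $\algthird$. Given $T_B$ with leaf set $R$, two unplaced leaves $x,y$ are declared same-bucket iff their empirical $(x,y)$-fraction over $z\in R$ exceeds $\tfrac13$. Correctness follows because same-bucket pairs have $(x,y)$ as the true closest pair for every $z\in R$, while for different-bucket pairs the true closest pair always involves some $z$; the two regimes are separated by an $\Omega(|R|\sqrt{\log n/n})$ expected gap, which dominates the Hoeffding fluctuation once $|R|=\Omega(n)$. Bucket ordering along the path from $T_B$ to the root is then recovered by comparing, for inter-bucket representatives, the empirical rates of the two ``incorrect'' outcomes, since the bucket whose representative is closer to $T_B$ beats the farther one more often in these comparisons. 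The bootstrap $\algsecond$ begins by selecting the pair $(a,b)$ with the largest empirical $(a,b)$-count against all other leaves --- a counting argument using the bias bound shows any such pair must sit in a deep subtree --- and grows this seed by absorbing leaves classified by $\algthird$ as belonging inside it, until the seed spans a constant fraction of $L$. The top-level procedure $\algoverall$, with $\algfirst$ as its stitching step, then applies $\algthird$ on the grown seed to split $L$ into ordered buckets and recurses inside each bucket; correctness follows by induction on the number of unplaced leaves and a global union bound.

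The main obstacle I expect is maintaining the condition $|R|=\Omega(n)$ all the way down the recursion. The naive recursion --- partitioning inside a bucket $S_i$ using a seed built from leaves of $S_i$ --- fails when $|S_i|$ is much smaller than $n$, because the per-query bias $\Omega(\sqrt{\log n/n})$ no longer dominates the $O(\sqrt{\log n/|S_i|})$ Hoeffding fluctuation. The most natural resolution is to observe that the induced problem on $S_i$ is still an instance of the same model with $|S_i|$ leaves in which every edge of the original tree still satisfies the weight lower bound, so that an appropriately scaled version of the bias argument continues to apply provided $|S_i|$ is not too small; pockets where $|S_i|$ becomes sub-polynomial have to be handled directly by exhaustively checking topologies. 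Formalizing this recursion with a strong enough inductive invariant, and choosing the constant $\tau$ large enough that the bias strictly dominates the deviation at every level, is the technical core of the proof.
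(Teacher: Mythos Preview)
Your proposal has a genuine gap, and you have correctly located it yourself: maintaining $\Omega(n)$ witnesses through the recursion. But your proposed resolution does not work. When you recurse into a bucket $S_i$ with $|S_i|=m\ll n$, the induced instance is \emph{not} an instance of the theorem with parameter $m$: the edge-weight lower bound is still $\tau\sqrt{\log n/n}$, calibrated to the global $n$, so the per-query bias stays at $\Theta(\sqrt{\log n/n})$ while the Hoeffding deviation over $m$ queries is $\Theta(\sqrt{\log n/m})$. These only balance when $m=\Omega(n)$, so no rescaling saves the recursion, and ``exhaustively checking topologies'' for small pockets is not an information-theoretic fix --- queries internal to a small $S_i$ simply do not carry enough signal. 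Relatedly, your plan to grow a seed to $\Theta(n)$ leaves via $\algthird$ is circular: your $\algthird$ already needs $|R|=\Omega(n)$ to classify reliably, so it cannot be the mechanism by which $|R|$ becomes large.

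The paper resolves this with two ideas you are missing. First, it never needs a single large subtree: it builds a base tree $T_B$ and a pivot tree $T_P$, each of size $\Theta(\sqrt n)$, and classifies every remaining leaf $x$ using the $|L(T_B)|\cdot|L(T_P)|=\Omega(n)$ queries $Q(a,b,x)$ with $a\in T_B$, $b\in T_P$; this is how $\algthird$ actually achieves concentration. Second --- and this is the role of $\algfirst$, which is not a ``stitching'' step --- whenever a part $P$ has at most $11n/12$ leaves, the $\ge n/12$ leaves \emph{outside} $P$ serve as witnesses: for any triple $a,b,c\in P$ one scores each pair using $Q(a,b,x)$ over all external $x$, giving $\Omega(n)$ independent queries and hence reliable resolution of the induced topology on $P$. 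The recursion therefore never descends into a small bucket on its own; at every step at most one part exceeds $11n/12$ leaves, that part is recursed on (still with access to all $n$ leaves), and every other part is resolved outright by $\algfirst$. Your outline would be repaired by replacing the ``treat $S_i$ as a smaller instance'' step with this external-witness argument, and by replacing the $\Theta(n)$-seed growth with the $\sqrt n\times\sqrt n$ product trick.
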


We start by providing a high-level description of the algorithm $\algoverall$. We want to infer an unknown tree $T$ on a given set of leaves $L$. We are given the results of experiments on all triples $(a,b,c) \in L$. This result is one of the three possible pairs  with probabilities specified by the distance-based noise model (defined in Section~\ref{sec:model}). 
We use the phrase `resolving the topology' of a subtree $T'$ to mean that we know the rooted tree representation of $T'$.

Before describing the algorithm, we make an important observation regarding the probability of getting the correct answer.

\begin{observation}
\label{obs:advantage}
The assumption that all edge weights are at least $\tau \sqrt{(\log n /n )}$ implies that $d_1 + 2 \tau \sqrt{(\log n /n )} \le d_2$ for the distances $d_1, d_2$ involved in every experiment $Q(a,b,c)$. Thus, using the properties of the model, we observe that $p_{\textsc{correct}} \ge p_{\textsc{incorrect}} +2  \tau \sqrt{(\log n /n )}$.
\end{observation}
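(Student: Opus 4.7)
The plan is to prove Observation~\ref{obs:advantage} in two steps: first derive the geometric gap $d_2 - d_1 \ge 2\tau\sqrt{\log n / n}$ from the edge-weight hypothesis, and then translate this into a gap between $p_{\textsc{correct}}$ and $p_{\textsc{incorrect}}$ via the anti-Lipschitz condition.

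For the first step, fix any triple $(a,b,c)$ and without loss of generality let $(a,b)$ be the closest pair. Let $u$ denote the LCA of $a$ and $b$ and let $v$ denote the LCA of the triple; these are distinct internal nodes because $c$ is not a descendant of $u$. Since root-leaf paths in $T$ have length $1$, the ultrametric property gives $d_1 = 2(1 - \mathrm{depth}(u))$ and $d_2 = 2(1 - \mathrm{depth}(v))$, where $\mathrm{depth}(\cdot)$ is measured from the root. The path from $v$ down to $u$ contains at least one edge, and each such edge has weight at least $\tau\sqrt{\log n / n}$, so $\mathrm{depth}(u) - \mathrm{depth}(v) \ge \tau\sqrt{\log n / n}$, which yields $d_2 - d_1 \ge 2\tau\sqrt{\log n / n}$.

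For the second step, I would integrate the anti-Lipschitz condition along the segment from $(d_1, d_2)$ to $(d_2, d_2)$:
\[
p_{\textsc{correct}}(d_1, d_2) - p_{\textsc{correct}}(d_2, d_2) \;=\; -\int_{d_1}^{d_2} \frac{\partial p_{\textsc{correct}}}{\partial d_1}(s, d_2)\, ds \;\ge\; \varepsilon (d_2 - d_1).
\]
At $d_1 = d_2$ all three pairwise distances coincide, so by symmetry together with the constraint $p_{\textsc{correct}} + 2p_{\textsc{incorrect}} = 1$ we must have $p_{\textsc{correct}}(d_2, d_2) = 1/3$. Combining these yields $p_{\textsc{correct}}(d_1, d_2) \ge 1/3 + 2\varepsilon \tau\sqrt{\log n / n}$, and hence $p_{\textsc{incorrect}}(d_1, d_2) \le 1/3 - \varepsilon\tau\sqrt{\log n / n}$. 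Subtracting gives $p_{\textsc{correct}} - p_{\textsc{incorrect}} \ge 3\varepsilon\tau\sqrt{\log n / n}$.

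I do not anticipate any real obstacle here beyond bookkeeping of constants. The conclusion as stated carries the prefactor $2\tau$, whereas my derivation produces $3\varepsilon\tau$. Since $\tau$ is a ``large constant'' chosen in the theorem hypothesis and $\varepsilon > 0$ is a fixed parameter of the noise model, one either rescales $\tau$ in the statement by $3\varepsilon / 2$, or reads the stated bound as $\Omega(\tau\sqrt{\log n / n})$ with the hidden constant depending on $\varepsilon$. Either way, the advantage of the correct pair over an incorrect pair is $\Omega(\sqrt{\log n / n})$, which is exactly the per-query gap the downstream concentration arguments in the proof of Theorem~\ref{theorem:topology} need in order to aggregate $\mathrm{poly}(n)$ noisy queries reliably.
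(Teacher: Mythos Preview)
Your argument is correct and, in fact, more careful than what the paper provides: the paper states this as a bare observation with no proof, relying on the reader to fill in exactly the two steps you give (the LCA/depth computation for the distance gap, and integrating the anti-Lipschitz bound from $d_1$ up to $d_2$). Your observation about the constant is also well taken; the paper's stated prefactor $2\tau$ silently absorbs the model constant $\varepsilon$, and indeed when the observation is invoked later (e.g.\ in the proofs of Lemma~\ref{lemma:build_small_trees} and Lemma~\ref{lemma:partial_order_refinement}) the constants appearing are things like $\tau/3$ and $\tau/9$, confirming that the authors are treating $\tau$ as a free large constant that swallows the $\varepsilon$ dependence.
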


Our algorithm works by resolving the topologies of small subtrees, and then stitching these together until all of $T$ is resolved.

\begin{enumerate}

\item In a bottom-up manner by combining sibling subtrees, we build a ``base'' tree $T_B$ containing between $\sqrt{n}$ and $2 \sqrt{n}$ leaves that is a subtree of $T$ with high probability. (This is as large a tree as we can build to be confident that we have a subtree of $T$.)

\item We use the same idea to build a "pivot" tree $T_P$  on about the same number of  leaves outside of $L(T_B)$. With high probability, $T_P$ will also be a subtree of the induced tree on $L - L(T_B)$.

\item Using the fact that $|L(T_B) | * |L (T_P)| = \Omega (n)$,  we partition the leaves in $L(T) - L(T_B)$ into 3 parts (some possibly empty) --- leaves in buckets to the left of  $T_P$ (i.e., buckets with smaller indices than the buckets that the leaves of $T_P$ come from), leaves in the same bucket as $T_P$, and leaves in buckets to the right of $T_P$ (i.e., buckets with smaller indices than the buckets that the leaves of $T_P$ come from).

\item  We show that if a subtree excludes $\Omega (n)$ leaves then we can infer its entire topology with high probability. Likewise, if it contains $\Omega (n)$ leaves, we can infer the topology of the complement of the subtree with high probability. Using these two facts, we fully resolve the topology of all but one of the 3 parts in the previous step and recurse on the unresolved part. When this part has fewer than 
$\frac{11n}{12}$ leaves, we can infer its topology directly from the leaves outside, and the recursion bottoms out.

\end{enumerate}

Before describing the proofs of the different parts used in our algorithm, we state  a simple lemma about sums of independent random variables that recurs in multiple proofs. The proof of this lemma is in Appendix~\ref{app:topo}.

\begin{lemma}
	\label{lemma:enough_independent_sum}
	Let $L \ge \frac{n}{16}$, let $X_1, X_2, \cdots X_L$ and $Y_1, Y_2,\cdots Y_L$ be two sets of independent $0-1$ random variables such that for all $i \in [L]$, $\E[X_i] \le \E[Y_i] - c  \sqrt{(\log n /n )}$, for a sufficiently large constant $c$. Let $X = \sum_{i=1}^l X_i$ and let $Y = \sum_{i=1}^l Y_i$ Then, with high probability, i.e. at least $1- 1/n^6$, we have $Y >X + 24 \sqrt{n \log n}$. 
	
	Alternatively, if $X_i$ and $Y_i$ are identical random variables for each $i$, with high probability, we have $|X - Y| < 12 \sqrt{n \log n}$.
\end{lemma}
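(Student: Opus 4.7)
The plan is to invoke Hoeffding's inequality directly to both sums and then compare. Since each $X_i, Y_i \in \{0,1\}$ is bounded and they are independent, standard concentration gives that $X$ and $Y$ are each close to their means with probability at least $1 - O(1/n^6)$.

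First, for the ``separated'' case: by linearity,
\[
\E[Y] - \E[X] \;=\; \sum_{i=1}^L \bigl(\E[Y_i] - \E[X_i]\bigr) \;\ge\; L \cdot c \sqrt{\tfrac{\log n}{n}} \;\ge\; \tfrac{c}{16}\sqrt{n \log n},
\]
using the hypothesis $L \ge n/16$. Hoeffding's inequality applied to $X$ (and separately to $Y$), each a sum of $L$ independent random variables bounded in $[0,1]$, yields
\[
\Pr\!\left[|X - \E[X]| \ge \sqrt{3L \log n}\right] \;\le\; 2\exp(-6\log n) \;=\; 2/n^6,
\]
and similarly for $Y$. Since in our context $L \le n$ (the indices range over leaves), each deviation is at most $\sqrt{3 n \log n}$ with high probability. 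A union bound then gives
\[
Y - X \;\ge\; \E[Y] - \E[X] - 2\sqrt{3 n \log n} \;\ge\; \left(\tfrac{c}{16} - 2\sqrt{3}\right)\sqrt{n \log n}.
\]
Choosing $c$ large enough (concretely $c \ge 16(24 + 2\sqrt{3})$, so roughly $c \ge 450$) makes this exceed $24\sqrt{n \log n}$, as desired.

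For the ``identical'' case, it is cleanest to apply Hoeffding directly to the single sum $Y - X = \sum_{i=1}^L (Y_i - X_i)$: each summand lies in $[-1,1]$ and has mean zero, so
\[
\Pr\!\left[|Y - X| \ge t\right] \;\le\; 2\exp\!\left(-\tfrac{t^2}{2L}\right).
\]
Setting $t = 2\sqrt{3 L \log n}$ gives failure probability at most $2/n^6$, and since $L \le n$ we have $t \le 2\sqrt{3 n \log n} < 12\sqrt{n \log n}$, yielding the claimed bound. The failure probability $2/n^6$ can be tightened to $1/n^6$ by a marginally larger constant in Hoeffding, which is absorbed into the stated constants $c$ and $12$, $24$.

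There is no real obstacle here; the lemma is a bookkeeping step packaging together the two standard Hoeffding estimates in exactly the form needed in later arguments (distinguishing whether two competing experiments have a genuine bias or are in distribution the same). The only subtlety worth flagging is the implicit assumption $L \le n$, which holds in every invocation in this paper because the sums always index subsets of the $n$ leaves; if one wanted a self-contained statement one would simply write the bound as $O(\sqrt{L \log n})$.
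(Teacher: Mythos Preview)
Your proof is correct and follows essentially the same approach as the paper: apply Hoeffding to each sum, use linearity of expectation to lower-bound $\E[Y]-\E[X]$, and finish with a triangle inequality and union bound. The only cosmetic difference is that for the identical case you apply Hoeffding directly to the difference $Y-X$, whereas the paper bounds $|X-\E[X]|$ and $|Y-\E[Y]|$ separately (each by $6\sqrt{n\log n}$) and combines via the triangle inequality; both routes give the same conclusion, and your explicit flagging of the implicit $L\le n$ assumption is a welcome clarification.
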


We will now elaborate on the algorithm. 
To begin, we state a lemma that establish the claims made in Step 4 above.


\begin{lemma}
\label{lemma:large_enough}
Let $T$ be a tree with $n$ leaves and $T'$ a subtree.
\begin{enumerate}
\item If $T'$ contains at least $\frac{n}{12}$ leaves, 
given the set $L(T')$, there exists an algorithm  $\algfirst$ that, with high probability, resolves the topology of the quotient $T\,/\,T'$. (Recall that this quotient is arrived at by collapsing $T'$ to a single leaf and taking the induced topology on the resulting set of leaves.)

\item If $T'$ has at most $\frac{11n}{12}$ leaves, given the set $L(T')$, the algorithm $\algfirst$ resolves  the induced topology on this set with high probability. 
\end{enumerate}

\end{lemma}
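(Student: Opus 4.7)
The plan is to prove both parts by a single amplification technique: aggregate many noisy queries that share the same correct answer, and apply Lemma~\ref{lemma:enough_independent_sum} to denoise them.

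\emph{Part 1.} Fix any $a, b \in L \setminus L(T')$. For every $z \in L(T')$, the ultrametric property forces $\mathrm{LCA}(a, z) = \mathrm{LCA}(a, \mathrm{root}(T'))$ and $\mathrm{LCA}(b, z) = \mathrm{LCA}(b, \mathrm{root}(T'))$, so the three pairwise distances in the triple $(a, b, z)$ do not depend on the choice of $z$. Hence all $|L(T')| \ge n/12$ queries $\{Q(a, b, z) : z \in L(T')\}$ are independent Bernoulli trials with a common correct answer and advantage $\Omega(\sqrt{\log n / n})$ (Observation~\ref{obs:advantage}); by Lemma~\ref{lemma:enough_independent_sum} together with a union bound over pairs, the majority vote recovers, with high probability, the closest pair of the triple $(a, b, z_0)$ in $T/T'$, where $z_0$ is the leaf representing $T'$. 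This information determines (i) the partition of $L \setminus L(T')$ into buckets (two leaves share a bucket iff the recovered closest pair is $(a, b)$), (ii) the ordering of these buckets along the path from $\mathrm{root}(T')$ to $\mathrm{root}(T)$ (leaf $a$'s bucket is closer to $T'$ iff the recovered closest pair is $(a, z_0)$), and (iii) the attachment point of $z_0$. The internal topology of each bucket $T_i$ is then recovered by invoking Part 2 on $L(T_i)$, which is legitimate because $|L(T_i)| \le |L \setminus L(T')| \le 11n/12$ and Part 2 below does not call Part 1.

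\emph{Part 2.} For any $a, b \in L(T')$ and any $c \in L \setminus L(T')$, the ultrametric forces $d(a, c) = d(b, c) =: d_c > d(a, b)$, so $(a, b)$ is always the strictly closest pair of $(a, b, c)$ and the value $d_c$ depends only on $c$. Define the score
\[
    N(a, b) \;=\; \bigl|\{c \in L \setminus L(T') : Q(a, b, c) = (a, b)\}\bigr|,
\]
a sum of $|L \setminus L(T')| \ge n/12$ independent Bernoullis with expectation $\sum_c p_{\textsc{correct}}(d(a, b), d_c)$. By the anti-Lipschitz hypothesis, whenever two pairs have $d$-values differing by $\ge 2\tau\sqrt{\log n / n}$ their expected $N$-scores differ by $\Omega(\sqrt{n \log n})$, and Lemma~\ref{lemma:enough_independent_sum} lifts this gap to the realized scores with high probability. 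Since every edge of $T$ has weight $\ge \tau\sqrt{\log n / n}$, heights of distinct internal nodes of $T$ differ by at least this amount, so the $d$-values of the three pairs in any triple $(a, b, c) \in L(T')^3$ are either equal (when the corresponding LCAs coincide) or separated by $\ge 2\tau\sqrt{\log n / n}$. Consequently the closest pair of the triple is the unique one among $(a, b), (b, c), (a, c)$ with the maximum $N$-score; a union bound over all $O(n^3)$ triples yields a noiseless oracle for closest-pair queries on $L(T')$, which we feed into any standard noise-free reconstruction algorithm (e.g.~that of~\cite{kannan1996determining}) to obtain the induced topology.

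The main technical delicacy is that the expected $N$-score gap of $\Omega(\sqrt{n \log n})$ is only barely larger than the $\Theta(\sqrt{n})$ standard deviation of each score. The constant $\tau$ must therefore be chosen a sufficiently large multiple of $1/\varepsilon$ so that Lemma~\ref{lemma:enough_independent_sum} is applicable, and its $1 - n^{-6}$ tail comfortably absorbs all $O(n^3)$ union-bounded comparisons in Parts 1 and 2, together with the further union bounds across the recursive calls from Part 1 into Part 2.
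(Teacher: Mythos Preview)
Your proposal is correct and follows essentially the same strategy as the paper: aggregate the $\Theta(n)$ available independent queries into a score and apply Lemma~\ref{lemma:enough_independent_sum} to denoise, then feed the denoised answers into a classical reconstruction routine. Part~2 is identical to the paper's argument. For Part~1, the paper carries out the bucket partition exactly as you do (comparing, over all $z\in L(T')$, how often $Q(a,b,z)$ returns $(a,z)$ versus $(b,z)$ versus $(a,b)$); the only organizational difference is that for the \emph{within}-bucket topology the paper re-derives the scoring argument directly using $x\in L(T')$, whereas you invoke Part~2 on each bucket $T_i$ (using $x\in L\setminus L(T_i)$). Both choices supply the required $\ge n/12$ independent trials, so this is a packaging difference rather than a mathematical one.
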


The proof of this lemma can be found in Appendix~\ref{app:topo}.

We now fill in the details of Step 1 in our algorithm outline above, describing and analyzing an algorithm that constructs an approximately $\sqrt{n}$ sized subtree within the induced tree of a  large enough set of leaves.

\begin{lemma}
\label{lemma:build_small_trees}
Given a subset $S \subseteq L$ of leaves with $|S| \ge \frac{n}{12}$,  denoting the (unknown) induced topology on $S$ by $T_S$, there exists an algorithm $\algsecond$ that, whp, finds a subtree $T'$ of  $T_S$ such that $|L(T')| \in [\sqrt{n}, 2 \sqrt{n}]$.
\end{lemma}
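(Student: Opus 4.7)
The algorithm $\algsecond$ will proceed by bottom-up agglomeration: maintain a forest $F$ of disjoint subtrees of $S$, initially the $|S|$ singletons, and at each round identify a pair $(T_i, T_j) \in F$ whose union is (with high probability) a subtree of $T_S$---a \emph{sibling} pair---and merge them; halt as soon as some tree in $F$ first reaches size $\geq \sqrt{n}$. Because the merge that triggers termination joins two subtrees each of size strictly less than $\sqrt{n}$, the output has size in $[\sqrt{n}, 2\sqrt{n})$, matching the desired range. Correctness hinges on the invariant that every tree in $F$ is a genuine subtree of $T_S$, which is preserved as long as sibling pairs are correctly identified. Existence of a sibling pair at each round is automatic, since the induced tree on the current roots of $F$ is itself a full binary tree on at least two leaves and so must contain some pair of sibling leaves.

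The central subroutine is a reliable \emph{sibling test}. For each candidate pair $(T_i, T_j) \in F$, I would define the score $\sigma(T_i, T_j)$ to be the number of queries $Q(a,b,c)$ with $a \in L(T_i),\ b \in L(T_j),\ c \in S \setminus (L(T_i) \cup L(T_j))$ that returned the pair $(a,b)$. If $(T_i, T_j)$ are siblings in $T_S$, the ultrametric property forces $(a,b)$ to be the unique closest pair in every such triple, so each indicator has mean at least $p_{\textsc{correct}}$; otherwise, letting $v := \mathrm{LCA}(T_i, T_j)$ and $w := |L(v) \setminus (L(T_i) \cup L(T_j))|$, each of the $w$ ``witness'' leaves contributes $|L(T_i)| \cdot |L(T_j)|$ triples in which $(a,b)$ is not closest and whose indicators have mean only $p_{\textsc{incorrect}}$. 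By Observation~\ref{obs:advantage}, the expected sibling-vs-non-sibling gap is therefore at least $|L(T_i)| \cdot |L(T_j)| \cdot w \cdot 2\tau\fs$, and an application of Lemma~\ref{lemma:enough_independent_sum} (once we verify that the underlying sums contain enough independent terms to match its hypothesis), combined with a union bound over the $O(|S|^2)$ tested pairs, will distinguish siblings from non-siblings with high probability whenever this gap dominates the concentration noise.

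The hard part is the early phase, when the candidate subtrees are small. For singleton pairs, the gap degenerates to $w \cdot \fs$ while the fluctuation of the corresponding sum of $|S|-2$ indicators is of order $\sqrt{|S| \log n}$, so the sibling test only reliably separates pairs for which $w = \Omega(|S|)$; two leaves that are nearly-but-not-quite siblings (i.e.\ whose LCA $v$ has small $|L(v)|$) may not be distinguishable from true siblings in one shot. The way I would handle this is to argue that such ``confused'' merges are harmless: any pair the test fails to resolve must lie in a common small subtree $T(v)$ of $T_S$, and the procedure will simply continue agglomerating leaves of $T(v)$ together until it either exhausts $L(v)$---producing a genuine subtree of $T_S$---or climbs to a level at which both candidate components have grown enough to give the sibling test the amplification factor $|L(T_i)|\cdot|L(T_j)|$ it needs. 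Quantifying this amplification so that the forest provably stabilizes on true subtrees before any component crosses the $\sqrt{n}$ threshold, and carefully scheduling the order of merges so that near-sibling confusions inside a given $L(v)$ are resolved before spilling outward, is where the bulk of the technical work will lie.
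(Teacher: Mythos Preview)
Your agglomeration framework and termination argument match the paper's, and you have correctly identified that thresholding a per-pair sibling score runs into trouble when the witness set $w$ is small. But the paper sidesteps this difficulty with a much simpler device, so the ``confused merges are harmless'' program you sketch---which you yourself flag as where the bulk of the work would lie---is not needed at all.

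Two differences. First, the paper's score $s_{ij}$ fixes a \emph{single} representative $a\in L_i$, $b\in L_j$ and sums the indicators $\mathbf{1}[Q(a,b,x)=(a,b)]$ only over $x\in S\setminus(L_i\cup L_j)$; this makes every score a sum of roughly $|S|$ terms, so scores of different pairs are directly comparable. Second, and more importantly, the algorithm never attempts to \emph{decide} whether a given pair is a sibling pair; it simply merges the pair with the \emph{maximum} score. Correctness then reduces to showing that every non-sibling pair $(T_k,T_l)$ is beaten by \emph{some} sibling pair. The paper exhibits one: in the collapsed tree restricted to the subtree at $\mathrm{LCA}(T_k,T_l)$, take a deepest internal node; its two children form a sibling pair $(T_i,T_j)$ with $\mathrm{LCA}(T_i,T_j)$ strictly below $\mathrm{LCA}(T_k,T_l)$. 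Now for \emph{every} $x\in A:=S\setminus(L_i\cup L_j\cup L_k\cup L_l)$ one has $\E[X^x_{ij}]-\E[X^x_{kl}]\ge \Omega\big(\tau\sqrt{(\log n)/n}\big)$: if $x$ lies outside the subtree at $\mathrm{LCA}(T_k,T_l)$ both queries share the same $d_2$ but the sibling pair has strictly smaller $d_1$, and the anti-Lipschitz condition supplies the gap; if $x$ lies inside, then $(a',b')$ is no longer the closest pair in its triple, so $\E[X^x_{kl}]=p_{\textsc{incorrect}}$ while $\E[X^x_{ij}]=p_{\textsc{correct}}$. Since $|A|\ge n/16$ and the symmetric-difference sets $B_1,B_2$ each have size at most $2\sqrt n$ (all trees in the forest are below the $\sqrt n$ threshold), concentration yields $s_{ij}>s_{kl}$ with high probability.

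The point is that \emph{comparing} scores, rather than thresholding them, makes the witness-set size $w$ irrelevant: the deeper sibling pair's advantage is realized over all $\Theta(n)$ outside leaves, not just the $w$ leaves that certify the specific non-siblinghood of $(T_k,T_l)$. This eliminates the entire early-phase difficulty you flag, and with it the need to reason about near-sibling confusions or to schedule merges carefully.
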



\begin{proof}
The algorithm $\algsecond$ builds subtrees of $T_S$ from the ground up, starting with each leaf in $S$ as a singleton subtree. Let $T_1, T_2, \cdots T_l$ represent subtrees of $T_S$ whose leaves form a disjoint partition of $S$ - in each step the algorithm, with high probability, combines two of these subtrees to form a larger subtree of $T_S$. Let $L_i$ represent the set of leaves of $T_i$. Using this procedure and starting with the initial configuration, the algorithm repeatedly keep applies this step until the size of the largest subtree exceeds $\sqrt{n} - 1$. Since each step can at most double the size of the largest subtree, we obtain a subtree of $T_S$ of size in the range $[\sqrt{n}, 2 \sqrt{n}]$.

Now, we describe the key step (combining subtrees) of the algorithm $\algsecond$.
For each pair $T_i, T_j$, a score $s_{ij}$ is generated in the following manner. Fix arbitrary $a \in L_i$, and $b \in L_j$ and define a $0-1$ random variable $X^x_{ij}$ for each leaf $x \in S \setminus (L_i \cup L_j)$ to be $1$ if the experiment $Q(a,b,x)$ gives $(a,b)$ as the answer, and 0 otherwise.
\[s_{ij} := \sum_{x \in S \setminus (L_i \cup L_j
)} X^x_{ij} \]
We claim that (whp) the tree pair with the highest score is in fact a ``sibling - tree pair'', i.e., a pair of subtrees of $T_S$ that can be combined through a shared parent internal node to form a subtree of $T_S$. There always exists such a sibling-tree pair, since these leaf-disjoint subtrees of $T_S$ themselves form a full binary tree.

We start with a simple observation: For any pair $T_k, T_l$ that is not a sibling pair, there is a sibling pair
$T_i, T_j$ such that their least common ancestor is strictly below the root of either $T_k$ or $T_l$.

To see this we consider the full binary tree resulting from starting with $T$ and collapsing each set of leaves $L_i$ into a single node and look at the subtree rooted at the least common ancestor of $T_k$ and $T_l$. There must be a deepest internal node in this tree whose children are sibling subtrees $T_i$ and $T_j$, completing the proof. Figure~\ref{fig:score_winner}, in Appendix~\ref{sec:figures} shows such a set of trees.


\begin{claim}
With high probability, $s_{kl} < s_{ij}$.
\end{claim}

\begin{proof}

Intuitively, this claim is true because the number of leaves that are used to generate the scores of both pairs $T_i, T_j$ and $T_k, T_l$ is sufficiently large  (at least $\frac{n}{16}$), and that these leaves all favor the pair $T_i, T_j$ (by at least $\theta( \sqrt{(\log n /n )})$ each). Further, the number of leaves that are used to generate only one of the scores is bounded by $2 \sqrt{n}$. Consequently, these leaves do not alter the signal created by comparing the action of the leaves used to generate both scores. 

We introduce some notation to formalize the above intuition. Let $S_1 : = S \setminus (L_k \cup L_l)$ and $S_2 : = S \setminus (L_i \cup L_j)$. $S_1$ is the set of leaves used to compute the score $s_{kl}$ and $S_2$ is the set of leaves used to compute the score $s_{ij}$.  We can write $S_1 = A \cup B_1$ and $S_2 = A \cup B_2$ where $A := S_1 \cap S_2$, $B_1 : =  L_i \cup L_j$ and $B_2 := L_k \cup L_l$. Since $|L_k|, |L_l|, |L_i|, |L_j| \le \sqrt{n}$, we can assume that $|A| \ge \frac{n}{16}$ and $|B_1|, |B_2| \le 2 \sqrt{n}$.

Using the condition on minimum edge length and the bound on the partial derivative of the function $p_{\textsc{correct}}(d_1,d_2)$, it is easy to see (Ref~Fig~\ref{fig:score_winner}) that :
\[ \forall x \in A: \E[X^x_{ij}]  - \E[X^x_{kl}] \ge \frac{\tau \sqrt{\log n}}{9 \sqrt{n}}  \]



 Expanding $S_1$ and $S_2$, we get: 
\[s_{kl} = \sum_{x \in A}X^x_{kl} + \sum_{y \in B_1}X^y_{kl}\]

\[s_{ij} = \sum_{x \in A}X^x_{ij} + \sum_{x \in B_2}X^x_{ij}\]

Taking the difference and using linearity of expectation, we get:
\begin{align*}
\E[s_{ij}] - \E[s_{kl}] &= \sum_{x \in A} (\E[X^x_{ij}]  - \E[X^x_{kl}]) + \sum_{y \in B_2} E[X^y_{ij}] -  \sum_{y \in B_1} E[X^y_{kl}] \\
&\ge \sum_{x \in A} (\E[X^x_{ij}]  - \E[X^x_{kl}]) -  \sum_{y \in B_1} E[X^y_{kl}] \\
&\ge \frac{\tau \sqrt{n \log n}}{144} - 2 \sqrt{n}
\end{align*}

where the last inequality derives from the fact that each $X^y_{kl}$ is a $0-1$ random variable and hence has expectation upper bounded by $1$.

Since both $s_{kl}$ and $s_{ij}$ are each sums of at least $n/16$ independent $0-1$ random variables, we can use the Chernoff bound to conclude that their expectations do not differ from their value by more than $24 \sqrt{n \log n}$ with very high probability. Putting together these inequalities, we conclude that, for large enough $\tau$, we have $s_{kl} < s_{ij}$ with high probability. The proof of this claim assumes that $O(n^2)$ high probability events occur simultaneously, since it suffices that a correct subtree pair beats every wrong subtree pair, of which there are at most $n^2$.

\end{proof}

An immediate corollary of this claim, via an application of the union bound is that the highest scoring pair is in fact a sibling-tree pair. This gives an efficient algorithm that correctly combines subtrees of $T_S$ to form a larger subtree of $T_S$. The proof of correctness of one call to the algorithm $\algsecond$ assumes the simultaneous occurrence of $O(n^3)$ high probability events - since the subroutine to combine subtrees is used
at most $n$ times (from the fact that each such subroutine introduces an internal vertex, and there are most $n$ of them).
\end{proof}

We show how to partition the leaves within a contiguous interval of buckets onto either side of a $\sqrt{n}$ sized subtree in the induced topology on the leaves within this interval.

\begin{lemma}
\label{lemma:partial_order_refinement}
Let $T_B$ be a subtree of $T$ with $|T_B| \ge \sqrt{n}$ and let $I$ be a contiguous interval in the partition of buckets with respect to $T_B$. Let $T_P$ be a subtree of the tree induced on $L(I)$ with between $\sqrt{n}$ and $2\sqrt{n}$ leaves. Then there exists an algorithm $\algthird$ that partitions  $L(I) - L(T_P)$ into 3 sets $P_1, P_2,$ and $P_3$ such that $P_1$ consists of all leaves in lower indexed buckets than the leaves of $T_P$, $P_2$ consists of leaves in the same bucket as $T_P$, and $P_3$ consists of leaves in higher numbered buckets. In the case that $L(T_P)$ comprises leaves from more than one bucket $P_1$ and $P_2$ are empty.

\end{lemma}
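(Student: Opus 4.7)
The plan is to classify each leaf $x \in L(I) - L(T_P)$ by aggregating the answers of many triple experiments that each pair $x$ with one leaf $p \in L(T_P)$ and one leaf $b \in L(T_B)$. In a noiseless world the triple $(x,p,b)$ is easy to interpret: letting $v_i$ denote the branch-off point on the $T_B$-to-root path at which bucket $S_i$ attaches, a direct LCA computation shows that the closest pair is $(x,b)$ when $x$ lies in a strictly lower-indexed bucket than $p$, is $(x,p)$ when $x$ shares a bucket with $p$, and is $(p,b)$ when $x$ lies in a strictly higher-indexed bucket than $p$. Since $T_P$ is a subtree of $T_{L(I)}$, a short structural argument shows that $L(T_P)$ is either contained in a single bucket $S_j$, or equals $S_a \cup \cdots \cup S_j$ for some $j$. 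In the single-bucket case these three noiseless outcomes correspond precisely to membership in $P_1$, $P_2$, $P_3$; in the multi-bucket case every $x \in L(I) - L(T_P)$ must live in a bucket strictly above all the buckets of $L(T_P)$, so the noiseless outcome is consistently $(p,b)$, matching the required $P_1 = P_2 = \emptyset$ and $P_3 = L(I) - L(T_P)$.

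Based on this, $\algthird$ is the natural majority-vote procedure. For each $x \in L(I) - L(T_P)$, consult all $m \coloneqq |L(T_B)| \cdot |L(T_P)| \geq n$ experiments of the form $(x,p,b)$, maintain counters $N_{xb}, N_{xp}, N_{pb}$ tallying the three possible outcomes, and assign $x$ to $P_1$, $P_2$, or $P_3$ according to which counter is largest. Observe that the triples sharing the same $x$ are all distinct, so the permanent-noise assumption makes their outcomes independent across $(p,b)$, and all $O(n^2)$ such experiments are already available in the global batch of $\binom{n}{3}$ queries.

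For correctness, the key observation is that in each of the three scenarios above the two distinct LCA heights appearing in the triple correspond to nodes separated by at least one edge of $T$, so by the minimum-edge-weight hypothesis the distance gap satisfies $d_2 - d_1 \geq 2\tau \sqrt{\log n / n}$. Combined with the anti-Lipschitzness assumption and the symmetric limit $p_{\textsc{correct}}(d,d) = 1/3$, this yields a per-experiment bias $p_{\textsc{correct}} - p_{\textsc{incorrect}} = \Omega(\tau \sqrt{\log n/n})$. Applying Lemma~\ref{lemma:enough_independent_sum} with $L = m \geq n/16$, taking $X_i$ to be the indicator of a fixed incorrect-type outcome and $Y_i$ the indicator of the correct-type outcome for the $i$-th triple, shows that the correct counter exceeds each of the two incorrect counters by $\Omega(\sqrt{n \log n})$ with probability $\geq 1 - 1/n^6$. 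A union bound over the at most $n$ leaves $x$ and the two incorrect outcome types turns this into simultaneous correctness with overwhelmingly high probability.

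The main obstacle I anticipate is the structural bookkeeping rather than the probability theory: one must carefully translate the two possible shapes of $T_P$ (single-bucket versus contiguous prefix of the bucket sequence) into a single uniform classification rule, and verify in every one of the three noiseless scenarios that the two LCA heights involved really are separated by at least one tree edge (rather than silently coinciding), so that the minimum-edge-weight hypothesis actually bites. Once these reductions are in place, the concentration step is a direct appeal to Lemma~\ref{lemma:enough_independent_sum}.
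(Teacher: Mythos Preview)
Your approach is correct but takes a genuinely different route from the paper's. The paper's \textsf{Partition} does \emph{not} use a three-way majority vote: instead, for each $x$ it counts only two of the three possible answers over all $(a,b)\in L(T_B)\times L(T_P)$, namely the number of times $Q(a,b,x)$ returns $(a,x)$ (call it $X^x$) and the number of times it returns $(a,b)$ (call it $Y^x$), and then classifies by threshold: $X^x - Y^x > 24\sqrt{n\log n}$ gives $P_1$, $Y^x - X^x > 24\sqrt{n\log n}$ gives $P_3$, and $|X^x - Y^x|\le 24\sqrt{n\log n}$ gives $P_2$. The point is that when $x$ is in the same bucket as $T_P$ the true closest pair is $(b,x)$, so both $(a,x)$ and $(a,b)$ are \emph{incorrect} answers and hence equiprobable by the symmetry assumption of the model; the second part of Lemma~\ref{lemma:enough_independent_sum} then gives near-equality of $X^x$ and $Y^x$. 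Your rule instead looks at all three counters and picks the largest, which directly exploits $p_{\textsc{correct}} - p_{\textsc{incorrect}} \ge 2\tau\sqrt{\log n/n}$ in every case (including $P_2$, where your ``correct'' counter $N_{xp}$ wins outright). Your version is arguably more transparent and avoids the threshold constant; the paper's version is a bit more delicate but showcases the equal-incorrect-probability feature of the model. Both reduce to Lemma~\ref{lemma:enough_independent_sum} with $L \ge n$, and both handle the multi-bucket case via the same structural observation that $T_P$ must then be a prefix of the bucket interval, forcing $P_1=P_2=\emptyset$.
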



\begin{proof}
When $P_2$ is non empty, all leaves in $T_P$ are equidistant from $T_B$, implying that $P_2$ must be contained within a single bucket $S_i$.

We describe the algorithm $\algthird$ that does the partitioning. For each leaf $x \in L - (T_B \cup T_P)$, introduce $0-1$ random variables $X^x_{ab}$ and $Y^x_{ab}$ for each $a \in T_B,  b\in T_P$. Each such leaf $x$ has $2n$ such associated random variables, since $|T_B|, |T_P| \ge \sqrt{n}$. $X^x_{ab}$ is set to $1$ if $Q(a,b,x)$ is $(a,x)$ and $0$ otherwise. $Y^x_{ab}$ is set to $1$ if $Q(a,b,x)$ is $(a,b)$ and $0$ otherwise. Let $X^x = \sum_{a \in T_B, b \in T_P} X^x_{ab}$ and $Y^x = \sum_{a \in T_B, b \in T_P} Y^x_{ab}$. If $X^x - Y^x > 24 \sqrt{n \log n}$, then, $x$ is placed in set $P'_1$. If If $Y^x - X^x > 24 \sqrt{n \log n}$, then, $x$ is placed in set $P'_3$. Otherwise if $|Y^x - X^x| \le 24 \sqrt{n \log n}$, $x$ is placed in $P'_2$. We claim that for $i \in [3], P'_i = P_i$ with high probability.


The first case is $x \in P_1$, we know that $d(a,b) \ge d(a,x) + 2 \tau \fs$, for all $a \in T_B, b \in T_P$, by the definition of $S_1$ and the minimum weight condition. Using Observation~\ref{obs:advantage}, we conclude that $\E[X^x_{ab}] - \E[Y^x_{ab}] \ge \frac{\tau \sqrt{\log n}}{3 \sqrt{n}}$.  Using Lemma~\ref{lemma:enough_independent_sum}, we conclude that the $X^x - Y^x > 24 \sqrt{n \log n}$, with high probability.


Next, consider $x \in P_3$, we know that $d(a,x) \ge d(a,b) + 2 \tau \fs$, for all $a \in T_B, b \in T_P$, by the definition of $S_1$ and the minimum weight condition (Observation~\ref{obs:advantage}). Through similar analysis as the previous case, we conclude $x$ is correctly placed in $P'_3$ whp.

Finally, we see the case of $x \in P_2$. We know that $d(a,x) = d(a,b) > d(b,x)$  for all $a \in T_B, b \in T_P$, by the definition of $S_2$.Thus, $X^x_{ab}$ and $Y^x_{ab}$ are identical random variables, since the incorrect answers to any query appear with equal probability. 
Using the second part of Lemma~\ref{lemma:enough_independent_sum}, we get the desired result about the comparison of $X^x$ and $Y^x$, with high probability.
Thus, $x$ is correctly placed in $P'_2$ whp.

In total, we need $O(n)$ high probability events to happen simultaneously (one for partitioning each leaf) in this proof of correctness.

We index the buckets of the contiguous interval $I$ as $S_{j_1}, S_{j_1 +1}, \cdots$. For the final part of the claim, we analyze the case where $T_P$ consists of leaves from more than one bucket.  Let us index these buckets as $S_{i_1}, S_{i_1+1}, \cdots$ where $i_1 \ge j_1$. Clearly, $T_P$ must consist of all leaves in an interval of buckets, since leaves within a bucket are always closer to each other than leaves in another bucket (recall that the leaves of each bucket are the leaves of a subtree of $T$, Ref~Fig~\ref{fig:total_order}). Due to this property, $P_2$ must be empty. Recall that the set of leaves in $T_B, S_1, S_2 \cdots S_i$ forms a tree for any index $i$. Consequently, every leaf $l \in L(S_i)$ is closer to any leaf in $T_B, S_1, S_2, \cdots S_{i-1}$ than to a leaf in $S_j$ with $j > i$. Now, consider the case in which $T_P$ consists of leaves contained in a union of buckets - $S_{i_1} ,S_{i_1 + 1}  \cdots S_{i_2}$ where $i_1 > j_1$. The leaves of $T_P$ cannot be a set of leaves of a tree in the induced topology of $L(I)$, since any leaf in $S_{i_1}$ is closer to a leaf in $S_{j_1}$ (which is part of the set $L(I)$) than to a leaf in $S_{i_2}$. We have a contradiction, invalidating the assumption $i_1 > j_1$. Thus, we must have $i_1 =j_1 $, implying that $P_1$ is also empty. 

\end{proof}

We now have all the details in place for a full description of the algorithm $\algoverall$. The algorithm creates a base tree $T_B$ using $\algsecond$ (Lemma~\ref{lemma:build_small_trees}) at the beginning.
We use the same technique to build a subtree $T_P$  of $L(T) - L(T_B)$, also  with $\Theta (\sqrt{n})$ leaves.

For any two leaves $x, y$  not in $L(T_B) \cup L(T_P)$, we will say that $x \preceq_{T_B} y$ iff  $x$'s bucket has index less than or equal to $y$'s.
$x$ and $y$ are in the same bucket iff $x \preceq_{T_B} y$ and $y \preceq_{T_B} x$.  (When the base tree is clear from the context, we will drop the subscript on
the relation symbol.) Using $T_B$ and $T_P$ and the technique in Lemma~\ref{lemma:partial_order_refinement}, we determine the $\preceq$ relationship for all pairs
$x,y \in L - (L(T_B) \cup L(T_P))$. Based on this, we partition these leaves into $P_1, P_2,$ and $P_3$ using $\algthird$( as described in Lemma~\ref{lemma:partial_order_refinement}).

Since $T_P$ is a subtree of the tree induced on $L - L(T_B)$, there are two possibilities for what $T_P$ looks like with regard to the partition into buckets based on $T_B$.
Either $T_P$ is entirely contained within one of these buckets, or it is the union of an initial interval of buckets.  In the latter case, $P_1$ and $P_2$ are empty.

If the number of leaves in any of the 3 parts is less than $\frac{11n}{12}$ we will say that that part is small. 

\begin{claim}
\label{claim:finish_the_job}
The topology of a small part can be resolved completely with high probability using the algorithm $\algfirst$ (from Lemma~\ref{lemma:large_enough}).  
\end{claim}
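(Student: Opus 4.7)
The plan is to handle each of the three parts $P_1, P_2, P_3$ by reducing the recovery of its induced topology to a single invocation of $\algfirst$ via Lemma~\ref{lemma:large_enough}. The key observation is that in each case we can exhibit a subtree $T'$ of $T$ whose leaf set is either (i) the small part augmented by the leaves of $T_B$ or $T_P$ (whose topologies are already known), or (ii) the complement of the small part within $L$. In case (i) we apply part 2 of Lemma~\ref{lemma:large_enough} to recover the induced topology on $L(T')$ and then restrict to the small part; in case (ii) we apply part 1 to recover the quotient $T/T'$ and strip off the collapsed leaf. For concreteness, let $u_j$ denote the $j$-th ancestor of the root of $T_B$; in the case that $T_P$ lies in a single bucket $S_i$, the subtree of $T$ rooted at $u_j$ has leaf set $L(T_B) \cup S_1 \cup \cdots \cup S_j$, while $S_i$ itself is the leaf set of the subtree rooted at the sibling of $u_{i-1}$.

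If $P_1$ is small, take $T'$ rooted at $u_{i-1}$, so $L(T') = L(T_B) \cup P_1$ has size at most $|P_1| + 2\sqrt{n}$; part 2 of Lemma~\ref{lemma:large_enough} yields the induced topology on $L(T')$, and deleting the leaves of $T_B$ (whose topology was determined by the run of $\algsecond$ that built $T_B$) while contracting the resulting degree-$2$ nodes gives the induced topology on $P_1$. If $P_2$ is small, take $T'$ to be the subtree with $L(T') = S_i = P_2 \cup L(T_P)$, also of size at most $|P_2| + 2\sqrt{n}$; again apply part 2 of the lemma, then remove $L(T_P)$ (whose topology was determined when building $T_P$). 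If $P_3$ is small, take $T'$ rooted at $u_i$ (or at $u_j$ when $T_P$ spans $S_1, \ldots, S_j$), so $L(T') = L \setminus P_3$ has size $n - |P_3| > \tfrac{n}{12}$; apply part 1 of Lemma~\ref{lemma:large_enough} to obtain $T/T'$, whose leaves are $P_3 \cup \{\star\}$ with $\star$ the collapsed copy of $T'$, and delete $\star$ (contracting any resulting degree-$2$ node) to recover the induced topology on $P_3$.

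The only technical wrinkle is that in the first two cases the leaf set fed to $\algfirst$ can exceed $\tfrac{11n}{12}$ by up to $2\sqrt{n}$, since the leaves of $T_B$ or $T_P$ are appended on top of the small part. This slack is easily absorbed either by tightening the threshold in the definition of \emph{small} to $\tfrac{11n}{12} - 2\sqrt{n}$, or by noting that the bound in Lemma~\ref{lemma:large_enough} tolerates an additive $o(n)$ error (its proof relies only on the subtree and its complement each containing $\Omega(n)$ leaves, which remains true after the perturbation). Union-bounding over the at most three invocations of $\algfirst$ preserves the $1/\mathrm{poly}(n)$ failure probability, so each small part's topology is recovered with high probability, establishing the claim.
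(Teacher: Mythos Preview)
Your argument is correct and follows the same strategy as the paper: for each small part, exhibit a subtree of $T$ whose leaf count puts you in range of one of the two parts of Lemma~\ref{lemma:large_enough}. The only substantive difference is in the $P_2$ case: you observe directly that $S_i = P_2 \cup L(T_P)$ is itself a subtree of $T$ and invoke part~2 once, whereas the paper instead does a case split on whether $P_1$ or $P_3$ is large (using the large neighbor to place $P_2$ either outside a large subtree or inside a small one); your route is cleaner and sidesteps that case analysis. Your remark about the $2\sqrt{n}$ slack is also accurate and is a detail the paper leaves implicit.
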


\begin{proof} 
If $P_1$ is small, noting that $T_B \cup P_1$ form a subtree and appealing to Lemma~\ref{lemma:large_enough}, the topology of $P_1$ is inferred using $\algfirst$. If $P_2$ is small, there are two cases: if $P_1$ is large, an appeal to Lemma~\ref{lemma:large_enough} suffices, noting that $P_2$ lies outside of a large subtree. Else, $P_3$ must be large, and once again, we appeal to Lemma~\ref{lemma:large_enough}, but this time using the fact that $P_2$ is part of a small subtree. Finally, when $P_3$ is small, it is very easy to see how the same Lemma implies that the topology of $P_3$ can be resolved.
\end{proof}

Since there are at least two small parts in the partition, at most one part will be unresolved at the end of this process. Let us call this part $R$. 
In the case that $T_P$ did not lie entirely within one bucket,
$R$ will be $P_3$ (the only non-empty part). 

We now explain how our algorithm recursively resolves the topology of this contiguous interval.
If the pivot tree $T_P$ in the current step lies within $R$ (we must be in the case where $T_P$ lies entirely in one bucket, and the unresolved part is this bucket), then we recurse on $R \, / T_P$,  the quotient of $R$ with respect to $T_P$. Since $T_P$ has at most $2 \sqrt{n}$ leaves, the number of nodes in the quotient will still be large.
This recursive call finds a new base tree, a partition with respect to this tree, and so on. If, on the other hand, $T_P$ lies outside of the unresolved part, we recurse on the unresolved part, keeping the same base tree and finding a pivot tree within this part. We will maintain the invariant that $R$ is a contiguous interval of buckets with respect to the current base tree.  If $|R|$  is less than $11n/12$, we can reconstruct it using Claim~\ref{claim:finish_the_job}.

In the case that we change the base, $R$ is an entire bucket, and hence a subtree of the overall tree. Again if $R$ becomes small enough, we can reconstruct it directly by appealing to Lemma~\ref{lemma:large_enough}.

Thus, we will be able to complete the reconstruction of the whole tree. One final subtle point that the reader may wonder about: Could we lose enough nodes from
the quotient-finding operations that $R$ becomes small while the number of nodes outside of $R$ is also small? A careful accounting shows that this does not happen: Each time we compute a quotient and remove at most $2 \sqrt{n}$ leaves, we have also created a new base tree with at least $\sqrt{n}$ leaves, which lies outside and to the left of $R$. Thus if $R$ becomes small due to the quotient operation, the number of leaves on the left side must be large enough to allow reconstruction of $R$.


To conclude the proof of Theorem~\ref{theorem:topology}, we argue that the algorithm $\algoverall$ succeeds with high probability. The key observation is that we use the various subroutines $\algfirst$, $\algsecond$ and $\algthird$ at most $O(\sqrt{n})$ times each. This is because we reduce the size of the unresolved part by $\sqrt{n}$ using only a constant number of calls to these subroutines. Each of them assume at most $O(n^3)$ high probability events to be simultaneously true to succeed, implying that we only need a total of $O(n^{7/2})$ high probability events to all be true for the algorithm to correctly recover the topology of the tree. Since each of them occurs with probability at least $1- \frac{1}{n^6}$, an application of the union bound gives us the desired result. Further, since each subroutine runs in polynomial time, the overall algorithm is also efficient.

\section{Weight Reconstruction}
\label{sec:weight}
In the previous section, we gave an algorithm to reconstruct the topology of the tree in the general model. In this section, we will show how to approximately reconstruct the edge weights in the homogeneous model. The precise theorem is stated below (Theorem~\ref{theorem:weights}). Assuming that each root to leaf path has unit weight, our algorithm can reconstruct the tree as long as each edge has weight at least $\tau \cdot \log n / \sqrt{n}$. Note that the condition required here is stronger than Theorem~\ref{theorem:topology} where it suffices that each edge weight is $\Omega(\sqrt{\log n / n})$. 
\begin{theorem}~\label{theorem:weights}
Let $T$ be a weighted full binary tree tree such that the weights induce an ultrametric on the distances between leaves and that the weight on each edge is at least $\tau  \frac{\log n}{\sqrt{n}}$, for some large constant $\tau$. Then there exists an algorithm $\algweights$, which given access to the \textbf{homogeneous model} on the leaves of this tree,  reconstructs the weight of each edge with high probability within an additive error $\kappa \frac{\log n}{\sqrt{n}}$, where $\tau \gg \kappa$).
\end{theorem}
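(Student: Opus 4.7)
The plan is to first invoke Theorem~\ref{theorem:topology} on $T$; since $\tau \log n/\sqrt{n} = \omega(\sqrt{\log n/n})$, the topology is exactly reconstructed with high probability by $\algoverall$. Once the topology is known, reconstructing each edge weight reduces to estimating the height $h(u)$ of every internal node $u$: the weight of the edge from $u$ to its parent $\mathrm{par}(u)$ is just $h(\mathrm{par}(u)) - h(u)$, and the root has $h(\mathrm{root}) = 1$ by normalization.

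The key analytical handle comes directly from the homogeneous model: for any triple $(a,b,c)$ with $u = \mathrm{lca}(a,b)$ and $v = \mathrm{lca}(a,b,c)$, the ultrametric gives $d(a,b) = 2h(u)$ and $d(a,c) = d(b,c) = 2h(v)$, so
\[
\Pr[\Mh(a,b,c) = (a,b)] \;=\; \frac{h(v)}{h(u) + 2h(v)} \;=:\; p_{u,v},
\]
which rearranges to the clean identity $h(u)/h(v) = 1/p_{u,v} - 2$. For each such pair $(u,v)$, I would aggregate the outcomes on the set $T(u,v) := \{(a,b,c) : \mathrm{lca}(a,b) = u,\ \mathrm{lca}(a,b,c) = v\}$. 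These outcomes are independent Bernoulli trials with identical success probability $p_{u,v}$, so by Chernoff $|\widehat{p}_{u,v} - p_{u,v}| = O(\sqrt{\log n/|T(u,v)|})$ whp, and the induced ratio estimator $\widehat{q}_{u,v}$ matches $h(u)/h(v)$ to the same order of accuracy (since $p \in [1/3, 1/2]$ the derivative $\partial q/\partial p = -1/p^2$ is a bounded constant).

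The reconstruction would then proceed top-down, setting $\widehat{h}(\mathrm{root}) := 1$. For each non-root internal node $u$, I pick an already-processed ancestor $v^{\ast}(u)$ and set
\[
\widehat{h}(u) \;:=\; \widehat{h}(v^{\ast}(u)) \cdot \left(\frac{1}{\widehat{p}_{u,v^{\ast}(u)}} - 2\right).
\]
The error then decomposes as
\[
|\widehat{h}(u) - h(u)| \;\lesssim\; h(v^{\ast}) \cdot \frac{|\widehat{p}-p|}{p^{2}} \;+\; \frac{h(u)}{h(v^{\ast})} \cdot |\widehat{h}(v^{\ast}) - h(v^{\ast})|,
\]
a sampling term controlled by $|T(u,v^{\ast})|$ plus a propagation term contracted by the factor $h(u)/h(v^{\ast}) \leq 1$.

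The main obstacle, and what I expect to be the heart of the proof, is jointly choosing $v^{\ast}(u)$ for every $u$ so that both error terms stay below $O(\log n/\sqrt{n})$. The edge-weight assumption forces the depth of $T$ to be $O(\sqrt{n}/\log n)$, so the ideal choice $v^{\ast} = \mathrm{root}$ (which incurs zero propagation) works whenever $|T(u, \mathrm{root})| = |L(u_L)| \cdot |L(u_R)| \cdot (n - |L(R_u)|) \gtrsim n/\log n$, where $R_u$ is the root's child containing $u$. When this fails the root must be very unbalanced ($|L(R_u)| \approx n$), and I would recursively descend: first estimate the root of $R_u$ using plentiful triples involving the small opposite side, then use that node as the reference for the nodes inside $R_u$. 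Since each such recursion step strips off an opposite subtree of size at most $O(n/\log n)$, this chain of local references has length at most $O(\log n)$; combined with the contractivity in the propagation term and the $\tau \gg \kappa$ slack, a careful error budget should yield the desired $\kappa \log n/\sqrt{n}$ additive accuracy per edge after a union bound over all $O(n)$ estimates.
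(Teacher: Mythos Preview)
Your setup---reconstructing the topology first via Theorem~\ref{theorem:topology}, then exploiting the identity $h(u)/h(v)=1/p_{u,v}-2$ together with the sampling/propagation error decomposition---is exactly what the paper does in Lemmas~\ref{lem:compute-vertex-lv}--\ref{lem:reconstruct-height-left-subtree}, and your observation that $|T(u,\mathrm{root})|<n/\log n$ forces the root to be unbalanced is correct.

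The gap is in the hard case. Your claim that the chain of references has length $O(\log n)$ is not justified: you descend along the path of ancestors of $u$, and while each opposite subtree you strip has size at most $n/\log n$, nothing prevents every one of them from having size $O(1)$, yielding a chain of length $\Theta(\sqrt{n}/\log n)$ (the maximum depth allowed by the edge-weight assumption). More fundamentally, take a node $u$ with $|L(u_L)|\cdot|L(u_R)|=O(1)$ sitting below a rightmost path of $\Theta(\sqrt{n}/\log n)$ heavy vertices, each with a left subtree of size $\Theta(\sqrt{n}\log n)$. Then \emph{no single ancestor} $v$ gives $|T(u,v)|\gtrsim n/\log n$; the best single-anchor sampling error is $\Theta(n^{-1/4})$. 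Contractivity does not rescue the accumulation either: summing the damped sampling errors $\sum_i (h(u)/h(v_i))\cdot O(\sqrt{\log n/n})$ along such a chain is $\Theta(1/\sqrt{\log n})$ in the worst case, far from $O(\log n/\sqrt{n})$. The paper's resolution (Lemma~\ref{lemma:tricky_weights}) is to anchor $u$ to \emph{all} heavy ancestors at once: since the $p_i$ differ across anchors one cannot average them naively, so one instead estimates the weighted mean $Q=\sum_i |C_{t_i}|p_i/\sum_i |C_{t_i}|$ to accuracy $O(\log n/\sqrt{n})$ via generalized Hoeffding over the full $\Omega(n)$ samples, and then recovers $\widehat h(u)$ by numerically inverting the monotone map $a\mapsto \sum_i |C_{t_i}|\,\widehat h(v_{t_i})/(2\widehat h(v_{t_i})+a)$ at $\widehat Q$, with a careful partial-derivative comparison to show this inversion is stable.
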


 We explain the high-level strategy for procedure {\sf Tree-reconstruct-weight}. Instead of estimating the weight of each edge, we will give a procedure which for any vertex $v$, will estimate the weight of the path from root to $v$ up to an additive $\kappa \log n/ 2\sqrt{n}$. This trivially implies reconstruction of the weight of each edge up to $ \pm \kappa \log n/ \sqrt{n}$. Our algorithm assumes that $\tau \gg \kappa$. 

Since the homogeneous model is a special case of the general model (with $\epsilon = 1/6$) and the edge weights satisfy the condition required by Theorem~\ref{theorem:topology} on the edge weights, we can first run $\algoverall$ to reconstruct the topology of the tree (with high probability). The rest of the proof assumes we have access to the topology of the tree. The main workhorse for weight reconstruction is the idea that if we have two leaves $a$ and $b$ that lie in the left subtree of some node $v$, and the right subtree of $v$ has $\Omega (n)$ leaves, then we can  get a good approximation to the height of the least common ancestor of $a$ and $b$. A related idea is that if we have an internal node $v$ such that the product of the number of leaves in its left and right subtrees is $\Omega (n)$, then we can also get a good approximation to the height of $v$.
This leaves  the case of nodes $v$ for which neither of these properties is true and much of the technical difficulty of our algorithm is in handling such nodes (Lemma \ref{lemma:tricky_weights}). For such a node we get several coarse approximations of its height, which need to be combined carefully because of subtle dependence between these approximations.

We start by fixing some notation. Let $\NL(v)$ denote the number of 
leaves in the sub-tree rooted at vertex $v$. We will adopt the convention that the children of each node are ordered so that for any node $v$ with  right child $\rc$ and left child $\lc$, $\NL(\rc) \ge \NL(\lc)$. Further,  without loss of generality, we can also assume that the binary tree is full -- otherwise, if an internal node $v$ has exactly one child, we can collapse the two edges incident on $v$ into one. This modification does not affect the output probabilities associated with any triple of leaves. Finally, for any vertex $v$, we define $h_v$ as the ``height" of $v$ -- i.e., the total weight of any path from $v$ to a leaf in its subtree. Because the distance on the tree is an ultrametric, the choice of the leaf is immaterial. 

\begin{definition}
	A vertex $v$ is said to be {\em heavy} if $\NL(v) \ge \alpha n +1$ where $\alpha = \frac{1}{6}$. Otherwise, the  vertex is said to be {\em light}. By definition, the root $r$ is a heavy vertex.
\end{definition}

Next, given a rooted tree as above, we identify a distinguished vertex $v_f$ as follows. Let $P$ be the rightmost path from the root to a leaf . Starting from the root $r$, let the vertices in $P$ (in order) be $r$, $v_1 , \ldots, v_{\ell^\ast}$ where $v_{\ell^\ast}$ is the rightmost leaf. We define $f$ to be maximum index such that $v_f$ is a heavy vertex. 


Consider the root $r$, with left child $\lc_r$ and right child $\rc_r$. Observe that by our convention, $\rc_r$ is a heavy vertex. We now give a procedure to (approximately) compute $h_v$ for any vertex $v$ in the subtree rooted at $\lc_r$. 

\begin{lemma}~\label{lem:compute-vertex-lv}
	There is a procedure \textsf{Compute-light-tree} which 
	with high probability, computes $h_v$ for all vertices $v$ in the tree rooted at $\lc_r$ with accuracy $\Theta (\sqrt{\log n/n})$. 
\end{lemma}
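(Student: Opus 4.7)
The plan is to exploit the very clean algebraic form of the homogeneous model when the ``reference'' leaf $c$ lies on the opposite side of the root from the pair being probed. Concretely, since the topology has already been reconstructed via $\algoverall$ (the edge weights satisfy the hypothesis of Theorem~\ref{theorem:topology}), for every internal vertex $v$ in the subtree rooted at $\lc_r$ I can identify two leaves $a_v, b_v$ in $L$ whose least common ancestor is exactly $v$. Every leaf $c$ in the subtree rooted at $\rc_r$ then satisfies $d(a_v, c) = d(b_v, c) = 2h_r = 2$, since the ultrametric is normalised to have root-leaf paths of length $1$. Plugging into the homogeneous-model formula gives
\[
\Pr[\Mh(a_v, b_v, c) = (a_v, b_v)] \;=\; \frac{2 + 2}{2(2h_v + 2 + 2)} \;=\; \frac{1}{h_v + 2},
\]
which is a one-to-one function of $h_v$ on $[0,1]$.

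The procedure \textsf{Compute-light-tree} will, for every vertex $v$ in the subtree rooted at $\lc_r$, form the empirical average
\[
\hat p_v \;=\; \frac{1}{\NL(\rc_r)} \sum_{c \in L(\rc_r)} \mathbf{1}[Q(a_v,b_v,c) = (a_v,b_v)]
\]
and output $\h_v := 1/\hat p_v - 2$. Leaves are assigned $\h_v = 0$. Since $\NL(\rc_r) \ge n/2$ (the root is heavy and $\rc_r$ is the larger child) and the $\NL(\rc_r)$ queries involved in $\hat p_v$ use pairwise-distinct triples (the answers are therefore independent), a Chernoff bound yields $|\hat p_v - \E[\hat p_v]| = O(\sqrt{\log n/n})$ with probability at least $1 - 1/n^6$.

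To turn this into an additive error on $h_v$, I will invert the map $p \mapsto 1/p - 2$. Since $h_v \le 1$, the true probability $p = 1/(h_v+2)$ satisfies $p \ge 1/3$, so the derivative $|(1/p)'| = 1/p^2$ is bounded by $9$ throughout the relevant range; provided $\hat p_v$ is within $O(\sqrt{\log n/n})$ of $p$ (which in particular keeps it bounded away from $0$), the induced error on $\h_v$ is also $O(\sqrt{\log n/n})$, as required. A union bound over the at most $2n$ vertices of the subtree rooted at $\lc_r$ costs only a factor polynomial in $n$, and the failure probability $1/n^6$ per vertex absorbs it comfortably.

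There is no genuine obstacle here: the only things to verify are (i) that the right subtree is large enough to drive the Chernoff concentration, which is guaranteed by $\NL(\rc_r) \ge n/2$, and (ii) that the function $p \mapsto 1/p - 2$ is well-conditioned on the relevant interval, which follows from $h_v \in [0,1]$. The harder cases --- namely vertices $v$ for which no such ``large opposite subtree'' is available, such as the distinguished path $v_1, \ldots, v_f$ and vertices hanging off it on the heavy side --- are not handled by this lemma and are the subject of the later, more delicate argument (Lemma~\ref{lemma:tricky_weights}) hinted at in the preamble.
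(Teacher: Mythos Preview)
Your proposal is correct and follows essentially the same approach as the paper: pick leaves $a,b$ with least common ancestor $v$, average the indicators of $Q(a,b,c)=(a,b)$ over all $c$ in the right subtree of the root to estimate $1/(2+h_v)$, apply Hoeffding, and invert. The only cosmetic differences are that the paper invokes $\NL(\rc_r)\ge \alpha n$ from heaviness (whereas you use the sharper $\NL(\rc_r)\ge n/2$ from the child-ordering convention) and carries out the inversion by the algebraic identity $|h_v-h'_v|\le |\hat p - p|(2+h_v)(2+h'_v)$ rather than your derivative bound; both amount to the same Lipschitz estimate.
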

\begin{proof}
	Consider any non-leaf vertex $v$ in the subtree of $\lc_r$. Since $v$ is a non-leaf vertex,  there is a leaf $a$ in its left subtree and $b$ in the right subtree. Additionally, since $\rc_r$ is a heavy vertex, there are at least $\alpha n$ leaves under it - indexed by $\{c_i\}_{i=1}^k$ where $k \ge \alpha n$ . Let $\bX_i$ be the indicator random variable that query $Q(a,b,c_i)$ returns $(a,b)$. Each $\bX_i$ is an independent (and in fact identically distributed) random variable such that 
	\[
	\Pr[\bX_i =1] = \frac{d(a,c_i) +d(b,c_i)}{2(d(a,b)+d(a,c_i) +d(b,c_i))}=\frac{4}{2(4+2h_v)} =\frac{1}{2+h_v}.
	\]
	Let us define $\bA  \coloneqq (1/k) \cdot (\sum_{i=1}^k \bX_i)$. Define $h'_v$ to be such that $\bA = \frac{1}{2+h'_v}$. Now, by Hoeffding's inequality (Theorem~\ref{theorem:hoeffding}), we get that with \ohp, 
	\[ \bigg|\bA - \frac{1}{2 + h_v} \bigg|\le \kappa_2 \sqrt{\frac{\log n}{n}}. \] 
	Thus
	\[
	\bigg|\frac{1}{2 + h'_v} - \frac{1}{2 + h_v} \bigg|\le \kappa_2 \sqrt{\frac{\log n}{n}}.
	\]
	This means that $$|h_v-h'_v| \le \kappa_2 \sqrt{\frac{\log n}{n}} (2 + h_v)(2 + h'_v).$$ 
	Note that $h_v \le 1$. Because  $\big|\frac{1}{2 + h'_v} - \frac{1}{2 + h_v} \big|\le 1/6$, it follows that $h'_v \le 4$. Thus, it follows that $|h_v-h'_v| \le \Theta (\sqrt{\log n/n})$. 
\end{proof}

Recall that we had define $P$ as the rightmost path in the tree with vertices $\{v_i\}_{i=1}^{\ell^\ast}$ such that $v_f$ is the {\em last} heavy vertex in the sequence. Next, we have the following lemma. 
\begin{lemma}~\label{lem:reconstruct-weight-rightpath}
	There is a procedure \textsf{Reconstruct-right-path} which with high probability computes $h_v$ up to $\pm \Theta (\sqrt{\log n /n})$ for any $v_\ell$ in the path $P$ where $\ell \le f$. 
\end{lemma}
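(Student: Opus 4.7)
The plan is to adapt the estimator from Lemma~\ref{lem:compute-vertex-lv} to the right-path vertices. Fix $\ell \in \{1, \ldots, f\}$. Since $v_f$ is heavy and heaviness is monotone up $P$ (because $\NL(v_\ell)=\NL(\lc_{v_\ell})+\NL(v_{\ell+1}) \ge \NL(v_{\ell+1})$), the vertex $v_\ell$ is heavy; in particular it is internal with non-empty child subtrees rooted at $\lc_{v_\ell}$ and $v_{\ell+1}$, and $\lc_r$ is non-empty because $r$ is internal. I would then range over every triple $(a,b,c)$ with $a \in L(\lc_{v_\ell})$, $b \in L(v_{\ell+1})$, and $c \in L(\lc_r)$. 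Since these three sets are pairwise disjoint, the induced unordered triples are all distinct, and the model guarantees that the corresponding queries are mutually independent.

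For such a triple, the LCA of $(a,b)$ is $v_\ell$ while the LCA of each of $(a,c)$ and $(b,c)$ is $r$, which has height $1$. Plugging $d(a,b)=2h_{v_\ell}$ and $d(a,c)=d(b,c)=2$ into the homogeneous-model formula gives
\[
\Pr[Q(a,b,c)=(a,b)] \;=\; \frac{2+2}{2(2h_{v_\ell}+2+2)} \;=\; \frac{1}{h_{v_\ell}+2}.
\]
Letting $\bA$ be the empirical fraction of the $N := \NL(\lc_{v_\ell})\cdot\NL(v_{\ell+1})\cdot\NL(\lc_r)$ queries that return $(a,b)$, Hoeffding's inequality (Theorem~\ref{theorem:hoeffding}) yields $|\bA - 1/(h_{v_\ell}+2)| \le O(\sqrt{\log n / N})$ with \ohp. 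Setting $h'_{v_\ell} := 1/\bA - 2$ and inverting exactly as in the proof of Lemma~\ref{lem:compute-vertex-lv} (using $h_{v_\ell}\in[0,1]$ to bound $h'_{v_\ell}\le 4$) converts the error on $\bA$ into an additive error $|h_{v_\ell}-h'_{v_\ell}| = O(\sqrt{\log n/N})$.

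The crucial quantitative step is to show $N=\Omega(n)$. From heaviness of $v_\ell$, $\NL(\lc_{v_\ell})+\NL(v_{\ell+1}) \ge \alpha n+1$, and the ordering convention $\NL(v_{\ell+1}) \ge \NL(\lc_{v_\ell})\ge 1$ forces $\NL(v_{\ell+1})\ge(\alpha n+1)/2$. With $\NL(\lc_r)\ge 1$, this gives $N \ge (\alpha n+1)/2=\Omega(n)$, producing accuracy $\Theta(\sqrt{\log n/n})$ as required. A union bound over the at most $n$ indices $\ell\in\{1,\ldots,f\}$ (each failure event of probability $\le 1/n^6$) completes the argument. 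I expect the main subtlety to be precisely this counting step: one cannot rely solely on having many ``outside witnesses'' as in Lemma~\ref{lem:compute-vertex-lv}, since $\NL(\lc_r)$ may be as small as $1$; the argument goes through only because the triple count is a product of three factors, and heaviness of $v_\ell$ alone contributes an $\Omega(n)$ factor via $\NL(\lc_{v_\ell})\cdot\NL(v_{\ell+1})$, so even a single outside witness is enough.
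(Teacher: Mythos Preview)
Your proof is correct and follows essentially the same approach as the paper. The paper's proof fixes a single leaf $c\in L(\lc_r)$ and ranges over all pairs $(a_i,b_i)$ across the two subtrees of $v_\ell$, using the inequality $s_\ell\cdot s_r\ge s_\ell+s_r-1\ge\alpha n$ to count queries; you additionally let $c$ vary over $L(\lc_r)$ and instead invoke the ordering convention to get $\NL(v_{\ell+1})\ge(\alpha n+1)/2$ directly, but these are cosmetic differences leading to the same $\Omega(n)$ independent Bernoulli trials with mean $1/(2+h_{v_\ell})$ and the same Hoeffding-and-invert step.
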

\begin{proof}
	Consider vertex $v_\ell$, since $\ell \le f$, we know that $\NL(v_\ell) \ge \alpha n +1$. Let the number of leaves in the left (resp. right) subtree of $v_\ell$ be $s_\ell$ (resp. $s_r$). We have $s_\ell + s_r \ge \alpha n +1$. It immediately follows that the number of pairs of the form $(a_i, b_i)$ where $a_i$ is in the left subtree and $b_i$ is in the right subtree is at least $\alpha n$. Now, let $c$ be a leaf in the left subtree of the root $r$ -- we know that there exists at least one. 
	
	Let us now define $\bX_i$ to be the indicator variable that the query on triple $(a_i,b_i,c)$ returns $(a_i,b_i)$. Observe that each $\bX_i$ is an i.i.d.~Bernoulli random variable such that $\Pr[\bX_i =1] = \frac{1}{2+ h_{v_\ell}}$. Now repeating the same calculation as done at
	end of Lemma~\ref{lem:compute-vertex-lv}, we obtain an estimate $h'_v$ such that with \ohp, $|h'_v - h_v| = \Theta(\sqrt{\log n/n})$.


\end{proof}







\begin{lemma}
	\label{lemma:anchoring}
	Let $v$ be an internal vertex of the tree and let $v_\ell$ be a heavy vertex on path $P$ such that (i) $v_\ell$ is an ancestor of vertex $v$; (ii) There are $k \ge 100$ leaves in the sub-tree of $v_\ell$ that does not contain $v$. Let $a$ and $b$ be leaves in distinct subtrees of $v$ and $c$ a  vertex in the subtree of $v_l$ not containing $v$. Let $p$ be the probability that a query on the triple $(a,b,c)$ returns $(a,b)$. Note that $p$ is independent of the particular choices of $a,b,$ and $c$ as constrained above. Then, there is an algorithm that outputs an estimate $\hat{p}$ such that 
	\begin{enumerate}
		\item $|\hat{p} - p| = O\big(\sqrt{\log k /k}\big)$. \item $\hat{p}$ is an unbiased estimator of $p$. 
	\end{enumerate}
	
	We say that vertex $v$ has been anchored to vertex $v_l$ to obtain this estimate.
\end{lemma}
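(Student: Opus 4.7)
My plan is to first argue that $p$ is well-defined (i.e., independent of the particular choices of $a$, $b$, $c$), and then exhibit the natural empirical-mean estimator and analyze it using Hoeffding's inequality.

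For the well-definedness, I would unpack the homogeneous-model probability. Since $a$ and $b$ lie in distinct subtrees of the internal vertex $v$, the LCA of $a$ and $b$ is $v$, so by the ultrametric property $d(a,b) = 2h_v$ regardless of which $a$ and $b$ are chosen. Since $v_\ell$ is an ancestor of $v$ and $c$ lies in the subtree of $v_\ell$ not containing $v$, the LCAs of $(a,c)$ and of $(b,c)$ both coincide with $v_\ell$, giving $d(a,c)=d(b,c)=2h_{v_\ell}$. Substituting into the defining formula of $\Mh$ yields
\[
p \;=\; \frac{2h_{v_\ell}+2h_{v_\ell}}{2\bigl(2h_v + 2h_{v_\ell}+2h_{v_\ell}\bigr)} \;=\; \frac{h_{v_\ell}}{h_v + 2h_{v_\ell}},
\]
which depends only on $h_v$ and $h_{v_\ell}$ and not on the specific leaves $a,b,c$.

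For the estimator, I would fix arbitrary $a$ in one subtree of $v$ and $b$ in the other, and let $c_1,\dots,c_k$ enumerate the $k$ leaves in the subtree of $v_\ell$ not containing $v$. Define $\bX_i \coloneqq \mathbf{1}[Q(a,b,c_i)=(a,b)]$ and set $\hat p \coloneqq \tfrac{1}{k}\sum_{i=1}^k \bX_i$. The triples $(a,b,c_i)$ are pairwise distinct (they differ in the third coordinate), so by the model's independence across distinct experiments the $\bX_i$ are independent Bernoulli$(p)$ variables. Linearity of expectation immediately gives $\E[\hat p]=p$, establishing part (2). For part (1), applying Hoeffding's inequality (stated in the concentration appendix) to the sum of $k$ independent $[0,1]$ variables yields
\[
\Pr\!\left[\bigl|\hat p - p\bigr| \ge t\right] \;\le\; 2\exp(-2kt^2),
\]
and choosing $t = \Theta\!\bigl(\sqrt{\log k/k}\bigr)$ (with a sufficiently large constant, which is why the hypothesis $k\ge 100$ is convenient so that $\log k$ is bounded away from zero) makes the failure probability at most $1/\mathrm{poly}(k)$, giving the desired deviation bound with high probability.

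The proof itself is not the obstacle; the content of the lemma is the definition of the anchoring operation and the observation that, despite $v$ potentially being light, one can always piggy-back on a heavy ancestor $v_\ell$ on the rightmost path to generate many independent queries whose success probability depends only on $h_v$ and $h_{v_\ell}$. The subtle point I would flag is that the independence argument genuinely requires fixing a single pair $(a,b)$ and varying only $c$: if one tried to average over many pairs $(a,b)$ as well as over $c_i$, the resulting variables would still be independent (distinct triples), but the notion of "the estimator anchored to $v_\ell$" would be less clean, and moreover the later use of Lemma~\ref{lemma:anchoring} (in handling nodes for which no heavy split is available) relies on being able to treat $\hat p$ as an unbiased estimator obtained from a prescribed, controllable set of queries, which the single-pair-times-$k$-leaves construction makes transparent.
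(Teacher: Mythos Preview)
Your proposal is correct and matches the paper's proof essentially line for line: fix $a,b$, enumerate the $k$ leaves $c_1,\dots,c_k$ in the subtree of $v_\ell$ not containing $v$, compute $d(a,b)=2h_v$ and $d(a,c_i)=d(b,c_i)=2h_{v_\ell}$ to get $p=h_{v_\ell}/(2h_{v_\ell}+h_v)$, set $\hat p=\tfrac{1}{k}\sum_i \mathbf{1}[Q(a,b,c_i)=(a,b)]$, and invoke linearity of expectation plus Hoeffding. Your additional remarks on why independence holds and why a single fixed pair $(a,b)$ is the clean choice are accurate and go slightly beyond what the paper spells out.
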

\begin{proof}
	
	We index the leaves in the sub-tree of $v_\ell$ not containing $v$ as $c_1,c_2,..c_k$. Let $c \in \{c_1, \ldots, c_k\}$. As noted in the statement of the lemma,
	the probability that $Q(a,b,c)$ returns $(a,b)$ is a constant $p$ independent of $c$. 
	
	The pairwise distances are $d(a,b) = 2h_v$ and $d(a,c) = d(b,c) = 2h_{v_\ell}$.  This implies that $p = \frac{h_{v_\ell}}{2h_{v_\ell} + h_v}$. Let $\bX_i$ be the indicator random variable that $Q(a,b,c_i)$ returns $(a,b)$. Note that 
	\[
	\Pr[\bX_i=1] = \frac{h_{v_\ell}}{2h_{v_\ell} + h_v}. 
	\]
	Now, define $\hat{p} \coloneqq (\sum_{i=1}^k \bX_i)/k$.
	Hoeffding's inequality (Theorem~\ref{theorem:hoeffding}) and linearity of expectation now immediately imply the claim. 
\end{proof}

\begin{lemma}~\label{lem:reconstruct-height-left-subtree}
	Suppose $v$ is an internal vertex which lies in the left subtree of $v_\ell$ -- where $v_\ell$ lies on the path $P$ and $1 \le \ell \le f$. 
	There is a procedure \textsf{Reconstruct-height-left-heavy} which given as input such a vertex $v$, reconstructs the height of $v$ up to error $O(\alpha^{-1} \cdot \sqrt{\log n /n})$. 
\end{lemma}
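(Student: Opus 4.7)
The plan is to combine the anchoring primitive (Lemma~\ref{lemma:anchoring}) with the height estimates along the rightmost path from Lemma~\ref{lem:reconstruct-weight-rightpath}, using $v_\ell$ itself as the anchor for $v$. First I will verify the hypotheses of Lemma~\ref{lemma:anchoring}: $v_\ell$ is an ancestor of $v$ by assumption, and since $v$ lies in its left subtree, the complementary (right) subtree of $v_\ell$ -- rooted at $v_{\ell+1}\in P$ -- is exactly the subtree of $v_\ell$ not containing $v$. By the convention $\NL(\rc)\ge\NL(\lc)$, this subtree has $k \ge \lceil \NL(v_\ell)/2 \rceil \ge (\alpha n+1)/2$ leaves, which is comfortably $\ge 100$ for large $n$.

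Invoking Lemma~\ref{lemma:anchoring} then yields an unbiased estimate $\hat p$ of
\[
p \;=\; \frac{h_{v_\ell}}{2h_{v_\ell}+h_v}
\]
with $|\hat p - p| = O(\sqrt{\log k/k}) = O(\alpha^{-1/2}\sqrt{\log n/n})$ w.h.p. Since $1\le \ell \le f$, Lemma~\ref{lem:reconstruct-weight-rightpath} also supplies $\h_{v_\ell}$ with $|\h_{v_\ell} - h_{v_\ell}| = O(\sqrt{\log n/n})$. Inverting the identity for $p$ gives $h_v = h_{v_\ell}(1/p-2)$, so the natural estimate is
\[
\h_v \;\coloneqq\; \h_{v_\ell}\bigl(1/\hat p - 2\bigr).
\]

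The error analysis is where the choice of anchor pays off. Because $v$ is a descendant of $v_\ell$, we have $h_v \le h_{v_\ell} \le 1$, which forces $p \ge 1/3$ uniformly in how deep $v$ or $v_\ell$ sits; hence w.h.p. $\hat p \ge 1/4$ and $1/\hat p - 2 \le 2$. Splitting
\[
\h_v - h_v \;=\; (\h_{v_\ell} - h_{v_\ell})\bigl(1/\hat p - 2\bigr) \;+\; h_{v_\ell}\cdot\frac{p-\hat p}{p\hat p},
\]
and using $h_{v_\ell} \le 1$ together with $p\hat p \ge 1/12$ bounds each summand by $O(\alpha^{-1/2}\sqrt{\log n/n})$, comfortably inside the $O(\alpha^{-1}\sqrt{\log n/n})$ tolerance claimed.

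The main obstacle is conceptual rather than computational: the argument closes cleanly only because $v_\ell$ is simultaneously a \emph{proper ancestor} of $v$ (so $h_v \le h_{v_\ell}$ keeps $p$ bounded away from $0$, making $1/\hat p$ stable under perturbation) and is \emph{heavy} (so Hoeffding on $k = \Omega(\alpha n)$ samples achieves the $O(\sqrt{\log n/n})$ rate). Internal vertices that do not admit a heavy ancestor on $P$ enjoying both properties -- essentially, vertices lying in the subtrees that hang off the tail of $P$ below $v_f$ -- fall outside the reach of this strategy and are precisely the hard cases handled by Lemma~\ref{lemma:tricky_weights}.
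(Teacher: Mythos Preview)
Your proof is correct and follows essentially the same route as the paper: anchor $v$ at $v_\ell$ via Lemma~\ref{lemma:anchoring}, pull $\widehat{h_{v_\ell}}$ from Lemma~\ref{lem:reconstruct-weight-rightpath}, invert $p = h_{v_\ell}/(2h_{v_\ell}+h_v)$, and control the error using $p\ge 1/3$. Your algebraic split of $\h_v - h_v$ is equivalent to the paper's triangle inequality through $f(h_{v_\ell},\hat p)$; in fact your leaf count $k\ge (\alpha n+1)/2$ is slightly more careful than the paper's (which tacitly assumes $k\ge \alpha n$), and your resulting bound $O(\alpha^{-1/2}\sqrt{\log n/n})$ is actually tighter than the stated $O(\alpha^{-1}\sqrt{\log n/n})$.
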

\begin{proof}
	We are given that $v_\ell$ is an ancestor of $v$. The right subtree of $v_\ell$  has at least $\alpha n$ leaves. Thus, applying Lemma~\ref{lemma:anchoring}, we can obtain an estimate $\hat{p}$ such that 
	\[
	\bigg| \hat{p} - \frac{h_{v_\ell}}{h_v + 2 h_{v_\ell}} \bigg| = O\bigg( \frac{1}{\alpha} \cdot \sqrt{\frac{\log n}{n}}\bigg). 
	\]
	Now, $h_v$ can be expressed in terms of $h_{v_\ell}$ and $p$ as follows: 
	\[
	h_{v} = \frac{h_{v_\ell} \cdot (1-2p)}{p} \coloneqq f(v_\ell,p). 
	\]
	Using Lemma~\ref{lem:reconstruct-weight-rightpath}, we can obtain an estimate $\widehat{h_{v_\ell}}$ such that 
	\[
	\big| h_{v_\ell}- \widehat{h_{v_\ell}} \big| = \Theta \bigg(\sqrt{\frac{\log n}{n}} \bigg). 
	\]
	The estimate of the procedure is $\widehat{h_v}$ defined as $f(\widehat{h_{v_\ell}}, \hat{p})$. We now observe that by triangle inequality, 
	\begin{eqnarray}~\label{eq:bound-hat-estimate-1}
		\big| f(\widehat{h_{v_\ell}}, \hat{p}) - f(h_{v_\ell},p) \big| \le \big| f({h_{v_\ell}}, \hat{p}) - f(h_{v_\ell},p) \big|  + \big| f(\widehat{h_{v_\ell}}, \hat{p}) - f({h_{v_\ell}}, \hat{p}) \big|. 
	\end{eqnarray}
	To bound the terms on the right hand side, observe that $h_{v_\ell} \le 1$ and $p \in [1/3,1]$. Thus,  by Lemma~\ref{lemma:anchoring}, we can assume that 
	$\hat{p} \ge 1/4$. 
	Consequently, 
	\begin{equation}~\label{eq:bound-hat-estimate-2}
		\big| f({h_{v_\ell}}, \hat{p}) - f(h_{v_\ell},p) \big| = O \bigg( \frac{|p-\hat{p}|}{p^2}\bigg) = O(|p-\hat{p}|). 
	\end{equation}
	\begin{equation}~\label{eq:bound-hat-estimate-3}
		\big| f({h_{v_\ell}}, \hat{p}) -  f(\widehat{h_{v_\ell}}, \hat{p})\big| = O\bigg( \frac{\big|h_{v_\ell} - \widehat{h_{v_\ell}} \big|}{\hat{p}^2}  \bigg) =  O\big( \big|h_{v_\ell} - \widehat{h_{v_\ell}} \big| \big). 
	\end{equation}
	Plugging \eqref{eq:bound-hat-estimate-2} and \eqref{eq:bound-hat-estimate-3} into \eqref{eq:bound-hat-estimate-1} gives us the stated claim. 
\end{proof}

Using Lemma~\ref{lem:reconstruct-weight-rightpath} and 
Lemma~\ref{lem:reconstruct-height-left-subtree}, we have constructed $h_v$ approximately (to additive $O(\pm \sqrt{\log n/n})$) for all vertices except the ones in the subtree of $v_{f+1}$. For such vertices $v$, we next show how to compute $h_v$ up to an additive $O(\log n/\sqrt{n})$. Note that this estimate is slightly worse than the ones achieved in Lemma~\ref{lem:reconstruct-height-left-subtree} and Lemma~\ref{lem:reconstruct-weight-rightpath}. 

The above lemmas 
reconstruct $h_v$ for any vertex in the left subtree of any heavy vertex except the last, as well as any vertex in the ``rightmost path". We next describe how to compute $h_v$ for the remaining vertices, i.e., the vertices in the subtree rooted at $v_{f+1}$. At this point, we also note that if
there was a vertex $v$ with two heavy children, then the obvious adaptation of Lemma~\ref{lem:compute-vertex-lv}
gives us the weight of every edge in the subtree rooted at $v$, and hence the entire tree $T$. However, in general, there is no guarantee that this will happen and we have to go for a more significantly more complicated procedure. Towards this, we prove the following crucial lemma.

\begin{lemma}
	\label{lemma:tricky_weights}
	There is a procedure \textsf{Reconstruct-Internal-left} which given any internal vertex $v$ in the subtree of $v_{f+1}$, outputs an estimate of $h_v$
	that has additive error  $ O(\log n /\sqrt{n})$  . 
\end{lemma}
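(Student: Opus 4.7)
The plan is to anchor $v$ simultaneously to \emph{every} heavy ancestor of $v$ lying on the rightmost path $P$, and to combine all the resulting raw observations through a single ratio estimator rather than forming per-anchor height estimates (which would be weighted by unknown factors $h_{v_\ell}$, obstructing any naive average). Setting $v_0 := r$ for uniform notation, the heavy ancestors of $v$ on $P$ are $v_0, v_1, \ldots, v_f$; since $v$ lies in the subtree of $v_{f+1}$ (the right child of $v_f$), the subtree of $v_i$ not containing $v$ is the left subtree of $v_i$, with $m_i := \NL(v_i) - \NL(v_{i+1})$ leaves. A telescoping sum yields $N := \sum_{i=0}^{f} m_i = n - \NL(v_{f+1}) \ge (1-\alpha) n = \tfrac{5n}{6}$, so collectively we have $\Theta(n)$ anchor-leaves even though any individual $m_i$ could be as small as~$1$.

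Fix two leaves $a, b$ in distinct subtrees of $v$, enumerate the leaves of the left subtree of each $v_i$ as $c^{(i)}_1, \ldots, c^{(i)}_{m_i}$, and let $X^{(i)}_j := \mathbb{I}\bigl[Q(a,b,c^{(i)}_j) = (a,b)\bigr]$. As in the proof of Lemma~\ref{lemma:anchoring}, these are independent Bernoullis with $p_i := \E[X^{(i)}_j] = \frac{h_{v_i}}{2 h_{v_i} + h_v}$. The algebraic identity that drives everything is $h_{v_i}(1 - 2 p_i) = h_v p_i$; it motivates the estimator
\[
\hat{h}_v := \frac{\widehat{Y}}{X}, \qquad X := \sum_{i=0}^{f} S_i, \qquad \widehat{Y} := \sum_{i=0}^{f} \widehat{h_{v_i}} (m_i - 2 S_i),
\]
where $S_i := \sum_j X^{(i)}_j$ and each $\widehat{h_{v_i}}$ is the estimate supplied by Lemma~\ref{lem:reconstruct-weight-rightpath} (applicable because $i \le f$ means $v_i$ is heavy and on~$P$).

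The analysis proceeds in three steps. First, with the \emph{true} heights in place, let $Y := \sum_i h_{v_i}(m_i - 2 S_i)$; the identity above rewrites $Y - h_v X = \sum_{i,j}\bigl[h_{v_i} - (2 h_{v_i} + h_v) X^{(i)}_j\bigr]$ as a sum of $N$ independent, zero-mean, $\le 2$-bounded random variables, so Hoeffding's inequality yields $|Y - h_v X| = O(\sqrt{n \log n})$ whp. Second, because every $v_i$ is an ancestor of $v$ we have $h_{v_i} \ge h_v$ and hence $p_i \ge 1/3$; thus $\E[X] \ge N/3$, and a second Hoeffding bound pins $X \ge N/6$ whp. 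These together give $|Y/X - h_v| = O(\sqrt{\log n /n})$. Third, the perturbation from substituting $\widehat{h_{v_i}}$ for $h_{v_i}$ contributes at most $X^{-1}\sum_i |\widehat{h_{v_i}} - h_{v_i}|(m_i + 2 S_i) = O(\sqrt{\log n/n})$, using that each $|\widehat{h_{v_i}} - h_{v_i}| = O(\sqrt{\log n/n})$ and $\sum_i m_i = N$. Altogether $|\hat{h}_v - h_v| = O(\sqrt{\log n/n})$, comfortably inside the $O(\log n/\sqrt{n})$ target.

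The main obstacle I anticipate is the strong dependence between $\widehat{Y}$ and $X$: they are built from the \emph{same} Bernoulli variables $X^{(i)}_j$, so a naive delta-method argument treating numerator and denominator as independent would be hopelessly loose. The key maneuver is to fold the problem into $Y - h_v X$, expressed as a \emph{single} sum of independent mean-zero bounded terms, which dissolves the dependence and lets one Hoeffding bound deliver concentration of the composite statistic. A secondary subtlety is that individual $m_i$ can be degenerate (even a single leaf), rendering per-anchor estimates worthless; the ratio estimator sidesteps this gracefully because such anchors contribute proportionally to both $\widehat{Y}$ and $X$, and concentration is driven by the aggregate $N = \Theta(n)$ rather than by any one $m_i$.
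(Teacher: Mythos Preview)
Your proposal is correct and takes a genuinely different route from the paper's proof.

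The paper first restricts to heavy ancestors whose left subtrees have at least $n/(4(1+f))$ leaves, uses Lemma~\ref{lemma:anchoring} to produce per-anchor probability estimates $\hat p_i$, truncates them to $\tilde p_i$ (to force boundedness while controlling bias), aggregates them as $\widehat Q=\sum_i j_i\tilde p_i/\sum_i j_i$, proves $|\widehat Q-Q|=O(\log n/\sqrt n)$ via the generalized Hoeffding bound on the truncated summands, and then recovers $\widehat h_v$ by numerically inverting the multivariate map $F(a,\widehat H)=\widehat Q$. The error analysis for the inversion is the delicate part: it compares partial derivatives of $f(a,b)=b/(2b+a)$ in the two arguments (via the intermediate-value theorem), needs the minimum-edge-weight assumption $h_{t_i}\ge\tau\log n/\sqrt n$ to control the ratios, and requires a final clipping step $\widehat h_v^f=\min\{\widehat h_v,\min_i\widehat h_{t_i}\}$ to keep the analysis consistent.

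Your ratio estimator $\widehat h_v=\widehat Y/X$ bypasses all of this: by folding everything into the single zero-mean sum $Y-h_vX=\sum_{i,j}\bigl[h_{v_i}-(2h_{v_i}+h_v)X^{(i)}_j\bigr]$ of $N=\Theta(n)$ bounded independent terms, one Hoeffding application handles the numerator--denominator dependence directly, no truncation or numerical inversion is needed, and the argument does not invoke the minimum-edge-weight lower bound at this step. As a side benefit you obtain $|\widehat h_v-h_v|=O(\sqrt{\log n/n})$, a $\sqrt{\log n}$ factor stronger than the paper's $O(\log n/\sqrt n)$; the extra $\sqrt{\log n}$ in the paper comes precisely from the truncation windows of width $\Theta(\sqrt{\log n/j_i})$ feeding into the Hoeffding variance proxy. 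The only small point to add is that for $i=0$ one takes $\widehat{h_{v_0}}=1$ exactly rather than invoking Lemma~\ref{lem:reconstruct-weight-rightpath}.
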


This lemma completes the proof of Theorem~\ref{theorem:weights}, since we have successfully reconstructed the heights of all internal vertices of the tree within additive error $O(\frac{\log n}{\sqrt{n}})$. Since this proof is considerably more involved than the previous proofs, we present it in Appendix~\ref{app:weights}.

\section{Necessary Conditions}
\label{sec:necessary}

The goal of this section is to show that to  reconstruct the topology of the tree, it is necessary for each edge to have weight $\Omega(1/{\sqrt{n}})$ -- thus, essentially matching the lower bound assumption in Theorem~\ref{theorem:topology}. 
Recall that we normalize the edge weights so that the height (weighted root to leaf distance) of the tree is $1$.

In the theorem below, we give a nearly matching lower bound on the minimum weight of each edge even for topology reconstruction in our model. We also note that such edge weight lower bounds are commonplace in the literature on phylogenetic reconstruction\cite{felsenstein1981evolutionary,farach1999efficient,erdHos1999few}.

\begin{theorem}
\label{theorem:necessary}
Let $T$ be the set of weighted full binary trees tree such that the weights induce an ultrametric on the distances between leaves (within each tree).  Then, given access to the \textbf{homogeneous model}, as described in Section~\ref{sec:model}), on the leaves of this tree, there exists no algorithm that can reconstruct the topology of the tree if the edge weights can be as small as $ \frac{\rho}{\sqrt{n}}$ where $\rho$ is a sufficiently small constant ($\rho \le \frac{1}{100}$). {A fortiori,} we obtain the same lower bound for the general model as well. 
\end{theorem}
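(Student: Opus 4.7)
The plan is to use the standard two-point / Le Cam method: I will exhibit two trees $T_1, T_2$ on $n$ leaves with different topologies such that the joint distributions of the results of all $\binom{n}{3}$ queries under the homogeneous model are within total-variation distance strictly less than $1/2$, and deduce that no algorithm can correctly identify the topology on both trees with probability $> 3/4$. Concretely, I will fix any full binary ultrametric tree $T_0$ on $n - 4$ leaves whose root is at height $1$, choose inside it a vertex $v$ at height $H = 1/2$, and attach under $v$ a four-leaf gadget: $v$ has two children at height $H - \varepsilon$ (so the two edges incident to $v$ have weight $\varepsilon := \rho/\sqrt{n}$), and each of those children has two leaves from $\{a,b,c,d\}$ as its children. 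In $T_1$ the gadget pairing is $\{(a,b),(c,d)\}$; in $T_2$ it is $\{(a,c),(b,d)\}$. Every edge in either tree has weight at least $\varepsilon = \rho/\sqrt{n}$, so both instances satisfy the minimum-edge-weight hypothesis.

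Since the outcomes of distinct queries are independent, the KL divergence between the two joint query distributions decomposes as a sum over triples, and it suffices to bound the per-triple contribution and add. A triple contributes $0$ whenever it contains at most one of the special leaves $\{a,b,c,d\}$, because every pairwise distance it sees is determined entirely by $T_0$ together with the height $H$ and hence agrees in $T_1$ and $T_2$. Only two kinds of triples contribute: the $\binom{4}{3}=4$ ``internal'' triples (all three leaves special) and the $\binom{4}{2}(n-4) = 6(n-4)$ ``mixed'' triples (two special, one outside). For a mixed triple $(x,y,z)$ with $x,y \in \{a,b,c,d\}$, only the distance $d(x,y)$ changes between the trees, and by at most $2\varepsilon$, while $d(x,z) = d(y,z) = \Theta(1)$; plugging into $\Mh$ shows each outcome probability shifts by $O(\varepsilon) = O(\rho/\sqrt{n})$. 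For an internal triple, all three pairwise distances lie in $[2(H-\varepsilon),\,2H]$, so the three outcome probabilities lie within $O(\varepsilon/H) = O(\rho/\sqrt{n})$ of the uniform distribution $(1/3,1/3,1/3)$ in both trees and again shift by $O(\varepsilon)$ coordinatewise between $T_1$ and $T_2$.

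A chi-squared expansion of $\mathrm{KL}$ around the appropriate base distribution yields a per-triple bound of $O(\varepsilon^2) = O(\rho^2/n)$ in both cases, since every denominator $q_i$ is bounded away from $0$. Summing over the $O(n)$ contributing triples gives $\mathrm{KL}(\mathcal{D}_{T_1} \Vert \mathcal{D}_{T_2}) = O(\rho^2)$, so by Pinsker's inequality the total-variation distance between the two joint outcome distributions is $O(\rho)$, which is strictly less than $1/2$ for $\rho$ a sufficiently small absolute constant (comfortably for $\rho \le 1/100$ up to the hidden constants). The standard Le Cam argument then implies that any algorithm attempting to output the topology must err on at least one of $T_1$ and $T_2$ with probability $\ge 1/4$, proving the theorem for the homogeneous model; the result for the general model is then immediate since the homogeneous model is a special case of the general one.

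The step I expect to require the most care is the KL estimate for the four internal triples, where naively one might worry about a constant-size KL contribution because the identity of the closest pair actually flips between $T_1$ and $T_2$. The saving point is that because all three distances in such a triple are $\Theta(H) = \Theta(1)$ while their mutual gaps are only $\Theta(\varepsilon) = \Theta(\rho/\sqrt{n})$, both outcome distributions lie within $\Theta(\rho/\sqrt{n})$ of uniform; the first-order (linear in $p_i - q_i$) terms in the $\mathrm{KL}$ expansion cancel because $\sum_i (p_i - q_i) = 0$, leaving only the quadratic $O(\rho^2/n)$ contribution needed for the sum to remain $O(\rho^2)$.
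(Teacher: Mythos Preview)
Your argument is correct and follows essentially the same Le Cam strategy as the paper's proof: exhibit two trees differing only in the topology of a small gadget subtree, then bound the total-variation distance between the product query distributions via a tensorizing divergence summed over the $O(n)$ triples that touch the gadget (the paper uses squared Hellinger distance where you use KL, and a three-leaf gadget attached as the root's right subtree where you use a four-leaf gadget). One small caveat: your description of how the gadget is grafted onto $T_0$ (``choose inside it a vertex $v$ at height $1/2$'') is imprecise---taken literally $v$ would acquire a third child---so you should attach the gadget as a sibling subtree (as the paper does) or subdivide an edge strictly above height $1/2$ and ensure $T_0$ itself has all edge weights $\ge \rho/\sqrt n$; once this is made precise the proof goes through.
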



To prove this result, we construct two trees with
the following properties: 
\begin{enumerate}
\item They have distinct topologies and the weight of each edge is at least $\rho/\sqrt{n}$. 
\item It is information-theoretically impossible to distinguish between the trees with probability more than $0.51$ (in the homogeneous model). 
\end{enumerate}

The two trees $T_1$ and $T_2$ are as follows. Both have roots with identical weighted left subtrees $B$. The right subtrees of both $T_1$ and $T_2$ both have three leaves $a, b,$ and $c$ but with different induced topologies.
In $T_1$, $a$ and $b$ are sibling leaves with parent $p$. The parent of $c$ and $p$ is the node $q$, which is the right child of the root. In $T_2$, $a$ and $c$ are siblings, whose parent is $x$. $x$ and $b$ have a parent $y$, which is the right child of the root. All `corresponding' edge lengths are identical and in particular, the edges $(p,q)$ and $(x,y)$ have weight $ \frac{\rho}{\sqrt{n}}$.  The edges to the sibling pair of leaves in the right subtree of both trees have the same weight,  say $\frac{1}{3}$. The two trees we construct are shown in Figures~\ref{fig:Lower Bound Instance Tree 1} and~\ref{fig:Lower Bound Instance Tree $T_2$} in Appendix~\ref{sec:figures} (supplementary material).

\ignore{
The two trees we construct are shown in Figures~\ref{fig:Lower Bound Instance Tree 1} and~\ref{fig:Lower Bound Instance Tree $T_2$}.  The two trees are identical except for the relative topology of the leaves $a$, $b$ and $c$.  The left subtree $B_k$ is any full binary tree with weights satisfying the ultrametric property, and such that the length of the path from the root of the entire tree to any leaf in $B_k$ is $1$. The edge $(q,p)$ has weight $ \frac{\rho}{\sqrt{n}}$ and the edge $(a,p)$ has constant weight, say $\frac{1}{3}$, while the edge from the root to $q$ is weighted such that the root to leaf distance when the leaf is in the right subtree is also $1$.
}

\ignore{
\begin{figure}
\centering
\begin{minipage}{0.5\textwidth}
  \centering
  \includegraphics[width=.8\linewidth]{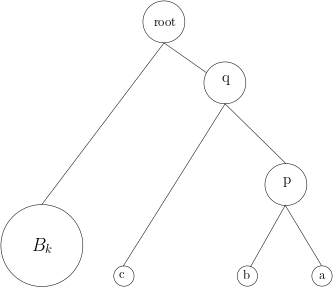} 
  \captionof{figure}{Lower Bound Instance Tree $T_1$}
  \label{fig:Lower Bound Instance Tree 1}
\end{minipage}%
\begin{minipage}{.5\textwidth}
  \centering
  \includegraphics[width=.8\linewidth]{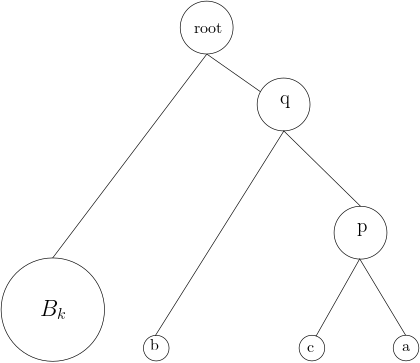}
  \captionof{figure}{Lower Bound Instance Tree 2}
    \label{fig:Lower Bound Instance Tree $T_2$}
\end{minipage}
\end{figure}
}

The rest of the proof of Theorem~\ref{theorem:necessary} has been moved to Appendix~\ref{app:lower}, in interest of space.


\newcommand{\etalchar}[1]{$^{#1}$}

\appendix

 \section{Proofs from Section~\ref{sec:topology} (Topology Reconstruction)}
\label{app:topo}

\subsection{Proof of Lemma~\ref{lemma:enough_independent_sum}}

	Since $X$ and $Y$ are each sums of at least $\frac{n}{16}$ independent $0,1$ random variables, we use the Hoeffding bound (Theorem~\ref{theorem:hoeffding}) to argue that the events $|X-\E[X]| \le 6 \sqrt{n\log n}$  and $|Y-\E[Y]| \le 6 \sqrt{n\log n}$ each happen with high probability, i.e with $1- \frac{1}{2n^6}$. Using the union bound, both events happen with probability at least $1- \frac{1}{n^6}$.
	
	For the first part of the lemma, since for each $i \in [L]$, $\E[X_i] \le \E[Y_i] - c \sqrt{\frac{\log n}{n}}$, using linearity of expectation gives : $\E[X] \le \E[Y] - c \sqrt{\frac{\log n}{n}}$. Using the triangle inequality, we conclude that $|X - Y| < 12 \sqrt{n \log n}$ is true with probability at least $1-  \frac{1}{n^6}$.
	
	For the second part of the lemma, since $X_i$ and $Y_i$ are identical random variables, $\E[X] = \E[Y]$.  Using the triangle inequality, we conclude that $Y >X + 24 \sqrt{n \log n}$ is true with probability at least $1-  \frac{1}{n^6}$.

\subsection{Proof of Lemma~\ref{lemma:large_enough}}
\label{appendix:goldilocks}
{\bf Part 1:} First, we describe how  $\algfirst$ reconstructs the buckets $S_1, S_2 , \cdots S_k$ that partition $L(T) - L(T')$. Let $L'$ be the set of leaves in $T'$. 


The following test is used to resolve the relative order of two leaves $x,y \in L \setminus L'$. For each $a \in L'$, the random variable $X_a$ is set to 1 if the result of the
experiment $Q(a,x,y)$ is $(a,x)$, and to 0 otherwise. Similarly, the random variable $Y_a$ is set to 1 if the result of this experiment is $(a,y)$, and to 0 otherwise.
 Let $X = \sum_{a \in L'} X_{a}$ and $Y = \sum_{a \in L'} Y_{a}$.

If $X - Y > 24\sqrt{n \log n}$, then we  declare that $x$'s bucket has a lower index than $y$'s. Otherwise, if $|X - Y| \le 24\sqrt{n \log n}$, we say that $x$ and $y$ are in the same bucket. We prove that this algorithm is correct with high probability. 

We first show that the above method of comparison works correctly for any pair of leaves $x,y$ in different buckets. Without loss of generality, $x \in S_i$ and $y \in S_j$ with $i<j$.  
Using Observation~\ref{obs:advantage} about experiment $Q(a,x,y)$, we conclude that $\E[X_a] - \E[Y_a] \ge 2 \tau \fs$. Using Lemma~\ref{lemma:enough_independent_sum}, we get the desired result about the comparison of $X$ and $Y$, with high probability.



We next extend the result to pairs of leaves $x,y$ in the same bucket $S_i$. We know that $d(a,x) = d(a,y) > d(x,y)$  for all $a \in L'$.Thus, $X_{a}$ and $Y_{a}$ are identically distributed random variables (since our noise model guarantees that the two incorrect answers to a experiment appear with equal probability). Using the second part of Lemma~\ref{lemma:enough_independent_sum}, we get the desired result about the comparison of $X$ and $Y$, with high probability.



Finally, $\algfirst$ labels the buckets generated by doing all pairwise tests using a standard topological sorting algorithm to recover the original labels. $O(n^2)$ high probability events are assumed to simultaneously occur as part of this proof, since we are comparing all pairs of leaves.

In the next phase, the algorithm $\algfirst$ resolves the closest pair of three leaves from the same bucket. A score $s_{ab}$ is associated with each pair of leaves $a,b \in S_i$ and a $0-1$ random variables $X^x_{ab}$ is defined for each $x \in L'$. Each pair $a,\,b$ has at least $\frac{n}{12}$ associated random variables. $X^x_{ab}$ is set to $1$ if $Q(a,b,x)$ is $(a,b)$ and $0$ otherwise. Let $s_{ab} = \sum_{x \in L'} X^x_{ab}$. Given three leaves $a,b,c \in S_i$  the pair with the highest score is declared to be the closest pair.  

We argue that this test succeeds with high probability. Let $(a,b)$ be the closest pair among $(a,b,c)$. We will show that with high probability $s_{ab} > s_{bc}$. (An identical argument helps establish that $s_{ab} > s_{ac}$.)   We observe that $d(b,c) \ge d(a,b) + 2 \tau \fs$ and $ d(a,b) < d(b,c) <  d(a,x) = d(b,x) = d(c,x)$ for all $x \in L'$. By the properties of the noise model (the bound on the partial derivative), we get $\E[X^x_{ab}] - \E[X^x_{ac}] \ge 2 \tau \fs$.  
Using Lemma~\ref{lemma:enough_independent_sum}, we get the desired result about the comparison of $s_{ab}$ and $s_{bc}$, with high probability.

 $O(n^3)$ high probability events are assumed to simultaneously occur as part of this proof, since we are separately arguing correct recovery of the closest pair for every possible set of three leaves.



{\bf Part 2:} Let $L'$ be the set of leaves in $T'$.  We describe how the algorithm $\algfirst$ recovers the topology of $T'$.

 A score $s_{ab}$ is associated with each pair of leaves $a,b \in L'$ and a  $0-1$ random variables $X^x_{ab}$ is defined for each $x \in L \setminus L'$. Each leaf pair $a,b$ has at least $\frac{n}{12}$ associated random variables. $X^x_{ab}$ is set to $1$ if $Q(a,b,x)$ is $(a,b)$ and $0$ otherwise. Let $s_{ab} = \sum_{x \in L \setminus L'} X^x_{ab}$. Given three leaves $a,b,c \in L'$, the closest pair is chosen as the pair with the largest score. The argument that this test succeeds with high probability uses Lemma~\ref{lemma:enough_independent_sum} in a manner similar to the proof of Part 1 and hence we omit it.  $O(n^3)$ high probability events are assumed to simultaneously occur as part of this proof, since we are separately arguing correct recovery of the closest pair for every possible set of three leaves.




\section{Proof of Lemma~\ref{lemma:tricky_weights} (Weights Reconstruction)} 
\label{app:weights}


Recall that the vertices in path $P$ are $v_0, \ldots, v_f$ (in order) where $v_0$ is the root and $v_f$ is the last heavy vertex. By definition, this  implies that the total number of leaves in the left subtrees of $v_0$, $v_1$, $\ldots$, $v_f$ is at least $(1-\alpha) n -1$. For $0 \le i \le f$, let $\mathsf{C}_i$ be the set 
of leaves in the left subtree of $v_i$. Then $\sum_{i=0}^f |\mathsf{C}_i| \ge (1-\alpha)n -1 > n/2$ (as $\alpha \le 0.49$). 
Let $\mathcal{A} \coloneqq \{v_i: | \mathsf{C}_i| < n/(4(1+f))\}$. It easily follows that $\sum_{i \not \in A} |\mathsf{C}_i| > n/4$. 

Let the elements of $\overline{\mathcal{A}}$ (in order) be $t_1, \ldots, t_k$. Let $c \in \mathsf{C}_{t_i}$ and let $a$ and $b$ be vertices in different subtrees of $v$. Then, note that the probability with which $Q(a,b,c)$ returns $(a,b)$ is $p_i$ where $p_i = h_{t_i}/(2h_{t_i} + h_v)$. We now apply Lemma~\ref{lemma:anchoring} and using $v_{t_i}$ as an anchor for $v$ --  with~\ohp ($1-\frac{1}{n^6}$), this gives us an estimate $\hat{p}_i$ such that 
\[
|\hat{p}_i - p_i| \le 4 \bigg( \sqrt{\frac{\log |\mathsf{C}_{t_i}|}{|\mathsf{C}_{t_i}|}}\bigg)
\]


At this point, one might ask whether any of the estimates $\hat{p}_i$ is good enough to construct a good estimate $\h_v$ for the height of vertex $v$. However, note that the best guarantee that we can give for any $|C_{t_i}|$ is at most $\theta(\sqrt{n} \log n)$, since the number of vertices on a root to leaf path (to which $f$ may be comparable) can potentially be as bad as $\frac{\sqrt{n}}{\log n}$. To demonstrate such an instance, consider the tree where each edge the rightmost path from the root to leaf path has weight $\theta(\frac{\sqrt{n}}{\log n})$, and each heavy vertex has $\theta(\sqrt{n} \log n)$ vertices in its left subtree. With such a guarantee, using similar techniques as in the proof of Lemma~\ref{lem:reconstruct-height-left-subtree}, we can only obtain an estimate for $h_v$ such that the additive error is upper bounded by $O(\frac{1}{n^{1/4}\log n})$. Such an estimate falls well short of our target of $O(\frac{\log n}{n})$ additive error.

To get a better estimate, we use the fact that the set of random variables $\{\hat{p}_i\}_{i \in [k]}$ are independent, owing to the fact that they are functions of disjoint sets of queries. We expect to see that errors in these random variables balance out when we aggregate them in some fashion. A natural approach is to use each $\hat{p}_i$ to construct an estimator $\h^i_{v} = \h_{t_i} (\frac{1}{\hat{p}_i} - 2)$ for $h_v$ and take the weighted average (weighted by the $|C_{t_i}|$s). However, this approach runs into difficulties, triggered by the fact that the random variables $\h^i_{v}$ are not unbiased estimators for $h_v$. To avoid this issue, we aggregate the $\hat{p}_i$'s directly and then recover a single estimate $\h_v$ for $h_v$ from the aggregated quantity. In particular, we focus on estimating the quantity $Q := \frac{\sum_i |C_{t_i}| p_i}{\sum_i |C_{t_i}|}$ through an estimator $\widehat{Q}$ and recovering a good estimate $\h_v$ for$h_v$ using the estimate $\widehat{Q}$. The proof is complete if we prove the following two claims.

\begin{claim}
	\label{claim:goodq}
	From the estimators $\{\hat{p}_i\}_{i \in [k]}$, there exists an algorithm to recover an estimator $\widehat{Q}$ for $Q = \frac{\sum_i |C_{t_i}| p_i}{\sum_i |C_{t_i}|}$ such that $|\widehat{Q} - Q| \le \theta \bigg (\frac{\log n}{\sqrt{n}} \bigg)$.
\end{claim}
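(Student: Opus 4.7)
The plan is to use the natural importance-weighted estimator
\[
\widehat{Q} \coloneqq \frac{\sum_{i=1}^k |\mathsf{C}_{t_i}|\,\hat{p}_i}{\sum_{i=1}^k |\mathsf{C}_{t_i}|},
\]
which uses exactly the same weights $|\mathsf{C}_{t_i}|$ as the target quantity $Q$. Writing $S \coloneqq \sum_i |\mathsf{C}_{t_i}|$, which by the construction of $\overline{\mathcal{A}}$ satisfies $S \ge n/4$, the error can be written as
\[
\widehat{Q} - Q \;=\; \frac{1}{S}\sum_{i=1}^k |\mathsf{C}_{t_i}|\,(\hat{p}_i - p_i).
\]

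First I would unpack each $\hat{p}_i$ using the proof of Lemma~\ref{lemma:anchoring}: fixing any $a,b$ in distinct subtrees of $v$, one has $\hat{p}_i = \tfrac{1}{|\mathsf{C}_{t_i}|}\sum_{c \in \mathsf{C}_{t_i}} X_{i,c}$, where $X_{i,c}$ is the indicator that query $Q(a,b,c)$ returns $(a,b)$ and $\E[X_{i,c}] = p_i$. Thus $|\mathsf{C}_{t_i}|\hat{p}_i = \sum_{c \in \mathsf{C}_{t_i}} X_{i,c}$, and the numerator of $\widehat{Q}-Q$ becomes
\[
\sum_{i=1}^k \sum_{c \in \mathsf{C}_{t_i}} (X_{i,c} - p_i),
\]
a sum of $S$ bounded, zero-mean random variables.

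The key structural step is justifying mutual independence of the full collection $\{X_{i,c}\}_{i \in [k],\,c \in \mathsf{C}_{t_i}}$. Because the sets $\mathsf{C}_{t_1}, \ldots, \mathsf{C}_{t_k}$ are pairwise disjoint (they are leaf sets of distinct subtrees hanging off the path $P$) and the pair $(a,b)$ can be held fixed across all $i$, the triples $(a,b,c)$ indexing the $X_{i,c}$ are all distinct. The distance-based noise model declares distinct experiments to be independent, so the $S \ge n/4$ indicators $\{X_{i,c}\}$ are mutually independent Bernoulli random variables. Applying Hoeffding's inequality (Theorem~\ref{theorem:hoeffding}) then yields, with overwhelmingly high probability,
\[
\bigg| \sum_{i=1}^k \sum_{c \in \mathsf{C}_{t_i}} (X_{i,c} - p_i) \bigg| \;\le\; O\bigl(\sqrt{S \log n}\bigr).
\]
Dividing by $S$ and using $S \ge n/4$ gives $|\widehat{Q} - Q| \le O(\sqrt{\log n/n})$, which is comfortably within the target $O(\log n/\sqrt{n})$.

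The main obstacle I anticipate is not the concentration calculation itself but articulating the independence: one must be careful to point out that although the same $a,b$ (and potentially overlapping anchor vertices $v_{t_i}$ on the path $P$) are used across different $i$, the randomness driving $\hat{p}_i$ consists only of the per-triple noise on queries $Q(a,b,c)$ with $c \in \mathsf{C}_{t_i}$; since the $c$'s range over pairwise disjoint sets, all the queries are genuinely distinct, and the noise model's independence-across-triples assumption applies at the level of the pooled collection. Once this is spelled out, the proof reduces to the Hoeffding bound above.
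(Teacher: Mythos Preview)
Your argument is correct and in fact cleaner than the paper's. Both proofs use the same estimator $\widehat{Q} = \frac{\sum_i |\mathsf{C}_{t_i}|\hat{p}_i}{\sum_i |\mathsf{C}_{t_i}|}$, but the routes diverge from there. The paper treats each $\hat{p}_i$ as a black-box random variable: since Lemma~\ref{lemma:anchoring} only guarantees $|\hat{p}_i - p_i| \le O(\sqrt{\log |\mathsf{C}_{t_i}|/|\mathsf{C}_{t_i}|})$ with high probability (not almost surely), the paper introduces truncated variables $\widetilde{p}_i$ that are forced into this range, shows $|\E[\widetilde{p}_i] - p_i| \le 2/n^6$, and then applies the generalized Hoeffding bound (Theorem~\ref{theorem:generalized_hoeffding}) to the weighted sum of the $\widetilde{p}_i$'s. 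Because the truncation window already carries a $\sqrt{\log n}$ factor, the final deviation bound comes out as $\Theta(\log n/\sqrt{n})$.

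You instead open up each $\hat{p}_i$ into its constituent Bernoulli indicators $X_{i,c}$, observe that the weights $|\mathsf{C}_{t_i}|$ cancel, and apply Hoeffding directly to the pooled collection of $S \ge n/4$ independent $\{0,1\}$ variables. This sidesteps the truncation entirely and yields the sharper bound $O(\sqrt{\log n/n})$, comfortably inside the claimed $O(\log n/\sqrt{n})$. The independence justification you give is exactly right: the sets $\mathsf{C}_{t_i}$ are pairwise disjoint and $(a,b)$ is fixed across all $i$, so all the triples $(a,b,c)$ are distinct and the model's independence assumption applies globally.

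One small correction: the pooled indicators $X_{i,c}$ are independent but not identically distributed (the mean is $p_i$, which varies with $i$), so strictly speaking you should invoke the generalized Hoeffding bound (Theorem~\ref{theorem:generalized_hoeffding}) with all ranges equal to $[0,1]$ rather than Theorem~\ref{theorem:hoeffding}, which as stated in the paper assumes i.i.d.\ variables. This does not affect the conclusion.
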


\begin{claim}
	\label{claim:final_estimate_constructor}
	Given a good estimate $\widehat{Q}$ for $Q = \frac{\sum_i |C_{t_i}| p_i}{\sum_i |C_{t_i}|}$, i..e, with additive error $O \bigg (\frac{\log n}{\sqrt{n}} \bigg)$, there exists an algorithm $\textsf{Final- Estimate}$ that uses estimators $\{\h_{t_i}\}_{i \in [k]}$ to construct an estimator $\h_v^f$ for $h_v$ with additive error $O \bigg (\frac{\log n}{\sqrt{n}} \bigg)$.
\end{claim}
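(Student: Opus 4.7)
The plan is to execute the two-step strategy already foreshadowed: first aggregate the anchor estimates into a pooled statistic $\widehat{Q}$ approximating $Q = \frac{1}{N}\sum_i |\mathsf{C}_{t_i}| p_i$ to precision $O(\sqrt{\log n/n})$, then invert the algebraic relation $Q = Q(h_v)$ to back out $h_v$, using the already-available estimates $\widehat{h_{t_i}}$ from Lemma~\ref{lem:reconstruct-weight-rightpath}. The structural fact driving the pooling step is that for distinct $i$, the estimator $\hat{p}_i$ depends only on queries $Q(a,b,c)$ with $c \in \mathsf{C}_{t_i}$; since the sets $\mathsf{C}_{t_i}$ are pairwise disjoint and noise on distinct queries is independent, the collection $\{\hat{p}_i\}$ is jointly independent.

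For Claim~\ref{claim:goodq}, I would take the natural weighted average
\[
\widehat{Q} \;\coloneqq\; \frac{1}{N}\sum_{i=1}^{k} |\mathsf{C}_{t_i}|\,\hat{p}_i,\qquad N \;\coloneqq\; \sum_i |\mathsf{C}_{t_i}| \;\ge\; n/4.
\]
Expanding, $N\widehat{Q} = \sum_i \sum_{c \in \mathsf{C}_{t_i}} \mathbf{1}[Q(a,b,c) = (a,b)]$ is a sum of exactly $N$ independent Bernoulli random variables with total mean $NQ$, so Hoeffding's inequality gives $|\widehat{Q}-Q| = O(\sqrt{\log n / n})$ with \ohp, beating the $O(\log n/\sqrt n)$ target.

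For Claim~\ref{claim:final_estimate_constructor}, define the deterministic map $F(h;\{x_i\}) \coloneqq \frac{1}{N}\sum_i |\mathsf{C}_{t_i}|\, x_i/(2x_i+h)$, and let $\widehat{h_v}$ be the unique solution in $[0,1]$ of $F(\widehat{h_v};\{\widehat{h_{t_i}}\}) = \widehat{Q}$ (existence and uniqueness follow from strict monotonicity of $F$ in its first argument; clip to the endpoints if $\widehat{Q}$ falls outside the image). A mean-value-theorem argument then controls $|\widehat{h_v}-h_v|$ by $|\widehat{Q}-Q|$ and the $|\widehat{h_{t_i}}-h_{t_i}|$'s, both of size $O(\sqrt{\log n/n})$. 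Using $h_{t_i} \ge h_v$ (since $v_{t_i}$ is an ancestor of $v$) together with $h_{t_i}, h_v \le 1$, one checks $p_i \in [1/3, 1/2]$, which uniformly lower-bounds $|\partial F/\partial h|$ by a positive constant; a careful accounting shows that the ratio $|\partial F/\partial x_i|/|\partial F/\partial h|$ only carries a harmless factor of $h_v/h_{t_i} \le 1$, so the contribution of $h_{t_i}$-errors also remains $O(\sqrt{\log n/n})$.

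The main obstacle is precisely this sensitivity calculation: in isolation $|\partial F/\partial x_i|$ can be as large as $\Theta(1/h_v)$, threatening to blow up near the edge-weight threshold $\tau \log n/\sqrt n$; the rescue is that $|\partial F/\partial h|$ carries the same $1/x_i$ factor per summand, so the problematic $1/h_v$ cancels once the Jacobian is written out carefully. A secondary point requiring explicit treatment is independence: the $\hat{p}_i$'s share the first two query arguments $(a,b)$, and one must invoke independence only across queries with distinct third argument $c$, which the permanent-noise assumption permits since each triple $(a,b,c)$ generates its own independent randomness.
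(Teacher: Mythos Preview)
Your approach to Claim~\ref{claim:final_estimate_constructor} is the same as the paper's: define $F(h;\{x_i\}) = \frac{1}{N}\sum_i j_i\,\frac{x_i}{2x_i+h}$, invert at $(\widehat{Q},\widehat{H})$ to obtain $\widehat{h_v}$, then control $|\widehat{h_v}-h_v|$ by comparing the partial derivatives. The key observation you isolate---that term-by-term $(\partial_{x_i}f)/(\partial_h f) = h/x_i \le 1$---is also the paper's.

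There is, however, a genuine gap in your sensitivity step. You claim a uniform positive lower bound on $|\partial_h F|$ from $p_i\in[1/3,1/2]$, but that holds only at the true arguments $(h_v,h_{t_i})$. The mean-value theorem evaluates $\partial_h f$ at some $\xi$ between $h_v$ and $\widehat{h_v}$, and a~priori $\widehat{h_v}$ may lie anywhere in $[0,1]$. If some $h_{t_i}$ is near the lower threshold $\tau\log n/\sqrt{n}$ while $\xi$ is of order~$1$, then $\frac{h_{t_i}}{(2h_{t_i}+\xi)^2}=\Theta(h_{t_i})$ rather than $\Theta(1/h_{t_i})$, and the cancellation you describe collapses. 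Clipping $\widehat{h_v}$ to $[0,1]$ does not help, since the failure mode is $\widehat{h_v}>\min_i \widehat{h_{t_i}}$, not $\widehat{h_v}>1$.

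The paper closes this gap by outputting $\widehat{h_v}^f=\min\{\widehat{h_v},\,\min_i\widehat{h_{t_i}}\}$ and doing a two-case analysis. In the unclipped case one has $\widehat{h_v}\le\widehat{h_{t_i}}\le 2h_{t_i}$, so the MVT point satisfies $\xi\le 2h_{t_i}$ and the bound $|\partial_h f|\ge 1/(16h_{t_i})$ goes through. In the clipped case one works on the interval $[h_v,\min_i\widehat{h_{t_i}}]$ (where the same bound holds), uses monotonicity to discard the remainder of the $A$-term, and separately bounds $|\partial_{x_i}f|$ via the maximization of $x\mapsto x/(2c+x)^2$ at $x=2c$. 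Your sketch contains neither the correction step nor this case split.

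As an aside, your treatment of Claim~\ref{claim:goodq}---noting that $N\widehat{Q}$ is literally a sum of $N\ge n/4$ independent Bernoullis and applying Hoeffding directly---is cleaner than the paper's truncation coupling and yields $|\widehat{Q}-Q|=O(\sqrt{\log n/n})$; feeding this into the corrected sensitivity argument also tightens the final error on $h_v$ to $O(\sqrt{\log n/n})$, better than the $O(\log n/\sqrt{n})$ stated.
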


\begin{proofof}{Claim~\ref{claim:goodq}}
	Assume, for sake of a thought experiment, that each $\hat{p}_i$ is restricted to a range of $\theta \bigg( \sqrt{\frac{\log |\mathsf{C}_{t_i}|}{|\mathsf{C}_{t_i}|}}\bigg)$ around $p_i$. Then, if we take the weighted average $\frac{\sum_i |C_{t_i}| \hat{p}_i}{\sum_i |C_{t_i}|}$ of the $\hat{p}_i$s, the size of the range of each term $\frac{|C_{t_i}| \hat{p}_i}{\sum_i |C_{t_i}|}$ is at most $\theta \bigg(\frac{\sqrt{ |C_{t_i}|\log n}}{n} \bigg)$. Thus, the variance of the sum is upper bounded by $O(\frac{\log n}{n})$ (this uses the fact that $n \ge \sum_i |C_{t_i}| \ge \frac{n}{4}$). Using an exponential tail bound would lead us to the desired result under this thought experiment.
	
	Consequently, the natural approach is to directly take the weighted average  of the $\hat{p}_i$s, since this would be an unbiased estimator for $Q$. However, we only have the guarantee that each $\hat{p}_i$ is at most $\theta \bigg( \sqrt{\frac{|\mathsf{C}_{t_i}| \log n}{|\mathsf{C}_{t_i}|}}\bigg)$ away from $p_i$ as a high probability event rather than as absolute truth. To get around this roadblock, for each $i \in [k]$, we define a real valued random variable $\p_i$, and introduce the following coupling between $\hat{p}_i$ and $\p_i$ based on truncating $\hat{p}_i$:
	
	\[ \p_i = \begin{cases}
		\hat{p}_i, \text{ when } |\hat{p}_i - p_i| \le 4 \bigg( \sqrt{\frac{\log |\mathsf{C}_{t_i}|}{|\mathsf{C}_{t_i}|}}\bigg) \\
		p_i + 4 \bigg( \sqrt{\frac{\log |\mathsf{C}_{t_i}|}{|\mathsf{C}_{t_i}|}}\bigg), \text{ when } \hat{p}_i - p_i > 4 \bigg( \sqrt{\frac{\log |\mathsf{C}_{t_i}|}{|\mathsf{C}_{t_i}|}}\bigg) \\
		p_i -  4 \bigg( \sqrt{\frac{\log |\mathsf{C}_{t_i}|}{|\mathsf{C}_{t_i}|}}\bigg), \text{ otherwise }
	\end{cases}
	\]
	
	We complete the description of the algorithm by defining our final estimator : $\widehat{Q} := \frac{ |C_{t_i}| \hat{p}_i}{\sum_i |C_{t_i}|}$ .
	
	We observe that the random variables $\{\p_i\}_{i \in [k]}$ are independent. This follows from the fact that the random variables $\{\hat{p}_i\}_{i \in [k]}$ are themselves independent. As a consequence of this doing this truncation, the resultant $\p_i$ is no longer an unbiased estimator of $p_i$, however we show that this does not functionally affect us. To do so, we prove a claim showing that the expectations of the coupled random variables are very close to each other. 
	
	\begin{claim}
		\label{claim:close_means}
		For each $i \in [k]$, $ |\E[\p_i] - \E[\hat{p}_i]| \le \frac{2}{n^6}$.
	\end{claim}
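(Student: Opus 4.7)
The plan is to prove Claim~\ref{claim:close_means} by a direct coupling argument, exploiting that $\p_i$ and $\hat{p}_i$ agree outside a very low-probability event and differ by a bounded amount when they disagree. First I would observe the two key ingredients. The first ingredient is that by Lemma~\ref{lemma:anchoring} applied with $k = |C_{t_i}|$ leaves as anchors, we have
\[
\Pr\!\left[\,|\hat{p}_i - p_i| > 4\sqrt{\tfrac{\log |C_{t_i}|}{|C_{t_i}|}}\,\right] \;\le\; \frac{1}{n^6},
\]
since the high-probability guarantee from that lemma is exactly of the form $1 - 1/n^6$. The second ingredient is that both random variables are uniformly bounded: $\hat{p}_i$ is an empirical frequency, so $\hat{p}_i \in [0,1]$, and the piecewise definition of $\p_i$ places it in $\bigl[p_i - 4\sqrt{\log|C_{t_i}|/|C_{t_i}|},\, p_i + 4\sqrt{\log|C_{t_i}|/|C_{t_i}|}\bigr] \subseteq [-1, 2]$ (for $n$ large, the square-root term is $o(1)$, so a bound of $2$ suffices).

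Next I would bound the expected gap pointwise. Let $E$ denote the event $\{|\hat{p}_i - p_i| \le 4\sqrt{\log |C_{t_i}|/|C_{t_i}|}\}$. By the first bullet of the coupling's definition, $\p_i = \hat{p}_i$ on $E$, so $\p_i - \hat{p}_i = 0$ on $E$. On the complement $E^c$, the two values $\p_i, \hat{p}_i$ both lie in $[-1,2]$, so $|\p_i - \hat{p}_i| \le 3$. Hence
\[
|\E[\p_i] - \E[\hat{p}_i]| \;\le\; \E\bigl[|\p_i - \hat{p}_i|\bigr] \;=\; \E\bigl[|\p_i - \hat{p}_i|\,\mathbf{1}_{E^c}\bigr] \;\le\; 3 \cdot \Pr[E^c] \;\le\; \frac{3}{n^6}.
\]
Up to the harmless constant, this gives the claimed $\frac{2}{n^6}$ bound; if a factor of $2$ is required exactly, I would sharpen the pointwise bound by noting that the truncated value $\p_i$ is always within $3$ of $\hat{p}_i$ but in fact within $|\hat{p}_i - p_i| + 4\sqrt{\log|C_{t_i}|/|C_{t_i}|} \le 1 + o(1)$ on $E^c$ (since $\hat{p}_i \in [0,1]$ and $p_i \in [0,1]$), which yields exactly $\le 2/n^6$ for all sufficiently large $n$.

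There is no real obstacle here; the entire argument is a one-line application of the identity $\E[X] - \E[Y] = \E[(X-Y)\mathbf{1}_{E^c}]$ whenever $X=Y$ on $E$, combined with the high-probability deviation bound already supplied by Lemma~\ref{lemma:anchoring}. The only subtlety to flag is to make sure the ``$2$'' in the bound actually comes out: using that $\hat{p}_i,p_i \in [0,1]$ and the square-root slack tends to zero gives $|\p_i - \hat{p}_i| \le 2$ on $E^c$ for large $n$, matching the statement as written.
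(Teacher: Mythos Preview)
Your proposal is correct and follows essentially the same approach as the paper: both argue that $\p_i = \hat{p}_i$ on the high-probability event $E$, so $|\E[\p_i]-\E[\hat{p}_i]| \le \E[|\p_i-\hat{p}_i|\mathbf{1}_{E^c}]$, and then bound the pointwise difference on $E^c$ using $\hat{p}_i,p_i\in[0,1]$ together with $\Pr[E^c]\le 1/n^6$. The only cosmetic difference is that the paper splits $E^c$ into the two truncation cases to get contributions $(1-p_i)/n^6$ and $p_i/n^6$ (summing to $1/n^6$), whereas you use the uniform bound $|\p_i-\hat{p}_i|\le 1+o(1)\le 2$ on $E^c$; either way the claimed $2/n^6$ follows.
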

	
	\begin{proof}
		We already know that $\Pr[|\hat{p}_i - p_i| \ge \kappa_1 \bigg( \sqrt{\frac{\log |\mathsf{C}_{t_i}|}{|\mathsf{C}_{t_i}|}}\bigg) ] \le \frac{1}{n^6}$. Additionally, the random variable $\hat{p}_i$ takes its value in the range $[0,1]$ since it is an empirical average. This also implies that $\E[\hat{p}_i] = p_i \in [0,1]$. Thus, we have:
		
		\[ |\E[\p_i] - \E[\hat{p}_i] | \le |\frac{1- p_i}{n^6} |  + |\frac{p_i}{n^6} | \]
		which gives us the desired result.
	\end{proof}

	Now, define $j_i := |C_{t_i}|$ for each $i \in [k]$. Define the function $g(\p_1,\p_2,..\p_k) := \frac{\sum_{i=1}^k j_i \p_i }{\sum_{i=1}^k j_i}$  on the domain $\mathbb{R}^k$ 
	Each random variable $\frac{ j_i \p_i }{\sum_{l=1}^k j_l}$ is within the interval $[a_i,b_i]$ such that $L_i = |b_i -a_i| = \theta ( \frac{\sqrt{j_i \log n}}{n})$.
	We know that $\sum_{i=1}^k L^2_i = \sum_{i=1}^k \theta (\frac{ j_i \log n}{n^2}) = \theta (\frac{\log n}{n})$.


	Using the above property, we appeal to the generalized Hoeffding bound (Theorem~\ref{theorem:generalized_hoeffding}).
	With probability at most $\frac{1}{n^6}$, we have $|g(\p_1,\p_2,..,\p_k) -  \E[g(\p_1,\p_2,..,\p_k)]| \ge \theta \bigg(\frac{\log n}{\sqrt{n}} \bigg)$. Using Claim~\ref{claim:close_means}, we know that $|\E[g(\p_1,\p_2,..,\p_k)] -  \E[\frac{ \sum_i j_i \hat{p}_i}{ \sum_i j_i}]| \le \frac{2}{n^6}$. Using linearity of expectation, we know that $\E[\frac{ \sum_i j_i \hat{p}_i}{ \sum_i j_i}]= \frac{ \sum_i j_i p_i}{ \sum_i j_i} = Q$.  Using the triangle inequality, the event $|\widehat{Q} - Q| \le \theta \bigg(\frac{\log n}{\sqrt{n}} \bigg)$ happens with high probability.
	

\end{proofof}

\begin{proofof}{Claim~\ref{claim:final_estimate_constructor}}
	
	First, we describe how to recover estimator $\h_v^f$ for $h_v$ from $\widehat{Q}$ using the estimates $\{\h_{t_i}\}_{i \in [k]}$.
	
	Define $j_i := |C_{t_i}|$ for each $i \in [k]$. We define a multivariate function $F$ that operates on inputs $a \in \mathbb{R}$ and $b = (b_1, b_2 ,\cdots b_k) \in \mathbb{R}^k$. 
	
	\[ F(a,b) :=\frac{ \sum_{i=1}^k \frac{j_ib_i}{2b_i + a}}{\sum_i j_i} \]
	
	
	Alternatively, we write $F(a,b) = \frac{\sum_{i=1}^k j_i f(a,b_i) }{\sum_i j_i}$ where $f(a,b_i) = \frac{b_i}{2b_i + a}$, i.e., we write $f$ as a convex combination of $k$ functions. For ease of notation, we call $\alpha_i = \frac{j_i}{\sum_{a=1}^k j_a}$. Note that all $\alpha > 0$ and $\sum_i \alpha_i = 1$.
	
	Let $H = \{ h_{t_1}, h_{t_2}, \cdots h_{t_k}\}$ and $\widehat{H} = \{ \h_{t_1}, \h_{t_2}, \cdots \h_{t_k}\}$. Observe that $Q = F(h_v,H)$. Intuitively, we want to invert the value $\widehat{Q}$ of the function $F$ using our estimates $H'$ to obtain an estimate $\h_v$ that is close to $h_v$. We prove a claim below that tells us how to do so.
	
	
	\begin{claim}
		There exists a unique real number $\h_v$ such that $F(\h_v, \widehat{H}) = \widehat{Q}$. Additionally, this real number can be found using numerical methods from $\widehat{Q}$ and $\widehat{H}$.
	\end{claim}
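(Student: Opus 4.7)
The plan is to establish existence and uniqueness by reducing the equation $F(\hat h_v,\widehat H) = \widehat Q$ to a one-dimensional root-finding problem and invoking the intermediate value theorem. Define $g(a) \coloneqq F(a,\widehat H) = \sum_{i=1}^k \alpha_i \cdot \frac{\hat h_{t_i}}{2\hat h_{t_i}+a}$. The first step is to establish that each summand $f(a,\hat h_{t_i})$ is continuous and strictly decreasing in $a$ on $[0,\infty)$, since $\partial f/\partial a = -\hat h_{t_i}/(2\hat h_{t_i}+a)^2 < 0$ whenever $\hat h_{t_i} > 0$. Because the $\alpha_i$ form a positive convex combination, $g$ inherits continuity and strict monotonicity.

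For this argument to go through I need every $\hat h_{t_i}$ to be strictly positive. This is ensured by the minimum-edge-weight hypothesis $\tau \log n/\sqrt n$: each heavy vertex $v_{t_i}$ has $h_{v_{t_i}} \ge \tau \log n/\sqrt n$, and Lemma~\ref{lem:reconstruct-weight-rightpath} delivers an estimate with additive error $\Theta(\sqrt{\log n/n})$, which is asymptotically smaller since $\tau \gg \kappa$. So with high probability every $\hat h_{t_i} > 0$, and one can safely project any pathological estimate onto a small positive value without affecting correctness.

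The second step is to evaluate the range of $g$. At $a=0$, each $f(0,\hat h_{t_i}) = 1/2$, so $g(0) = 1/2$. As $a \to \infty$, every $f(a,\hat h_{t_i}) \to 0$, so $g(a) \to 0$. Hence $g$ maps $[0,\infty)$ bijectively onto $(0,1/2]$. I then need to verify $\widehat Q \in (0,1/2)$. Each true $p_i = h_{t_i}/(2h_{t_i}+h_v)$ lies strictly in $(0,1/2)$ since $h_v > 0$, so $Q = \sum_i \alpha_i p_i \in (0,1/2)$, bounded away from both endpoints. Combined with Claim~\ref{claim:goodq}'s guarantee $|\widehat Q - Q| = O(\log n/\sqrt n)$, $\widehat Q \in (0,1/2)$ holds with high probability.

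The intermediate value theorem now supplies some $\hat h_v \in (0,\infty)$ with $g(\hat h_v) = \widehat Q$, and strict monotonicity of $g$ upgrades this to uniqueness. For the algorithmic claim, since $g$ is monotone, continuous, and cheaply evaluable given $\widehat H$, standard bisection on $[0,M]$ for any sufficiently large $M$ (e.g., $M = 1$, since $h_v \le 1$) converges to $\hat h_v$ at exponential rate in the number of iterations; Newton's method on $g(a)-\widehat Q$ works equally well. The only real obstacle is the positivity of the $\hat h_{t_i}$'s handled in the preceding paragraph; everything else is a clean application of IVT plus a one-line derivative computation.
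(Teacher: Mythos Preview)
Your proof is correct and follows essentially the same approach as the paper: reduce to the univariate function $a \mapsto F(a,\widehat H)$, observe it is strictly decreasing as a convex combination of strictly decreasing functions $f(a,\hat h_{t_i})$, conclude invertibility, and appeal to binary search/bisection for the numerical recovery. Your version is in fact more careful than the paper's, which omits the explicit range computation and the verification that $\widehat Q$ lies in that range; the only quibble is that the bisection upper endpoint $M=1$ need not contain $\hat h_v$ (the estimate can overshoot), but any larger $M$ works and the subsequent truncation $\hat h_v^f \leftarrow \min\{\hat h_v,\min_i \hat h_{t_i}\}$ handles this anyway.
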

	
	\begin{proof}
		Fix the second argument of $F$ to be $\widehat{H}$, $F$ is now a univariate function on the first argument. This function is monotone decreasing since each univariate function $f(a,\h_{t_i})$ is monotone decreasing and $F(a,\widehat{H})$ is a convex combination of these functions. Thus, $F$ is invertible. Further, simple numerical methods such as binary search can be used to find $\h_v$ upto arbitrary precision such that $F(\h_v, \widehat{H}) = \widehat{Q}$.  
	\end{proof}
	
	The algorithm then makes the following final correction to the estimate : $\h_v^f \leftarrow \min \{\h_v, \min_i \h_{t_i} \} $ - there is a compelling reason to make such a correction - to ensure that the vertex $v$ always has estimated height that is no larger than the estimated height of one of its ancestors. Note that since we eventually show that the error in estimating edge weight is only a fraction of the minimum edge weight, such an event never happens in practice -  however, we include this correction for the sake of the analysis leading to the proof of that result.
	
	First, we show a simplified analysis, for the case that $\h_v \le \min_i \h_{t_i}$. Here the final estimator is $\h_v^f = \h_v$. We will later show a more involved analysis for when our estimate does not satisfy these conditions.

	For the sake of contradiction, assume that $|\h_v - h_v| \ge \Delta \frac{\log n}{\sqrt{n}}$, where $\Delta$ is a sufficiently large constant. Since  $|\widehat{Q} - Q| \le \theta (\frac{\log n}{\sqrt{n}} )$, we get:
	
	\begin{align*}
		|\sum_i \alpha_i \{f(h_v,h_{t_i}) - f(\h_v,\h_{t_i}) \}| &\le \theta (\frac{\log n}{\sqrt{n}} ) \\
		\implies |\sum_i \alpha_i \{(f(h_v,h_{t_i}) - f(\h_v,h_{t_i})) + (f(\h_v,h_{t_i}) - f(\h_v,\h_{t_i})) \}|  &\le \theta (\frac{\log n}{\sqrt{n}} )    \tag{*}
	\end{align*}
	
	Our approach is to show that each $A_i := |f(h_v,h_{t_i}) - f(\h_v,h_{t_i})|$ is sufficiently larger (by at least some $\Omega (\frac{\log n}{\sqrt{n}} ) $  than any $B_i := |f(\h_v,h_{t_i}) - f(\h_v,\h_{t_i})|$. Using the triangle inequality would finish the proof. To this purpose, recall that $|\h_{t_i} - h_{t_i}| \le \theta (\fs)$.
	
	Intuitively, we wish to show that the function $f(a,b_i)$ changes rapidly with change in the first input $a$ (large partial derivative in our interval of interest) while it is relatively more stable with change in the second input $b_i$ (small partial derivative in our interval of interest). To do so, we prove some properties of the function $f(a,b_i)$.
	
	\[ \frac{\partial f(a,b_i)}{\partial b_i} = \frac{a}{(2b_i+a)^2} \]
	
	\[ \frac{\partial f(a,b_i)}{\partial a} = \frac{-b_i}{(2a_i+b)^2} \]
	
	We know that $h_{t_i} > h_v$ for all indices $i$. Additionally, $h_{t_i} \ge \tau \frac{\log n}{\sqrt{n}}$ where $\tau$ is a large constant. We argue that it is not possible to independently bound each quantity $A_i$ and $B_i$ - to see why this is the case - observe that the partial derivative with respect to $b_i$ can only be upper bounded by a constant while the partial derivative with respect to $a$ can be as small as $\theta(\frac{\log n}{\sqrt{n}})$. These bounds do not suffice to prove the desired comparison between the quantities $A_i$ and $B_i$. Therefore, it is important that we carefully compare these partial derivatives when establishing that $A_i$ is significantly larger than $b_i$. More formally :  Using the intermediate value theorem on the (continuous) univariate function $f(a,h_{t_i})$ (Second argument fixed to be $h_{t_i}$) in the interval $[h_v,h'_v]$, we know there exists $x \in [h_v,h'_v]$ such that:
	
	\begin{align*}
		A_i = |f(h_v,h_{t_i}) - f(\h_v,h_{t_i})| &=  \left | \frac{\partial f(a,h_{t_i})}{\partial a} \right|_{a=x}. |h_v -\h_v| \\
		&= \left | \frac{h_{t_i}}{(2h_{t_i} + x)^2} \right|. |h_v -\h_v|
	\end{align*}

	Similarly, using the intermediate value theorem on the univariate function $f(\h_v,b)$ (first argument fixed to be $\h_{v}$) in the interval $[h_{t_i},\h_{t_i}]$, we know there exists $y \in [h_{t_i},\h_{t_i}]$ such that:
	
	\begin{align*}
		B_i = |f(\h_v,h_{t_i}) - f(\h_v,\h_{t_i})| &=  \left | \frac{\partial f(\h_v,b)}{\partial b} \right|_{b=y}. |h_{t_i}-\h_{t_i}| \\
		&= \left | \frac{\h_v}{(2y + \h_v)^2} \right|. |h_{t_i} -\h_{t_i}|
	\end{align*}
	
	Since $|\h_{t_i} - h_{t_i} | \le \theta (\fs)$ and $h_{t_i} \ge \tau \frac{\log n}{\sqrt{n}}$, we know that $2\h_{t_i} \ge h_{t_i} \ge \h_{t_i}/2 $. We also know that $\h_v \le \h_{t_i}$ by our assumption and hence $\h_v \le 2h_{t_i}$ which would imply that $x \le 2h_{t_i}$. Thus, we have $\frac{h_{t_i}}{(2h_{t_i}+x)^2} \ge \frac{h_{t_i}}{16h^2_{t_i}} = \frac{1}{16h_{t_i}}$ and $\frac{\h_v}{(2y + \h_v)^2} \le \frac{2}{h_{t_i}}$. Note that using the triangle inequality $|A_i| - |B_i| \ge ||A_i| - |B_i||$. Thus, as long as $\tau$ is large enough, we conclude that $|A_i| - |B_i| \ge \tau/20 \frac{\log n}{\sqrt{n}}$. Substituting this into inequality~*, we get a contradiction, leading us to conclude that $|h'_v - h_v| \le \Delta \frac{\log n}{\sqrt{n}}$.
	
	The final part of the proof is for the case where there exists some $i$ such that $\h_v > \h_{t_i} $. In this event, recall that our algorithm uses the smallest such height $U = \min_i \{\h_{t_i}\}$ as the final estimate $\h_v^f$ for $h_v$. We show that $|U - h_v| \le \Delta \frac{\log n}{\sqrt{n}}$ with high probability.  First, we note that $U > h_v$.Assume that  $|U - h_v| > \Delta \frac{\log n}{\sqrt{n}}$, for sake of contradiction. We will show similar bounds on the expressions $A_i$ and $B_i$ from the above proof. First, we lower bound $|A_i|$. Consider the (continuous) univariate function $f(a,h_{t_i})$ on the intervals $[h_v,U]$ and $[U,\h_v]$. Using the intermediate value theorem on both intervals, there exists $x_1 \in [h_v,U] $ and $x_2 \in [U,\h_v]$ such that:

	\begin{align*}
		A_i := |f(h_v,h_{t_i}) - f(\h_v,h_{t_i})| &=  \left | \frac{\partial f(a,h_{t_i})}{\partial a} \right|_{a=x_1}. |U -h_v| +  \left | \frac{\partial f(a,h_{t_i})}{\partial a} \right|_{a=x_2}. |\h_v -U|\\
		&\ge \left | \frac{\partial f(a,h_{t_i})}{\partial a} \right|_{a=x_1}. |U -h_v| \\
		&= \left | \frac{h_{t_i}}{(2h_{t_i} + x_1)^2} \right|. |U -h_v|
	\end{align*}
	
	Thus, with similar reasoning as before, we have $\frac{h_{t_i}}{(2h_{t_i}+x_1)^2} \ge \frac{h_{t_i}}{16h^2_{t_i}} = \frac{1}{16h_{t_i}}$ since $x_1 \le \h_{t_i}$ for all $i$.
	
	Now, to upper bound $B_i$, we consider the (continuous) univariate function $f(\h_v,b)$ on the interval $[h_{t_i},\h_{t_i}]$, we know there exists $y \in [h_{t_i},\h_{t_i}]$ such that:
	
	\begin{align*}
		B_i := |f(\h_v,h_{t_i}) - f(\h_v,\h_{t_i})| &=  \left | \frac{\partial f(\h_v,b)}{\partial b} \right|_{b=y}. |h_{t_i}-\h_{t_i}| \\
		&= \left | \frac{\h_v}{(2y + \h_v)^2} \right|. |h_{t_i} -\h_{t_i}|
	\end{align*}
	
	Consider the univariate function $g(x) = \frac{x}{(2c+x)^2}$. Note that $g(x) > 0$ if $c,x > 0$. Thus, $|g(x)|$ (for $c,x > 0$) is maximized at the same point as $g(x)$ which is at $x = 2c$. Thus, $\frac{\h_v}{(2y + \h_v)^2} \le \frac{2y}{16y^2} = \frac{1}{8y} \le \frac{1}{4h_{t_i}}$. Similar analysis as before leads us to a contradiction ( we can show that $|A_i| \ge 2|B_i|$, which suffices to prove the result).

\end{proofof}

\section{Proof of Theorem~\ref{theorem:necessary} (Lower Bound on Edge Weights)}
\label{app:lower}
	Let us assign an arbitrary ordering on the set of all possible triples of leaves $(r,s,t)$  from $1$ to $k$ where $k = \binom{n}{3}$. Let $\bQ_j^{(1)}$ (resp.
	$\bQ_j^{(2)}$) denote the random variable corresponding to the response on the $j^{th}$ query (i.e., triple) from tree $T_1$ (resp. $T_2$). Let $\bQ^{(1)}$ (resp. $\bQ^{(2)}$) denote the $\binom{n}{3}$-dimensional random variable whose $j^{th}$ coordinate is $\bQ_j^{(1)}$ (resp. $\bQ_j^{(2)}$). Both $\bQ^{(1)}$ and $\bQ^{(2)}$ follow  product distributions. To prove our theorem, it suffices to prove the following 
	\[
	\Delta(\bQ^{(1)},\bQ^{(2)}) \le 0.01. 
	\]
	Here $\Delta (\cdot, \cdot)$ refers to the total variation distance.
	The above inequality follows from Lemma~\ref{lem:total-variation} which we prove next. 
\begin{lemma}~\label{lem:total-variation}
	For $\bQ^{(1)}$ and $\bQ^{(2)}$ as described above, $\Delta(\bQ^{(1)},\bQ^{(2)}) \le 0.01$. 
\end{lemma}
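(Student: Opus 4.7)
The plan is to bound the total variation distance by appealing to Pinsker's inequality after controlling KL divergence, exploiting the product structure.

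First I would observe that because the experiment outcomes are independent across triples, $\bQ^{(1)}$ and $\bQ^{(2)}$ are product distributions on $\binom{n}{3}$ coordinates, so by the chain rule for KL divergence
\[
\mathrm{KL}\bigl(\bQ^{(1)} \,\big\|\, \bQ^{(2)}\bigr) \;=\; \sum_{j=1}^{\binom{n}{3}} \mathrm{KL}\bigl(\bQ_j^{(1)} \,\big\|\, \bQ_j^{(2)}\bigr),
\]
and Pinsker's inequality gives $\Delta(\bQ^{(1)},\bQ^{(2)}) \le \sqrt{\tfrac12 \mathrm{KL}(\bQ^{(1)}\|\bQ^{(2)})}$. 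Thus it suffices to show that the total KL is bounded by $O(\rho^2)$, from which choosing $\rho \le 1/100$ yields $\Delta \le 0.01$.

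Next, I would classify which queries contribute. Since the left subtree $B$ is common to both trees and all root-leaf paths have length $1$, for any triple $(r,s,t)$ that contains at most one of $\{a,b,c\}$, the three pairwise distances are identical in $T_1$ and $T_2$, so $\bQ_j^{(1)}$ and $\bQ_j^{(2)}$ coincide. The same is true for triples of the form $(b,c,z)$ with $z \in B$: in both trees $d(b,c) = 2/3 + 2\rho/\sqrt n$ and $d(b,z) = d(c,z) = 2$. Hence the only queries that contribute are (i) the single triple $(a,b,c)$ and (ii) triples of the form $(a,b,z)$ or $(a,c,z)$ with $z \in B$, a total of at most $2(n-3)+1 = O(n)$ queries.

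For each of these queries I would show $\mathrm{KL}(\bQ_j^{(1)} \| \bQ_j^{(2)}) = O(\rho^2/n)$. For instance, on $(a,b,z)$ the homogeneous model gives $\Pr[(a,b)] = 3/7$ in $T_1$ versus $\tfrac{3}{7}\cdot(1+3\rho/(7\sqrt n))^{-1}$ in $T_2$; these agree to within $O(\rho/\sqrt n)$ on every outcome. An analogous computation applies to $(a,c,z)$, and for $(a,b,c)$ each of the three outcome probabilities is within $O(\rho/\sqrt n)$ of $1/3$ in both trees (with only the identity of the favored pair flipped). Because all outcome probabilities are bounded below by a constant (roughly $1/3$ or $3/7$ up to lower-order terms), one has the standard bound $\mathrm{KL}(P\|Q) \le \chi^2(P\|Q) = \sum_i (p_i-q_i)^2/q_i = O(\rho^2/n)$ via a Taylor expansion around $P=Q$. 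Summing over the $O(n)$ contributing queries yields $\mathrm{KL}(\bQ^{(1)}\|\bQ^{(2)}) = O(\rho^2)$, and Pinsker concludes.

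The main technical step is the per-query KL bound. It is not difficult but requires verifying that each outcome probability is bounded away from $0$ uniformly (so that $\chi^2$ controls KL with an absolute constant), and carefully checking that the distance-to-$B$ triples $(b,c,z)$ genuinely contribute zero — this pruning is what keeps the count of differing queries at $O(n)$ rather than $O(n^2)$, and is essential for the $O(\rho^2)$ final bound. The rest is bookkeeping.
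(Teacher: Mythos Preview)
Your proposal is correct and follows essentially the same route as the paper: the same partition of triples into those with at most one of $\{a,b,c\}$ or of the form $(b,c,z)$ (which contribute nothing) versus the $O(n)$ triples $(a,b,c)$, $(a,b,z)$, $(a,c,z)$ (each contributing $O(\rho^2/n)$), then a product-distribution bound to pass to total variation. The only cosmetic difference is that the paper uses squared Hellinger distance (which is subadditive over products) together with $\Delta \le \sqrt{2}\,H$, whereas you use the chain rule for KL plus Pinsker; both are standard and yield the same $\Delta = O(\rho)$ conclusion.
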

\begin{proof}
	We start by partitioning the set $[k]$ into five different sets defined below: 
	\begin{enumerate}
		\item Let $L ' = L \setminus \{a,b,c\}$. Define $\mathcal{A}_1 \subseteq [k]$ to be the set of those indices which correspond to triples $(r,s,t)$ such that 
		$|\{r,s,t\} \cap L'| \ge 2$. 
		\item Define $\mathcal{A}_2 \subseteq [k]$ to be the set of those indices which correspond to triples $(r,b,c)$ where $r \in L'$. 
		\item Let $j^\ast$ be the index which corresponds to the triple $(a,b,c)$. Let $\mathcal{A}_3 = \{j^\ast\}$. 
		\item Let $\mathcal{A}_4$ be the set of those indices which correspond to the triples of the form $(a,b,x)$ for $x \in L'$. 
		\item Let $\mathcal{A}_5$ be the set of those indices which correspond to the triples of the form $(a,c,x)$ for $x \in L'$.
	\end{enumerate}
	In order to analyze $\Delta(\bQ^{(1)},\bQ^{(2)})$, it is more convenient to look at the notion of Hellinger distance (see Definition~\ref{def:Hellinger}). 
	We start with the following easy claims. 
	\begin{claim}~\label{clm:identical-variables}
		Let $j \in \mathcal{A}_1$. Then the random variables $\bQ_j^{(1)}$ and  $\bQ_j^{(2)}$ are identical. Thus, 
		$H^2(\bQ_j^{(1)}, \bQ_j^{(2)}) =0$. 
	\end{claim}
	\begin{proof}
		Let $j$ correspond to the triple $(r,s,t)$. 
		Note that the random variables $\bQ_j^{(1)}$ and  $\bQ_j^{(2)}$ are just dependent on the pairwise distances between the leaves $r$, $s$ and $t$. It is easy to observe that as long as two of these leaves are in $L'$, their pairwise distances are same both in trees $T_1$ and $T_2$.  This proves the claim. 
	\end{proof}
	\begin{claim}~\label{clm:identical-variables-prime}
		Let $j \in \mathcal{A}_2$. Then the random variables $\bQ_j^{(1)}$ and  $\bQ_j^{(2)}$ are identical. Thus, 
		$H^2(\bQ_j^{(1)}, \bQ_j^{(2)}) =0$. 
	\end{claim}
	\begin{proof}
		Let $j$ correspond to the triple $(r,c,b)$. 
		As in Claim~\ref{clm:identical-variables}, the pairwise distances between the vertices $r$, $c$ and $b$ are the same in both $T_1$ and $T_2$. The claim now follows. 
	\end{proof} 
	Using Claim~\ref{clm:identical-variables-prime} and Claim~\ref{clm:identical-variables}, it follows tha
\end{proof}
\begin{claim}~\label{clm:diff-probab1}
	Let $\mathcal{A}_3 = \{j^\ast\}$. Then, $H^2(\bQ_{j^\ast}^{(1)}, \bQ_{j^\ast}^{(2)}) \le \frac{4\rho^2}{n}$.
\end{claim}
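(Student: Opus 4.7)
The plan is to compute the response distributions under the homogeneous model explicitly and bound the Hellinger distance directly. Writing $\eta \coloneqq \rho/\sqrt{n}$ and tracing the paths in $T_1$, the internal edge $(p,q)$ having weight $\eta$ together with the sibling edges of weight $1/3$ forces $d(a,b) = 2/3$ while $d(a,c) = d(b,c) = 2/3 + 2\eta$. By the analogous construction of $T_2$ with the triple relabelled, one gets $d(a,c) = 2/3$ and $d(a,b) = d(b,c) = 2/3 + 2\eta$. The key structural observation is that $d(b,c)$ and the total sum $d(a,b)+d(a,c)+d(b,c) = 2 + 4\eta$ are identical in the two trees, so by the formula for $\Mh$ the probability assigned to the outcome $(b,c)$ coincides across $T_1$ and $T_2$. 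Concretely, setting $p \coloneqq \tfrac{4/3+4\eta}{4+8\eta}$ and $q \coloneqq \tfrac{4/3+2\eta}{4+8\eta}$, the distribution of $\bQ_{j^\ast}^{(1)}$ on $((a,b),(a,c),(b,c))$ is $(p,q,q)$, while the distribution of $\bQ_{j^\ast}^{(2)}$ is $(q,p,q)$ (and one can double-check that $p+2q=1$).

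With the two distributions in hand, the squared Hellinger distance collapses to a one-parameter calculation. The $(b,c)$ coordinate contributes zero, so under the paper's convention for $H^2$ (Definition~\ref{def:Hellinger}) we obtain
\[
H^2(\bQ_{j^\ast}^{(1)}, \bQ_{j^\ast}^{(2)}) \;\le\; 2(\sqrt{p}-\sqrt{q})^2 \;=\; \frac{2(p-q)^2}{(\sqrt{p}+\sqrt{q})^2}.
\]
A direct subtraction gives $p - q = \frac{\eta}{2+4\eta} \le \eta/2$, and since $\rho \le 1/100$ both $p$ and $q$ lie within $O(\eta)$ of $1/3$, so $(\sqrt{p}+\sqrt{q})^2 \ge 1$. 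Combining these estimates yields $H^2 \le \eta^2/2 = \rho^2/(2n)$, which is well within the claimed bound $4\rho^2/n$.

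The proof is essentially an explicit computation, so there is no real obstacle. The only book-keeping point is matching the constant $4$ to the normalization used in Definition~\ref{def:Hellinger}; the bound above loses a factor of at least eight, so the constant $4$ in the claim comes with comfortable slack and is insensitive to which of the common Hellinger conventions ($\frac{1}{2}\sum(\sqrt{p_i}-\sqrt{q_i})^2$ or $\sum(\sqrt{p_i}-\sqrt{q_i})^2$) is in force.
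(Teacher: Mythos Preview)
Your proof is correct and follows essentially the same approach as the paper: compute the two response distributions under $\Mh$, observe that the $(b,c)$ coordinate coincides (the paper phrases this as $p_2=q_2$), and bound $(\sqrt{p}-\sqrt{q})^2$ on the remaining two coordinates. The only cosmetic differences are that the paper keeps the distances abstract as $\alpha,\beta$ rather than plugging in $2/3$, and bounds $\sqrt{p_1}-\sqrt{q_1}$ by direct rationalization rather than via the identity $(\sqrt{p}-\sqrt{q})^2=(p-q)^2/(\sqrt{p}+\sqrt{q})^2$; your route in fact yields the sharper constant $\rho^2/(2n)$.
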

\begin{proof}
	Observe that both random variables $\bQ_{j^\ast}^{(1)}$ and  $\bQ_{j^\ast}^{(2)}$ are supported on the set $\{(a,b), (b,c) , (a,c) \}$. Let $p_1 := \Pr[\bQ_{j^\ast}^{(1)} = (a,b)]$, $p_2 := \Pr[\bQ_{j^\ast}^{(1)} = (b,c)]$ and $p_3 := \Pr[\bQ_{j^\ast}^{(1)} = (a,c)]$. Similarly, let $q_1 := \Pr[\bQ_{j^\ast}^{(2)} = (a,b)]$, $q_2 := \Pr[\bQ_{j^\ast}^{(2)} = (b,c)]$ and $q_3 := \Pr[\bQ_{j^\ast}^{(2)} = (a,c)]$.

	Let $d_1(\cdot, \cdot)$ be the distance metric on tree $T_1$ and $d_2(\cdot, \cdot)$ be the distance metric on tree $T_2$.  Now, define $\alpha := d_1(a,b) = d_2(a,c)$ and $\beta: = d_1(a,c) = d_1(b,c) = d_2(a,b) =d_2(b,c)$.

	Observe that $\alpha \ge \frac{2}{3}$ since $d_1(a,p) = d_2(a,p) =\frac{1}{3}$. Also, $\beta - \alpha = \frac{2\rho}{\sqrt{n}}$. 
	
	\[H^2(\bQ_{j^\ast}^{(1)},\bQ_{j^\ast}^{(2)}) = \frac{1}{2} ( (\sqrt{p_1} - \sqrt{q_1} )^2 + (\sqrt{p_2} - \sqrt{q_2} )^2 + (\sqrt{p_3} - \sqrt{q_3} )^2 )\].
	
	We rewrite the probabilities in terms of the distances $\alpha$ and $\beta$:
	\[
	p_1 = q_3 = \frac{2 \beta}{2(\alpha + 2 \beta)}; \ 
	p_2 = p_3 = q_1 = q_2 = \frac{\alpha + \beta}{2(\alpha + 2 \beta)}
	\]

	Next, we upper bound the quantity: $\sqrt{p_1} - \sqrt{q_1}$.
	
	\begin{align*}
		\sqrt{p_1} - \sqrt{q_1} &=  \sqrt{\frac{2 \beta}{2(\alpha + 2 \beta)}} - \sqrt{\frac{\alpha + \beta}{2(\alpha + 2 \beta)}} \\
		&= \frac{\sqrt{2\beta} - \sqrt{\alpha+\beta}}{\sqrt{2(\alpha+2\beta)}} \\
		&\le \sqrt{\frac{4}{3}} (\sqrt{2\beta} - \sqrt{\alpha+\beta}) \qquad \text{( since $\alpha \ge \frac{2}{3}$)} \\
		&= \sqrt{\frac{4}{3}} \frac{\beta - \alpha}{(\sqrt{2\beta} + \sqrt{\alpha+\beta})} \qquad \text{( multiplying and dividing by $(\sqrt{2\beta} + \sqrt{\alpha+\beta})$)} \\
		&\le \frac{2\rho}{\sqrt{n}}  \qquad \text{( since $\alpha \ge \frac{2}{3}$)}
	\end{align*}
	
	We know $p_2 = q_2$. The final term of interest is $|\sqrt{p_3} - \sqrt{q_3}|$. Since $p_1 = q_3$ and $p_3 = q_1$, we get the upper bound - 
	$|\sqrt{p_3} - \sqrt{q_3}| \le \frac{2\rho}{\sqrt{n}}$.
	
	Putting together these inequalities, we get 
	$H^2(\bQ_{j^\ast}^{(1)},\bQ_{j^\ast}^{(2)}) \le \frac{1}{2} (\frac{8 \rho^2 }{n})$.
\end{proof}






\begin{claim}
	~\label{clm:diff-probab2}
	Let $j \in \mathcal{A}_4$. Then, $H^2(\bQ_{j}^{(1)}, \bQ_{j}^{(2)}) \le \frac{\rho^2}{4n}$.
\end{claim}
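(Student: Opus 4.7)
The plan is to compute the three pairwise distances for the triple $(a,b,x)$ in both trees, derive the outcome probabilities under the homogeneous model, and then bound the squared Hellinger distance term by term using the identity $(\sqrt{p}-\sqrt{q})^2 = (p-q)^2/(\sqrt{p}+\sqrt{q})^2$.

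First I would record the key distances. Since $x \in L'$ lies in the shared left subtree $B$, while $a$ and $b$ both lie in the right subtree of the root in either tree, the LCA of $a$ (and of $b$) with $x$ is the root in both $T_1$ and $T_2$. Using that every root-to-leaf path has weight $1$, this yields $d_1(a,x) = d_1(b,x) = d_2(a,x) = d_2(b,x) = 2$. The only distance that differs between the trees is the one between $a$ and $b$: in $T_1$ they are siblings at height $1/3$ below $p$, so $d_1(a,b) = 2/3$; in $T_2$ their LCA is $y$ at height $1/3 + \rho/\sqrt{n}$, so $d_2(a,b) = 2/3 + 2\rho/\sqrt{n}$. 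Writing $\gamma := 2\rho/\sqrt{n}$, the homogeneous-model formula gives
\[
p_1 := \Pr[\bQ_j^{(1)}=(a,b)] = \tfrac{3}{7}, \quad q_1 := \Pr[\bQ_j^{(2)}=(a,b)] = \tfrac{2}{14/3 + \gamma},
\]
with analogous expressions for the $(a,x)$ and $(b,x)$ outcomes; these two are equal in each tree by the $a$-$b$ symmetry of the other distances.

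Next I would bound each of the three squared differences $(\sqrt{p_i}-\sqrt{q_i})^2$. A direct computation gives $|p_1-q_1| = \tfrac{12\gamma}{(14/3)(14/3+\gamma)} \le \tfrac{9\rho}{49\sqrt{n}}(1+o(1))$; similarly for the other two outcomes (with smaller leading constant $9/98$, owing to the Jacobian of the map $\alpha \mapsto (\alpha + 2)/(2(\alpha+4))$). Since every $p_i$ and $q_i$ lies in an interval bounded below by a constant (at least $\tfrac{2}{7} - O(1/\sqrt{n})$), each $(\sqrt{p_i}+\sqrt{q_i})^2$ is bounded below by a constant. Summing the three contributions and tracking the constants yields $H^2(\bQ_j^{(1)},\bQ_j^{(2)}) = O(\rho^2/n)$ with an explicit leading coefficient that is easily seen to be well below $1/4$, giving the claimed bound $\rho^2/(4n)$.

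The main obstacle is purely bookkeeping: the proof is a direct algebraic estimate with no new conceptual ingredient. One must be slightly careful to Taylor-expand the probability differences $p_i - q_i$ to first order in $\rho/\sqrt{n}$ while using their zeroth-order values to lower-bound $\sqrt{p_i}+\sqrt{q_i}$; this separation of orders is what makes the $\rho^2/n$ scaling come out cleanly, and crude (non-tight) lower bounds on the denominators are sufficient to reach the constant $1/4$.
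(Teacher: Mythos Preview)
Your approach is essentially the same as the paper's: compute the three outcome probabilities in each tree using the homogeneous model, then bound each $(\sqrt{p_i}-\sqrt{q_i})^2$ via the rationalization identity $(\sqrt{p}-\sqrt{q})^2=(p-q)^2/(\sqrt p+\sqrt q)^2$. The paper keeps $\alpha:=d_1(a,b)$ and $\beta:=d_2(a,b)$ symbolic (using only $\beta-\alpha=2\rho/\sqrt n$ and $\alpha,\beta\ge 0$) rather than substituting $\alpha=2/3$, but this is cosmetic. One small arithmetic slip: your intermediate numerator ``$12\gamma$'' should be $2\gamma$; your stated leading term $9\rho/(49\sqrt n)$ is nonetheless correct.
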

\begin{proof}
	Let $j$ correspond to the triple $(a,b,x)$.  The support space of both $\bQ_{j}^{(1)}$ and $\bQ_{j}^{(2)}$ is $\{(a,b), (b,x) , (a,x) \}$. Let $p_1 := \Pr[\bQ_{j}^{(1)} = (a,b)]$, $p_2 := \Pr[\bQ_{j}^{(1)} = (b,x)]$ and $p_3 := \Pr[\bQ_{j}^{(1)}= (a,x)]$. Similarly, let $q_1 := \Pr[\bQ_{j}^{(2)} = (a,b)]$, $q_2 := \Pr[\bQ_{j}^{(2)} = (b,x)]$ and $q_3 := \Pr[\bQ_{j}^{(2)} = (a,x)]$.

	As in Claim~\ref{clm:diff-probab1}, let $d_1(\cdot, \cdot)$ be the distance metric on tree $T_1$ and $d_2(\cdot, \cdot)$ be the distance metric on $T_2$. 
	Let $ \alpha :=d_1(a,b) $ and $ \beta := d_2(a,b)$. We have $d_1(a,x) = d_1(b,x) = d_2(a,x) = d_2(b,x) = 2$, since the associated unique path goes through the root for each of these pairs. 
	Finally,  $\beta - \alpha = \frac{2\rho}{\sqrt{n}}$. Using the definition of Hellinger distance, we have : 
	
	\[H^2(\bQ_{j}^{(1)},\bQ_{j}^{(2)}) = \frac{1}{2} ( (\sqrt{p_1} - \sqrt{q_1} )^2 + (\sqrt{p_2} - \sqrt{q_2} )^2 + (\sqrt{p_3} - \sqrt{q_3} )^2 )\].
	Thus, we have, 
	\[
	p_1  = \frac{4}{2(4 + \alpha)}; \ q_1  = \frac{4}{2(4+\beta)}; \ p_2 = p_3  = \frac{2 + \alpha }{2(4 + \alpha)}; \ q_2 = q_3  = \frac{2 + \beta }{2(4 + \beta)}
	\]
	Next, we upper bound the quantity  $\sqrt{p_1} - \sqrt{q_1}$ as follows: 
	\begin{align}
		\sqrt{p_1} - \sqrt{q_1} &= \sqrt{2} \left( \frac{\sqrt{4+\beta} - \sqrt{4+\alpha}}{\sqrt{(4+\alpha)(4+\beta)}} \right) \nonumber\\
		&\le \frac{1}{2 \sqrt{2}} (\sqrt{4+\beta} - \sqrt{4+\alpha})  \nonumber\\
		&\le \frac{1}{2 \sqrt{2} (\sqrt{4+\beta} + \sqrt{4+\alpha})  } (\beta - \alpha)  \nonumber\\
		&\le \frac{1}{4 \sqrt{2}} \frac{ \rho}{\sqrt{n}} \label{eq:abc1}
	\end{align}
	Next, we upper bound $\sqrt{q_2} - \sqrt{p_2}$ as follows:
	\begin{align}
		\sqrt{q_2} - \sqrt{p_2} &= \frac{1}{\sqrt{2}} \frac{\sqrt{(2+\beta)(4+\alpha)}- \sqrt{(4+\beta)(2+\alpha)}}{\sqrt{(4+\alpha)(4+\beta)}}  \nonumber\\
		&\le \frac{1}{4 \sqrt{2}} (\sqrt{8 + \alpha \beta + 4 \beta + 2 \alpha} - \sqrt{8 + \alpha \beta + 2 \beta + 4 \alpha} )  \nonumber\\
		&= \frac{1}{ 4\sqrt{2} (\sqrt{8 + \alpha \beta + 4 \beta + 2 \alpha} + \sqrt{8 + \alpha \beta + 2 \beta + 4 \alpha} ) } (2\beta - 2\alpha)  \nonumber\\
		&\le \frac{1}{8 \sqrt{2}} \frac{\rho}{\sqrt{n}} \label{eq:abc2}
	\end{align}
	By an identical analysis, we have $\sqrt{q_3} - \sqrt{p_3} \le \frac{1}{8 \sqrt{2}} \frac{\rho}{\sqrt{n}}$. Combining this with \eqref{eq:abc1} and \eqref{eq:abc2} gives us the claim.
\end{proof}

Using an analysis identical to Claim~\ref{clm:diff-probab2}, we also get the following Claim. 
\begin{claim}~\label{clm:diff-probab3}
	Let $j \in \mathcal{A}_5$. Then, $H^2(\bQ_{j}^{(1)}, \bQ_{j}^{(2)}) \le \frac{\rho^2}{4n}$.
\end{claim}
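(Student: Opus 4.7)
The plan is to mirror the proof of Claim~\ref{clm:diff-probab2} almost verbatim, exploiting the symmetry between $T_1$ and $T_2$. Let $j \in \mathcal{A}_5$ correspond to the triple $(a,c,\ell)$ for some $\ell \in L'$. First I will catalogue the relevant pairwise distances. In $T_2$ the leaves $a$ and $c$ are siblings, so $d_2(a,c) = 2/3$, while in $T_1$ their least common ancestor is the vertex $q$ (the right child of the root), yielding $d_1(a,c) = 2/3 + 2\rho/\sqrt{n}$; call these quantities $\alpha$ and $\beta$ respectively. For any $\ell \in L'$, the unique path from $\ell$ to either $a$ or $c$ passes through the root of the whole tree, so $d_1(a,\ell) = d_1(c,\ell) = d_2(a,\ell) = d_2(c,\ell) = 2$ in both trees. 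Thus, compared to Claim~\ref{clm:diff-probab2}, the only changes are that the closer pair is $(a,c)$ rather than $(a,b)$ and the roles of $T_1$ and $T_2$ are swapped (since $a,c$ are siblings in $T_2$ rather than in $T_1$); the structural identity $\beta - \alpha = 2\rho/\sqrt{n}$ with $\alpha,\beta \ge 2/3$ is preserved.

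With this dictionary in hand, I would compute the three outcome probabilities under the homogeneous model exactly as in the previous claim: $p_1 := \Pr[\bQ_j^{(1)} = (a,c)] = 4/(2(\beta+4))$, $p_2 = p_3 = (2+\beta)/(2(\beta+4))$, and analogously $q_1 = 4/(2(\alpha+4))$, $q_2 = q_3 = (2+\alpha)/(2(\alpha+4))$. These are the same rational expressions as in Claim~\ref{clm:diff-probab2} with $\alpha$ and $\beta$ interchanged, and since squared Hellinger distance is symmetric in its arguments, the bound is unaffected by the swap.

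The last step reuses the same elementary manipulation: for each $i$, rationalize $\sqrt{p_i} - \sqrt{q_i}$ by multiplying and dividing by $\sqrt{p_i} + \sqrt{q_i}$. Each numerator reduces to a universal constant times $\beta - \alpha = 2\rho/\sqrt{n}$, while each denominator is bounded below by a positive constant because $\alpha,\beta \ge 2/3$. This yields $|\sqrt{p_1} - \sqrt{q_1}| \le \rho/(4\sqrt{2n})$ and $|\sqrt{p_i} - \sqrt{q_i}| \le \rho/(8\sqrt{2n})$ for $i=2,3$; squaring, summing, and halving gives $H^2(\bQ_j^{(1)},\bQ_j^{(2)}) \le \rho^2/(4n)$, as desired.

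I do not anticipate a substantive obstacle, since the argument is a rote algebraic replay of Claim~\ref{clm:diff-probab2}. The only point requiring care is confirming that the third leaf $\ell$ is equidistant from $a$ and from $c$ in both trees; this rests on the fact that the left subtree $B$ is identical in $T_1$ and $T_2$ and that the ultrametric property forces every root-to-leaf path to have length $1$, so the only way the two trees can differ on the triple $(a,c,\ell)$ is through $d(a,c)$ itself. Once this is verified, the entire template of Claim~\ref{clm:diff-probab2} transfers without modification.
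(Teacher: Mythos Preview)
Your proposal is correct and matches the paper's approach exactly: the paper simply states that the analysis is identical to Claim~\ref{clm:diff-probab2}, and you have carried out precisely that identification, correctly noting that the only change is the swap $(a,b)\leftrightarrow(a,c)$ together with the interchange of the roles of $T_1$ and $T_2$, which leaves the squared Hellinger distance unchanged by symmetry.
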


Putting together these claims, we can use Lemma~\ref{lemma:producthell} to show that 
$H(\bQ^{(1)}, \bQ^{(2)}) \le \frac{0.01}{\sqrt{2}}$, and then use Lemma~\ref{lemma:comparisontvd} to conclude that $\Delta(\bQ^{(1)},\bQ^{(2)}) \le 0.01$.

\section{Concentration Bounds and Measures of Statistical Distance}
\label{sec:con}

We list here some standard concentration results, that will be used to aggregate the results of stochastically independent queries

\begin{theorem}[Generalized Hoeffding Bound]
\label{theorem:generalized_hoeffding}
Let $\by_1, \by_2 ,.., \by_k$ be $k$ independent random variables, with each variable $\by_i$ having range $[a_i,b_i]$ and mean $y_i$  then: 
\[  \Pr \big[ \big|\sum_{i=1}^k y_i- \sum_{i=1}^k \by_i \big| \le t \big] \ \geq \  1 - e^{\frac{2t^2}{\sum_i (b_i -a_i)^2}}\] 
\end{theorem}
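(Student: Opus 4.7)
The plan is to follow the classical Chernoff-method proof of Hoeffding's inequality. First I would center the variables: set $\bZ_i \coloneqq \by_i - y_i$, so that each $\bZ_i$ has mean zero and is supported on an interval $[a_i - y_i, \, b_i - y_i]$ of length $b_i - a_i$. It then suffices to control $\Pr[\sum_i \bZ_i \ge t]$ and $\Pr[\sum_i \bZ_i \le -t]$ and combine them by a union bound, because $|\sum_i y_i - \sum_i \by_i| = |\sum_i \bZ_i|$.

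Second, I would apply the standard exponential Markov argument. For any $s > 0$,
\[
\Pr\Bigl[\sum_i \bZ_i \ge t \Bigr] \;\le\; e^{-st}\, \E\Bigl[e^{s \sum_i \bZ_i}\Bigr] \;=\; e^{-st}\, \prod_{i=1}^k \E\bigl[e^{s \bZ_i}\bigr],
\]
where the factorization uses independence of the $\by_i$'s (and hence of the $\bZ_i$'s).

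The main workhorse, and the most delicate step, is Hoeffding's lemma: for any mean-zero random variable $\bZ$ supported on $[\alpha,\beta]$, one has $\E[e^{s\bZ}] \le \exp\!\bigl(s^2(\beta-\alpha)^2/8\bigr)$. I would prove this by convexity: since $e^{sz}$ is convex, for $z \in [\alpha,\beta]$ we can bound $e^{sz}$ by the chord connecting $(\alpha, e^{s\alpha})$ and $(\beta, e^{s\beta})$; taking expectations and using $\E[\bZ]=0$ yields an explicit expression $e^{s\alpha}\tfrac{\beta}{\beta-\alpha} - e^{s\beta}\tfrac{\alpha}{\beta-\alpha}$, whose logarithm $\varphi(s)$ satisfies $\varphi(0)=\varphi'(0)=0$ and $\varphi''(s) \le (\beta-\alpha)^2/4$ (this last bound comes from noting that $\varphi''(s)$ is the variance of a Bernoulli-type law on $\{\alpha,\beta\}$, hence at most $(\beta-\alpha)^2/4$). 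Integrating twice gives $\varphi(s) \le s^2(\beta-\alpha)^2/8$, which is the lemma. This variance-of-a-two-point-distribution bound is where the factor $1/8$ (and hence the constant $2$ in the final exponent) originates, and is the step I expect to require the most care.

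Applying the lemma with $\alpha = a_i - y_i$, $\beta = b_i - y_i$ and plugging back into the Chernoff bound,
\[
\Pr\Bigl[\sum_i \bZ_i \ge t\Bigr] \;\le\; \exp\!\Bigl(-st + \tfrac{s^2}{8} \sum_{i=1}^k (b_i - a_i)^2 \Bigr).
\]
Optimizing over $s > 0$ by choosing $s = 4t / \sum_i (b_i-a_i)^2$ yields the bound $\exp\!\bigl(-2t^2 / \sum_i(b_i-a_i)^2\bigr)$. Applying the same argument to $-\bZ_i$ gives the matching lower-tail bound, and a union bound over the two tails produces the stated two-sided inequality (up to the factor of two, which is absorbed in the paper's formulation). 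This completes the proof plan.
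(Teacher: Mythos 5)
Your proposal is the classical Chernoff--Hoeffding argument (centering, exponential Markov, Hoeffding's lemma with the $s^2(\beta-\alpha)^2/8$ bound on the moment generating function, optimizing $s$, and a union bound over the two tails), and it is correct in substance; the paper itself offers no proof of this statement -- it is quoted in the appendix as a standard concentration inequality -- so there is nothing to compare against beyond noting that yours is the textbook derivation. One small point worth flagging: as literally printed the paper's bound has a sign typo in the exponent (it should read $e^{-2t^2/\sum_i(b_i-a_i)^2}$), and the genuinely two-sided version obtained from your union bound carries a prefactor of $2$, i.e.\ $\Pr\bigl[\bigl|\sum_i y_i - \sum_i \mathbf{y}_i\bigr| \ge t\bigr] \le 2e^{-2t^2/\sum_i(b_i-a_i)^2}$; your argument proves exactly this, and the missing factor of $2$ is a defect of the paper's statement rather than of your proof (it is harmless for every application in the paper, where constants in the exponent are absorbed anyway).
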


We state below a special case of the generalized Hoeffding bound that we use repeatedly in proofs, referring to it as the Hoeffding bound.
\begin{theorem}[Hoeffding Bound]
\label{theorem:hoeffding}
Let $\by_1, \by_2 ,.., \by_k$ be $k$ iid random variables between $0$ and $1$, each with mean $y$, then: 
\[  \Pr \big[ \big|y- \frac{\sum_{i=1}^k \by_i}{k} \big| \le 4 \sqrt{\frac{\log n}{k}} \big] \ \geq \  1 - \frac{1}{n^6}\] 
\end{theorem}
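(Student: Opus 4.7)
The plan is to derive this as an immediate corollary of the Generalized Hoeffding Bound (Theorem~\ref{theorem:generalized_hoeffding}) stated just above. The hypothesis gives us $k$ i.i.d.\ random variables $\by_1, \ldots, \by_k$, each supported on $[0,1]$ and with common mean $y$. In the notation of the generalized bound, this means $a_i = 0$ and $b_i = 1$ for every $i \in [k]$, so $\sum_{i=1}^k (b_i - a_i)^2 = k$, and the mean of each $\by_i$ is $y_i = y$.

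First I would apply Theorem~\ref{theorem:generalized_hoeffding} with the deviation parameter $t := 4\sqrt{k \log n}$. Plugging into the bound yields
\[
\Pr\!\left[\,\Big|\,ky \;-\; \sum_{i=1}^k \by_i\,\Big| \;\le\; 4\sqrt{k \log n}\,\right] \;\ge\; 1 - \exp\!\left(-\,\frac{2 \cdot 16 \, k \log n}{k}\right) \;=\; 1 - \exp(-32 \log n) \;=\; 1 - \frac{1}{n^{32}}.
\]
Dividing the event on the left-hand side through by $k$ converts it to the statement $\big|y - \frac{1}{k}\sum_i \by_i\big| \le 4\sqrt{\log n / k}$, which is exactly the event in the statement of Theorem~\ref{theorem:hoeffding}.

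Finally, since $1 - 1/n^{32} \ge 1 - 1/n^6$ for every $n \ge 1$, the probability lower bound claimed by the theorem follows at once. There is essentially no obstacle here: the only ``choice'' is to pick $t$ so that $2t^2/k \ge 6 \log n$, and the value $t = 4\sqrt{k \log n}$ gives a comfortable exponent of $32 \log n$ — well in excess of what is required. The factor $4$ inside the square root in the theorem is chosen for convenience and does not even need to be tight for the conclusion.
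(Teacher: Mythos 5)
Your proposal is correct and matches the paper's treatment, which simply presents Theorem~\ref{theorem:hoeffding} as a special case of the generalized Hoeffding bound; your instantiation $a_i=0$, $b_i=1$, $t=4\sqrt{k\log n}$ yields exponent $32\log n$, comfortably beating the required $6\log n$. (Even if one insists on the standard two-sided form with an extra factor of $2$ in front of the exponential, $2n^{-32}\le n^{-6}$ for $n\ge 2$, so the conclusion is unaffected.)
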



We also list some measures of statistical distance and connections between them, that will be employed in our lower bounds proofs. We formally define total variation distance and Hellinger distance for discrete distributions.

\begin{definition}[Total Variation Distance]
\label{def:tvd}
Let $X$ and $Y$ be discrete distributions, having weight $p_{x_i}$ and $p_{y_j}$ respectively on points $z_1$, $z_2\cdots $. Then, the total variation distance between $X$ and $Y$, denoted by $\Delta(X,Y)$ is defined as:
\[\Delta(X,Y) := \frac{1}{2}\sum_{i=1}^\infty |p_{x_i}-p_{y_i}| \]
 An alternative, equivalent definition is as follows: Let the sample space of the two distributions $X,Y$ be $\Omega$, then:
 
 $\Delta(X,Y) = \sup_{A \in \Omega} |X(A) - Y(A)|$.
\end{definition}

\begin{definition}[Hellinger Distance]~\label{def:Hellinger}
Let $X$ and $Y$ be discrete distributions, having weight $p_{x_1}, p_{x_2}, \cdots$ and $p_{y_1}, p_{y_2}, \cdots$ respectively on points $z_1$, $z_2\cdots $. Then, the total variation distance between $X$ and $Y$, denoted by $H(X,Y)$ is defined as:
\[H(X,Y) := \frac{1}{\sqrt{2}}\sqrt{\sum_{i=1}^\infty (\sqrt{p_{x_i}}-\sqrt{p_{y_i}})^2} \]

\end{definition}

We use the following two useful lemmas about Total Variation Distance Hellinger Distance from Barak et. al.~\cite{barak2008rounding}.

\begin{lemma}[\cite{pollard_2001}]
\label{lemma:comparisontvd}
For two distributions $X$ and $Y$:
\[H^2(X,Y) \le \Delta(X,Y) \le \sqrt{2} H(X,Y) \]
\end{lemma}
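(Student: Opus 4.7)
The plan is to prove both inequalities by rewriting $|p_i - q_i|$ in the factored form $|\sqrt{p_i} - \sqrt{q_i}| \cdot (\sqrt{p_i} + \sqrt{q_i})$ and then relating this to the Hellinger sum $\sum_i (\sqrt{p_i} - \sqrt{q_i})^2$. Let $\{p_i\}$ and $\{q_i\}$ be the probability masses of $X$ and $Y$, so that $2\Delta(X,Y) = \sum_i |p_i - q_i|$ and $2H^2(X,Y) = \sum_i (\sqrt{p_i} - \sqrt{q_i})^2$. The factorization above is the key identity, valid because both $\sqrt{p_i}$ and $\sqrt{q_i}$ are nonnegative reals.

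For the left inequality $H^2 \le \Delta$, I would simply observe that $|\sqrt{p_i} - \sqrt{q_i}| \le \sqrt{p_i} + \sqrt{q_i}$ term by term (again because the square roots are nonnegative). Multiplying both sides by the nonnegative quantity $|\sqrt{p_i} - \sqrt{q_i}|$ and summing over $i$ yields
\[
2H^2(X,Y) = \sum_i |\sqrt{p_i} - \sqrt{q_i}|^2 \;\le\; \sum_i |\sqrt{p_i} - \sqrt{q_i}|\,(\sqrt{p_i} + \sqrt{q_i}) = \sum_i |p_i - q_i| = 2\Delta(X,Y),
\]
giving $H^2 \le \Delta$.

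For the right inequality $\Delta \le \sqrt{2}\,H$, the natural tool is Cauchy--Schwarz applied to the two sequences $a_i := |\sqrt{p_i} - \sqrt{q_i}|$ and $b_i := \sqrt{p_i} + \sqrt{q_i}$. This gives
\[
2\Delta(X,Y) = \sum_i a_i b_i \;\le\; \Bigl(\sum_i a_i^2\Bigr)^{1/2} \Bigl(\sum_i b_i^2\Bigr)^{1/2}.
\]
The first factor on the right equals $\sqrt{2}\,H(X,Y)$ by definition. The second factor I would bound using
\[
\sum_i (\sqrt{p_i} + \sqrt{q_i})^2 = \sum_i (p_i + q_i) + 2\sum_i \sqrt{p_i q_i} = 2 + 2\sum_i \sqrt{p_i q_i} \;\le\; 4,
\]
since $\sum_i \sqrt{p_i q_i} \le 1$ (either by Cauchy--Schwarz on $(\sqrt{p_i})$ and $(\sqrt{q_i})$, or by noting that the Bhattacharyya coefficient is at most $1$). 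Combining, $2\Delta \le \sqrt{2}\,H \cdot 2$, i.e.\ $\Delta \le \sqrt{2}\,H$, as required.

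There is no real obstacle here; both inequalities are classical and follow in a few lines once the factorization $p_i - q_i = (\sqrt{p_i} - \sqrt{q_i})(\sqrt{p_i} + \sqrt{q_i})$ is in hand. The only mild subtlety is remembering that $\sum_i \sqrt{p_i q_i} \le 1$ in order to conclude the constant $\sqrt{2}$ rather than a larger number in the upper bound.
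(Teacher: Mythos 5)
Your proof is correct: the factorization $p_i-q_i=(\sqrt{p_i}-\sqrt{q_i})(\sqrt{p_i}+\sqrt{q_i})$ together with the termwise bound for the lower inequality and Cauchy--Schwarz plus $\sum_i\sqrt{p_iq_i}\le 1$ for the upper inequality is exactly the standard textbook argument, and all constants work out with the paper's normalizations of $\Delta$ and $H$. The paper itself gives no proof---it simply cites this classical fact from Pollard---so your derivation is a faithful, self-contained version of the same argument and nothing further is needed.
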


\begin{lemma}[\cite{barak2008rounding}]
\label{lemma:producthell}
Let $X_1,X_2\cdots X_n$ and $Y_1,Y_2\cdots Y_n$ be two families of distributions. Then,
\[H(X_1 \oplus X_2 \oplus \cdots X_n, Y_1 \oplus Y_2 \oplus \cdots Y_n) \le \sum_{i=1}^n H^2(X_i,Y_i) \]

where $X \oplus Y$ denotes the product of two distributions $X$ and $Y$, generated by taking independent samples of $X$ and $Y$.

\end{lemma}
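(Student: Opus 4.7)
The plan is to reduce the inequality to a clean multiplicative identity using the Bhattacharyya coefficient and then apply a one-line real-variable inequality. Define, for two discrete distributions $X, Y$ with weights $\{p_{x_i}\}, \{p_{y_i}\}$ on a common support, the Bhattacharyya coefficient $\mathrm{BC}(X,Y) \coloneqq \sum_i \sqrt{p_{x_i} p_{y_i}}$. Expanding the square in Definition~\ref{def:Hellinger} and using $\sum_i p_{x_i} = \sum_i p_{y_i} = 1$ gives the identity
\[
H^2(X,Y) \;=\; \tfrac{1}{2}\sum_i\bigl(\sqrt{p_{x_i}} - \sqrt{p_{y_i}}\bigr)^2 \;=\; 1 - \mathrm{BC}(X,Y).
\]
So every squared Hellinger distance in the statement can be replaced by $1 - \mathrm{BC}(\cdot,\cdot)$.

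Next I would prove the key multiplicativity of $\mathrm{BC}$ under products. For two pairs of distributions $(X_1, Y_1)$ and $(X_2, Y_2)$, the product measures have weights $p_{x_i}^{(1)} p_{x_j}^{(2)}$ and $p_{y_i}^{(1)} p_{y_j}^{(2)}$. A direct computation factors the double sum:
\[
\mathrm{BC}(X_1 \oplus X_2,\, Y_1 \oplus Y_2) \;=\; \sum_{i,j} \sqrt{p_{x_i}^{(1)} p_{y_i}^{(1)}} \cdot \sqrt{p_{x_j}^{(2)} p_{y_j}^{(2)}} \;=\; \mathrm{BC}(X_1,Y_1)\cdot \mathrm{BC}(X_2,Y_2).
\]
A straightforward induction on $n$ then yields $\mathrm{BC}\bigl(\bigoplus_i X_i,\bigoplus_i Y_i\bigr) = \prod_{i=1}^n \mathrm{BC}(X_i,Y_i)$, and therefore
\[
H^2\Bigl(\bigoplus_{i=1}^n X_i,\,\bigoplus_{i=1}^n Y_i\Bigr) \;=\; 1 - \prod_{i=1}^n \bigl(1 - H^2(X_i, Y_i)\bigr).
\]

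To finish, I would invoke the elementary inequality $1 - \prod_{i=1}^n(1-a_i) \le \sum_{i=1}^n a_i$ for any $a_i \in [0,1]$, which follows by induction: the inductive step reads
\[
1 - (1-a_{n+1})\prod_{i\le n}(1-a_i) \;=\; a_{n+1} + (1-a_{n+1})\Bigl(1 - \prod_{i\le n}(1-a_i)\Bigr) \;\le\; a_{n+1} + \sum_{i\le n} a_i.
\]
Applied with $a_i = H^2(X_i, Y_i) \in [0,1]$, this gives the squared-Hellinger subadditivity $H^2\bigl(\bigoplus_i X_i, \bigoplus_i Y_i\bigr) \le \sum_{i=1}^n H^2(X_i, Y_i)$, which is the bound that is actually used in the downstream application (combined with $\Delta \le \sqrt{2} H$ from Lemma~\ref{lemma:comparisontvd}) to obtain $\Delta(\bQ^{(1)},\bQ^{(2)})\le 0.01$.

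There is no real obstacle here: the only steps that require care are the algebraic identity $H^2 = 1 - \mathrm{BC}$ (making sure the cross term $2\sum_i \sqrt{p_{x_i}p_{y_i}}$ is isolated correctly) and the tensorization of $\mathrm{BC}$, which is just Fubini on a finite sum. The generalization from discrete to arbitrary measures (if needed) would proceed by the same argument with $\mathrm{BC}(X,Y) = \int \sqrt{dX\,dY}$ and Fubini for the product measure, but the discrete case suffices for the application in this paper.
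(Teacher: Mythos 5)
Your proof is correct, and it fills in an argument the paper itself never gives: Lemma~\ref{lemma:producthell} is simply imported from \cite{barak2008rounding} with no proof, so there is no ``paper route'' to compare against beyond the citation. Your route is the standard one for that cited result: the identity $H^2(X,Y)=1-\mathrm{BC}(X,Y)$, multiplicativity of the Bhattacharyya coefficient under products (Fubini on the finite double sum, then induction), and the elementary bound $1-\prod_i(1-a_i)\le\sum_i a_i$; each of these steps checks out. One point worth flagging: what you actually establish is the squared-Hellinger subadditivity $H^2\bigl(\bigoplus_i X_i,\bigoplus_i Y_i\bigr)\le\sum_i H^2(X_i,Y_i)$, whereas the lemma as printed has the unsquared $H$ on the left-hand side. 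As literally stated the printed inequality is false (take $n=1$ and any pair with $0<H(X_1,Y_1)<1$); the form in \cite{barak2008rounding} — and the one you prove — has $H^2$ on the left, so the discrepancy is a typo in the paper rather than a gap in your argument. Your version also suffices for the downstream application: it gives $H(\bQ^{(1)},\bQ^{(2)})\le\bigl(\sum_j H^2(\bQ^{(1)}_j,\bQ^{(2)}_j)\bigr)^{1/2}=O(\rho)$, and combining with $\Delta\le\sqrt{2}\,H$ from Lemma~\ref{lemma:comparisontvd} still yields $\Delta(\bQ^{(1)},\bQ^{(2)})\le 0.01$ for $\rho$ a sufficiently small constant, which is all Theorem~\ref{theorem:necessary} needs.
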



 \section{Figures}
 \label{sec:figures}

 \begin{figure}
    \centering
    \includegraphics[width=6cm]{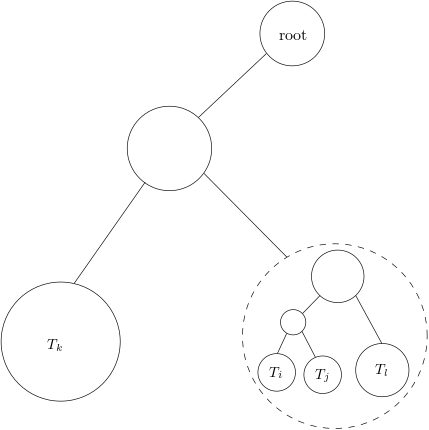}
    \caption{Sibling-Tree Pair between Non Sibling-Tree Pair}
    \label{fig:score_winner}
\end{figure}

\begin{figure}
	\centering
	\begin{minipage}{0.5\textwidth}
		\centering
		\includegraphics[width=.8\linewidth]{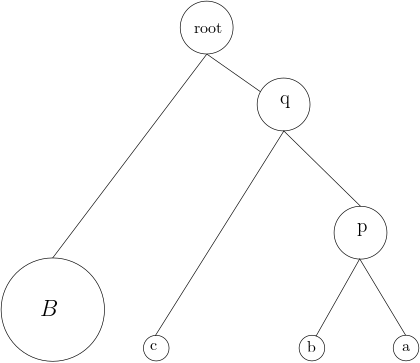} 
		\captionof{figure}{Lower Bound Instance Tree $T_1$}
		\label{fig:Lower Bound Instance Tree 1}
	\end{minipage}%
	\begin{minipage}{.5\textwidth}
		\centering
		\includegraphics[width=.8\linewidth]{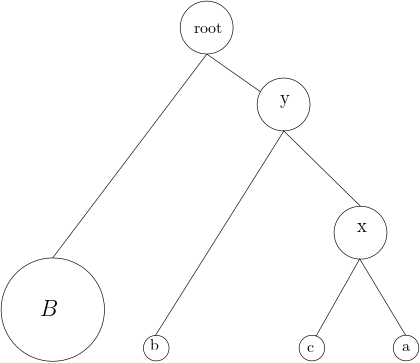}
		\captionof{figure}{Lower Bound Instance Tree 2}
		\label{fig:Lower Bound Instance Tree $T_2$}
	\end{minipage}
\end{figure}

\end{document}